\documentclass[onecolumn,twoside]{IEEEtran}
\usepackage[left=0.67in,right=0.67in,top=0.66in,bottom=0.67in]{geometry}
\usepackage[T1]{fontenc}
\usepackage[latin9]{inputenc}
\usepackage{amsthm}
\usepackage{amsmath} 
\usepackage{graphicx} 
\usepackage{color} 
\usepackage{cite}
\usepackage{algpseudocode}
\usepackage{algorithm}
\usepackage{amssymb}
\usepackage{subfigure}
\usepackage{esint}
\usepackage{algpseudocode}
\usepackage{epstopdf}
\usepackage{multirow}
\usepackage{makecell}
\usepackage{float}
\usepackage{lipsum}
\usepackage{balance}
\usepackage{tikz}
\usetikzlibrary{shapes ,arrows, positioning} % For FSTD
\usepackage{adjustbox}
\usepackage{bbm}
\usepackage{hyperref}
\usepackage{mathrsfs}
\usepackage{bm}
\usepackage{booktabs} % added for \addlinespace within tables

\newtheorem{definition}{Definition}
\newtheorem*{definition*}{Definition}
\newtheorem{remark}{Remark}

\newtheorem{example}{Example}

\theoremstyle{plain}
\theoremstyle{plain}
\newtheorem{theorem}{Theorem}
\newtheorem{lemma}{Lemma}

\newcommand{\comment}[1]{}

\IEEEoverridecommandlockouts

\algnewcommand\algorithmicforeach{\textbf{for each}}
\algdef{S}[FOR]{ForEach}[1]{\algorithmicforeach\ #1\ \algorithmicdo}

\def\thickhline{\noalign{\hrule height1.4pt}}

\usepackage{colortbl}
\usepackage{hhline}

\begin{document}
%
% paper title
% Titles are generally capitalized except for words such as a, an, and, as,
% at, but, by, for, in, nor, of, on, or, the, to and up, which are usually
% not capitalized unless they are the first or last word of the title.
% Linebreaks \\ can be used within to get better formatting as desired.
% Do not put math or special symbols in the title.
\title{Exploiting the Degrees of Freedom: Multi-Dimensional Spatially-Coupled Codes \\ Based on Gradient Descent}

\author{
\IEEEauthorblockN{Ata Tanr{\i}kulu$^*$, Mete Y{\i}ld{\i}r{\i}m$^*$, and Ahmed Hareedy,~\IEEEmembership{Member,~IEEE}}

\thanks{$^*$Ata Tanr{\i}kulu and Mete Y{\i}d{\i}r{\i}m have equal contribution to this work.

Ata Tanr{\i}kulu, Mete Y{\i}ld{\i}r{\i}m, and Ahmed Hareedy are with the Department of Electrical and Electronics Engineering, Middle East Technical University, 06800 Ankara, Turkey (e-mail: ata.tanrikulu@metu.edu.tr, mete.yildirim@metu.edu.tr, and ahareedy@metu.edu.tr).

This work was supported in part by the T\"{U}B\.{I}TAK 2232-B International Fellowship for Early Stage Researchers. Parts of this paper were presented at the IEEE International Symposium on Information Theory (ISIT) 2024 \cite{reins_gdmd}.
}
}

% The paper headers

% The paper headers
\markboth{}%
{}
% The only time the second header will appear is for the odd numbered pages
% after the title page when using the twoside option.
% 
% *** Note that you probably will NOT want to include the author's ***
% *** name in the headers of peer review papers.                   ***
% You can use \ifCLASSOPTIONpeerreview for conditional compilation here if
% you desire.

% If you want to put a publisher's ID mark on the page you can do it like
% this:
%\IEEEpubid{0000--0000/00\$00.00~\copyright~2015 IEEE}
% Remember, if you use this you must call \IEEEpubidadjcol in the second
% column for its text to clear the IEEEpubid mark.

% use for special paper notices
%\IEEEspecialpapernotice{(Invited Paper)}

% for Transactions on Magnetics papers, we must declare the abstract and
% index terms PRIOR to the title within the \IEEEtitleabstractindextext
% IEEEtran command as these need to go into the title area created by
% \maketitle.
% As a general rule, do not put math, special symbols or citations
% in the abstract or keywords.

%%%%%%%%%%%%%%%%%%%%%%%%%%%%%%%%%%%%%
\IEEEtitleabstractindextext{%
\begin{abstract}

Because of their excellent asymptotic and finite-length (FL) performance, spatially-coupled (SC) codes are a class of low-density parity-check (LDPC) codes that is gaining increasing attention. Multi-dimensional (MD) SC codes are constructed by connecting copies of an SC code via relocations in order to mitigate various sources of non-uniformity and improve performance in many data storage and data transmission systems. As the number of degrees of freedom in the MD-SC code design increases, appropriately exploiting them becomes more difficult because of the complexity growth of the design process. In this paper, we propose a probabilistic framework for the MD-SC code design, which is based on the gradient-descent (GD) algorithm, to design high performance MD codes where this challenge is addressed. In particular, we express the expected number of detrimental objects, which we seek to minimize, in the graph representation of the code in terms of entries of a probability-distribution matrix that characterizes the MD-SC code design. We then find a locally-optimal probability distribution, which serves as the starting point of the FL algorithmic optimizer that produces the final MD-SC code. We adopt a recently-introduced Markov chain Monte Carlo (MCMC) FL algorithmic optimizer that is guided by the proposed GD algorithm. We apply our framework to various objects of interest. We start from simple short cycles, and then we develop the framework to address more sophisticated cycle concatenations, aiming at finer-grained optimization. We offer the theoretical analysis as well as the design algorithms. Next, we present experimental results demonstrating that our MD codes, conveniently called GD-MD codes, have notably lower numbers of targeted detrimental objects compared with the available state-of-the-art. Moreover, we show that our GD-MD codes exhibit significant improvements in error-rate performance compared with MD-SC codes obtained by a uniform distribution.

\end{abstract}

%%%%%%%%%%%%%%%%%%%%%%%%%%%%%%%%%%%%%

% Note that keywords are not normally used for peerreview papers.
\begin{IEEEkeywords}
Graph-based codes, LDPC codes, spatially-coupled codes, multi-dimensional codes, absorbing sets, gradient-descent, MCMC methods, finite-length optimization, data storage, communications.
\end{IEEEkeywords}
}

% make the title area
\maketitle

% To allow for easy dual compilation without having to reenter the
% abstract/keywords data, the \IEEEtitleabstractindextext text will
% not be used in maketitle, but will appear (i.e., to be "transported")
% here as \IEEEdisplaynontitleabstractindextext when the compsoc 
% or transmag modes are not selected <OR> if conference mode is selected 
% - because all conference papers position the abstract like regular
% papers do.
\IEEEdisplaynontitleabstractindextext
% \IEEEdisplaynontitleabstractindextext has no effect when using
% compsoc or transmag under a non-conference mode.

% For peer review papers, you can put extra information on the cover
% page as needed:
% \ifCLASSOPTIONpeerreview
% \begin{center} \bfseries EDICS Category: 3-BBND \end{center}
% \fi
%
% For peerreview papers, this IEEEtran command inserts a page break and
% creates the second title. It will be ignored for other modes.
\IEEEpeerreviewmaketitle

%%%%%%%%%%%%%%%%%%%%%%%%%%%%%%%%%%%%%
\section{Introduction}\label{sec_intro}

Low-density parity-check (LDPC) codes were introduced by Gallager in \cite{gal_th}, and they attracted significant attention nearly thirty years later, after their reinvention by MacKay and Neal \cite{mackay_neal,mackay_neal2}. Since then, LDPC codes have been among the most widely used error-correction techniques in several different applications. The performance of an LDPC code is negatively affected by detrimental configurations in the graph representation of the code, namely absorbing sets \cite{absorbing} formed by a combination of short cycles \cite{fossorier}. Spatially-coupled (SC) codes \cite{zigangirov} are a class of LDPC codes that offer excellent asymptotic performance \cite{lentmaier, kudekar, kudekar2}, superior finite-length (FL) performance \cite{costello, mitchell,battaglioni, GRADE}, and low-latency windowed decoding that is suitable for streaming applications \cite{siegel, iyengar}. Here, we focus on time-invariant SC codes \cite{battaglioni, amir}. The FL performance advantage of SC codes results from the additional degrees of freedom they offer the code designer via partitioning, which are added to what lifting enables for the case of protograph-based SC codes. Those additional degrees of freedom rapidly grow with the memory of the SC code. There are various results on high performance SC code designs with low memories in the literature \cite{battaglioni, rosnes, battaglioni-globe}. Some of these designs are optimal with respect to the number of detrimental objects \cite{oocpo}; others demonstrated notable performance gains in data storage and transmission systems \cite{channel_aware}. A recent idea introduced probabilistic design of SC codes with high memories, which is based on the gradient-descent (GD) algorithm, to find locally-optimal solutions with respect to the number of detrimental objects \cite{GRADE}. Throughout the rest of the paper, by a short cycle, we mean a cycle of length $6$ or $8$, and by a detrimental object, we mean either a short cycle or a concatenation of two such cycles. These short cycles and their concatenations are common graphical substructures in a wide array of detrimental absorbing sets \cite{GRADE,rohith2}.

Multi-dimensional spatially-coupled (MD-SC) codes are a new class of LDPC codes that has excellent performance in all regions \cite{ohashi, truhachev-mitchell-lentmaier, liu}. These codes can mitigate various sources of, possibly multi-dimensional, non-uniformity \cite{dahandeh, cai} in many applications. There are various effective MD-SC code designs introduced in recent literature \cite{schmalen, truhachev}. One idea is to use informed MD relocations in order to connect $M$ copies of a high performance SC code to construct a circulant-based, or a protograph-based, MD-SC code \cite{homa-hareedy, homa-lev, rohith2}, which is also the idea we focus on in this work. Recently, we introduced a Markov chain Monte Carlo (MCMC) method to perform FL optimization of LDPC codes, where the optimization algorithm samples from a distribution selected such that the probability is inversely proportional to the number of detrimental objects \cite{reins_mcmc}. We will employ this MCMC method for SC partitioning, MD relocation, and lifting, where for the first two, the method is guided by the output of our GD algorithms.

An MD-SC code can be constructed using three matrices: the partitioning matrix $\mathbf{K}$, the relocation matrix $\mathbf{M}$, and the lifting matrix $\mathbf{L}$. These matrices, as defined in Section~\ref{sec: prelim}, represent the operations performed in each respective stage of the code design process. We summarize in the following lines how one path of the most recent literature on the topic has evolved:
\begin{itemize}
\item[1)] A GD-based probabilistic framework for SC codes, which constructs optimized SC codes with high memory $m$, was proposed in~\cite{GRADE}. The key idea is designing a partitioning matrix $\mathbf{K}$ to achieve a reduced count of detrimental objects based on a locally-optimal probability distribution for edges across component matrices.
\item[2)]  This approach was extended in \cite{reins_gdmd} to a GD-based probabilistic framework for MD-SC codes. The key idea is designing a relocation matrix $\mathbf{M}$ (for a given underlying SC code) to achieve a reduced count of short cycles based on a locally-optimal probability distribution for edges across auxiliary matrices, which allow MD coupling, and an FL algorithmic optimizer. The FL optimizer relies on a voting method that appeared earlier in \cite{homa-hareedy} and \cite{rohith2}, then was updated in~\cite{homa-lev}. Observe that \cite{reins_gdmd} is the preliminary version of the work in this paper.
\item[3)] Recently, a novel GD-guided MCMC framework was proposed \cite{reins_mcmc} to design both a partitioning matrix $\mathbf{K}$ based on a locally-optimal probability distribution, which is obtained according to~\cite{GRADE}, and a lifting matrix $\mathbf{L}$. For FL optimization, this approach turned out to be more effective than the proposed algorithm in~\cite{GRADE} (and in fact more effective than the available state-of the-art FL optimization algorithms).
\end{itemize}

In this paper, we present the first probabilistic MD-SC code design framework that enables exploiting the abundance of degrees of freedom that comes with high SC memory $m$ and high number of MD auxiliary matrices (including the diagonal one) $M$. We provide a general theoretical analysis for our design framework that is applicable to arbitrary subgraphs of the Tanner graph of an MD-SC code. This includes simple objects detrimental to code performance, particularly short cycles, as well as more sophisticated configurations such as concatenations of two short cycles. Since focusing on cycles could cause unnecessary degrees of freedom being exhausted on isolated cycles that are much less problematic, this broader approach allows us to use the available degrees of freedom provided by MD-SC code design more effectively.

Our idea is to build a probability-distribution matrix, the entries of which represent the probability that a non-zero element in the base matrix, i.e., an edge in the underlying protograph, will be associated with the corresponding component matrix at the corresponding auxiliary matrix in the MD-SC construction. We develop the objective functions representing the expected number of short cycles as well as two-cycle concatenations in terms of the aforementioned probabilities, obtain their gradients, and derive the form of the solution (see Section~\ref{sec: framework} and Section~\ref{sec: theory}). We note that while the focus is on concatenations of two short cycles, our theoretical analysis is general for a detrimental configuration with a cycle basis having any number of cycles with arbitrary lengths. We then introduce the MD gradient-descent (MD-GRADE) algorithm (Section~\ref{subsec: MD-GRADE algo}) and the finite-length MCMC algorithm (Section~\ref{subsec: MCMC_algo}). Via our MD-GRADE algorithm, we find a locally-optimal distribution matrix, minimizing the expected number of target objects, and input it to the MCMC algorithm that converges on excellent FL MD-SC designs as demonstrated by the experimental results.

Our experimental results demonstrate that the new codes, which we call GD-MD codes, have significantly lower populations of short cycles compared with the available state-of-the-art (see Table~\ref{table: cycle_counts} and Table~\ref{table: cycle_counts_with_depth}), as well as significantly lower numbers of detrimental objects compared with their counterparts generated via a straightforward (uniform) distribution (see Table~\ref{table: object_counts}). Furthermore, we show that the reduction in the number of detrimental objects has a clear and remarkably positive effect on the error-rate performance, where the performance gains can reach up to $2.81$ orders of magnitude in favor of GD-MD codes (see Fig.~\ref{fig: shorter_code}). This improvement is a direct result of our framework, whereas the comparison codes exhibit an error floor or/and a degraded waterfall due to the high population of detrimental objects whose counts we aim to minimize.

The rest of the paper is organized as follows. In Section~\ref{sec: prelim}, we provide the necessary preliminary information related to SC and MD-SC codes together with the notation that will be used throughout the paper for these classes of codes. In Section~\ref{sec: framework}, we introduce our probabilistic framework focusing on short cycles, which we then generalize in Section~\ref{sec: theory} through the analysis of arbitrary subgraphs and complement by a special case study on the analysis of two short-cycle concatenations. We present the algorithms used to obtain locally-optimal probability-distributions and to construct FL GD-MD codes in Section~\ref{sec: algorithms}. We then present high performance GD-MD codes designed using the proposed framework, and we provide comparisons with state-of-the-art MD-SC codes from the literature as well as MD-SC codes generated using uniform distributions, in terms of the number of detrimental objects and error-rate performance, in Section~\ref{sec: numeric}. Finally, we conclude the paper in Section~\ref{sec: conclusion}. The partitioning, lifting, and relocation matrices of some of our GD-MD codes are provided in the paper appendix to ensure reproducibility of the results.

% The very first letter is a 2 line initial drop letter followed
% by the rest of the first word in caps.
% 
% form to use if the first word consists of a single letter:
% \IEEEPARstart{A}{demo} file is ....
% 
% form to use if you need the single drop letter followed by
% normal text (unknown if ever used by the IEEE):
% \IEEEPARstart{A}{}demo file is ....
% 
% Some journals put the first two words in caps:
% \IEEEPARstart{T}{his demo} file is ....
% 
% Here we have the typical use of a "T" for an initial drop letter
% and "HIS" in caps to complete the first word.

%%%%%%%%%%%%%%%%%%%%%%%%%%%%%%%%%%%%%
\section{Preliminaries}\label{sec: prelim}

Let $\mathbf{H}_{\textup{SC}}$ in \eqref{sc matrix} be the parity-check matrix of a circulant-based (CB) SC code with parameters $(\gamma, \kappa, z, L, m)$, where $\gamma \geq 3$ and $\kappa > \gamma$ are the column and row weights of the underlying block code, respectively. The SC coupling length and memory are $L$ and $m$, respectively.

\vspace{-0.7em}\begin{gather} \label{sc matrix}
\mathbf{H}_{\textup{SC}} =
\begin{bmatrix}
\mathbf{H}_0 & \mathbf{0}  &  & & & \mathbf{0} \vspace{-0.5em}\\
\mathbf{H}_1 & \mathbf{H}_0 & &  &  & \vdots \vspace{-0.5em}\\
\vdots & \mathbf{H}_1 & \ddots &  &  & \vdots \vspace{-0.5em}\\
\mathbf{H}_m & \vdots & \ddots & \ddots & & \mathbf{0} \vspace{-0.5em}\\
\mathbf{0} & \mathbf{H}_m & \ddots & \ddots & \ddots  & \mathbf{H}_0 \vspace{-0.5em}\\
\vdots & \mathbf{0} & & \ddots & \ddots & \mathbf{H}_1 \vspace{-0.5em}\\
\vdots & \vdots &  & &  \ddots & \vdots \vspace{-0.0em}\\
\mathbf{0} & \mathbf{0}& &  &  & \mathbf{H}_m
\end{bmatrix}.
\end{gather}
$\mathbf{H}_{\textup{SC}}$ is obtained from some binary matrix $\mathbf{H}^{\textup{g}}_{\textup{SC}}$ by replacing each non-zero (zero) entry in the latter by a $z \times z$ circulant (zero) matrix, $z \in \mathbb{N}$. Throughout the paper, the notation ``$\textup{g}$'' in the superscript of any matrix refers to its protograph matrix, where each circulant matrix is replaced by $1$ and each zero matrix is replaced by $0$. $\mathbf{H}^{\textup{g}}_{\textup{SC}}$ is called the protograph matrix of the SC code, and it has a convolutional structure composed of $L$ replicas $\mathbf{R}_{i}^{\textup{g}}$ of size $(L+m)\gamma \times \kappa$ each, where 
\[
\mathbf{R}_i^{\textup{g}} = \left[(\mathbf{0}_{\gamma i  \times \kappa })^{\textup{T}} \textup{} (\mathbf{H}_0^{\textup{g}})^{\textup{T}} \textup{ } (\mathbf{H}_1^{\textup{g}})^{\textup{T}} \textup{ } \dots \textup{ } (\mathbf{H}_m^{\textup{g}})^{\textup{T}} \textup{ } (\mathbf{0}_{\gamma (L-1-i)  \times \kappa })^{\textup{T}}\right]^{\textup{T}},
\]
for $i\in\lbrace 0,1,\dots, L-1\rbrace$. Here, $\mathbf{H}_y^{\textup{g}}$'s are all of size $\gamma  \times \kappa$, $\gamma, \kappa \in \mathbb{N}$, and they are all pairwise disjoint. The sum $\mathbf{{H}^{\textup{g}}} = \sum_{y=0}^m \mathbf{H}_y^{\textup{g}}$ is called the protograph base matrix, abbreviated to the base matrix in this paper, which is the protograph matrix of the underlying block matrix $\mathbf{H} = \sum_{y=0}^m \mathbf{H}_y$. In this paper, $\mathbf{{H}^{\textup{g}}}$ is taken to be all-one matrix, and the SC codes have quasi-cyclic structure. 
\par
The $\gamma \times \kappa$ matrix $\mathbf{K}$ whose entry at $(i,j)$ is $a \in \{0,1,\dots,m\}$ when $\mathbf{H}_a^{\textup{g}}=1$ at that entry is called the partitioning matrix. The $\gamma \times \kappa$ matrix $\mathbf{L}$ with $\mathbf{L}(i,j)=f_{i,j} \in \{0,1,\dots,z-1\}$, where $\boldsymbol{\sigma}^{f_{i,j}}$ is the circulant in $\mathbf{H}$ lifted from the entry $\mathbf{{H}^{\textup{g}}}(i,j)$, is called the lifting matrix. Here, $\boldsymbol{\sigma}$ is the $z \times z$ identity matrix cyclically shifted one unit to the left. Observe that $m+1$ is the number of component matrices $\mathbf{H}_y^{\textup{g}}$ (or $\mathbf{H}_y$) and $L$ is the number of replicas in $\mathbf{H}^{\textup{g}}_{\textup{SC}}$ (or $\mathbf{H}_{\textup{SC}}$). 
\par
We now define the MD-SC code. The parity-check matrix $\mathbf{H}_{\textup{MD}}$ of the MD-SC code has the following form:
% \vspace{-0.1em}
\begin{gather} \label{md matrix}
\mathbf{H}_{\textup{MD}} =
\begin{bmatrix}
\mathbf{H}'_{\textup{SC}} & \mathbf{X}_{M-1} & \mathbf{X}_{M-2} & \dots & \mathbf{X}_{2} & \mathbf{X}_{1} \vspace{-0.0em}\\
\mathbf{X}_{1} & \mathbf{H}'_{\textup{SC}} & \mathbf{X}_{M-1} & \dots & \mathbf{X}_{3} & \mathbf{X}_{2} \vspace{-0.0em}\\
\mathbf{X}_{2} & \mathbf{X}_{1} & \mathbf{H}'_{\textup{SC}} & \dots & \mathbf{X}_{4} & \mathbf{X}_{3} \vspace{-0.0em}\\
\vdots & \vdots & \vdots & \ddots & \vdots & \vdots \vspace{-0.0em}\\
\mathbf{X}_{M-2} & \mathbf{X}_{M-3} & \mathbf{X}_{M-4} & \dots & \mathbf{H}'_{\textup{SC}} & \mathbf{X}_{M-1} \vspace{-0.0em}\\
\mathbf{X}_{M-1} & \mathbf{X}_{M-2} & \mathbf{X}_{M-3} & \dots & \mathbf{X}_{1} & \mathbf{H}'_{\textup{SC}}
\end{bmatrix},
\end{gather}
where 
\vspace{-0.5em}\begin{align} \label{hmd constraint}\mathbf{H}_{\textup{SC}} = \mathbf{H}'_{\textup{SC}} + \sum_{\ell=1}^{M-1} \mathbf{X}_\ell.
\end{align}  
Here, $\mathbf{X}_\ell$'s are called the auxiliary matrices. The MD-SC code is obtained from $M$ copies of $\mathbf{H}_{\textup{SC}}$ on the diagonal by relocating some of its circulants from each replica of every copy of $\mathbf{H}_{\textup{SC}}$ to the corresponding locations in $\mathbf{X}_{\ell}$ copies (simply by shifting the circulants $\ell L\kappa z$ units cyclically to the left), and then by coupling them in a sliding manner as in \eqref{md matrix}. For convention, we use $\mathbf{X}_0 = \mathbf{H}_{\textup{SC}}'$.

\begin{definition} The $M (L+m)\gamma  \times M L \kappa $ matrix $\mathbf{H}^{\textup{g}}_{\textup{MD}}$ obtained by replacing $\mathbf{H}'_{\textup{SC}}$ by $\mathbf{H}'^{\textup{g}}_{\textup{SC}}$ and $ \mathbf{X}_\ell$'s with $ \mathbf{X}^{\textup{g}}_\ell$'s in \eqref{md matrix} is called the MD protograph of $\mathbf{H}_{\textup{MD}}$. Here, $\mathbf{X}^{\textup{g}}_{\ell}$'s are uniquely determined by the following properties:
\begin{enumerate} 
\item The equation $\mathbf{H}^{\textup{g}}_{\textup{SC}} = \mathbf{H}'^{\textup{g}}_{\textup{SC}} + \sum_{\ell=1}^{M-1} \mathbf{X}^{\textup{g}}_\ell$ holds.
\item $\mathbf{X}_{\ell}$'s are obtained by replacing each non-zero (zero) entry in $\mathbf{X}^{\textup{g}}_{\ell}$ by the $z \times z$ circulant $\boldsymbol{\sigma}^{f_{i,j}}$ (the zero matrix $\mathbf{0}_{z \times z}$), $z \in \mathbb{N}$, that has the appropriate power $f_{i,j}$ from the lifting matrix $\mathbf{L}$. 
\end{enumerate}
\end{definition} 

Relocations are represented by an MD mapping as follows: 
$$F: \{\mathcal{C}_{i,j} \,|\, 1 \leq i \leq \gamma, 1 \leq j \leq \kappa \} \rightarrow \{0,1,\dots, M-1\},$$
where $\mathcal{C}_{i,j}$ is the circulant corresponding to $1$ at entry $(i,j)$ of the base matrix $\mathbf{{H}^{\textup{g}}}$, and $F(\mathcal{C}_{i,j})$ is the index of the auxiliary matrix to which $\mathcal{C}_{i,j}$ is located. $F^{-1}(0)$ is the set of non-relocated circulants, and the percentage $\mathcal{T}$ of relocated circulants is called the MD density. Conveniently, we define the $\gamma \times \kappa$ matrix $\mathbf{M}$ with $\mathbf{M}(i,j)=F(\mathcal{C}_{i,j})$ as the relocation matrix. An MD-SC code is uniquely determined by the matrices $\mathbf{K}$, $\mathbf{L}$, and $\mathbf{M}$, and the parameters of an MD-SC code are described by the tuple $(\gamma, \kappa, z, L, m, M)$. 
\par
A cycle of length $2g$ (cycle-$2g$) in the Tanner graph of $\mathbf{H}_{\textup{MD}}$ is a $2g$-tuple $(i_1,j_1,i_2,j_2,\dots,i_g,j_g)$, where $(i_k,j_k)$ and $(i_k,j_{k+1})$, $1 \leq k \leq g, \textup{ } j_{g+1}=j_1$, are the positions of its non-zero entries in $\mathbf{H}_{\textup{MD}}$. Short cycles are detrimental to the performance of an LDPC code as they result in dependencies in decoding, negatively affecting the waterfall region, and they combine to create absorbing sets, negatively affecting the error-floor region \cite{absorbing, channel_aware}. We first focus on cycles-$k$ and then concatenations of a cycle-$k$ and a cycle-$k'$, where $k, k' \in \{6,8\}$, in this paper. We denote such concatenations via the cycle lengths as follows $k$-$k'$, and we focus on $6$-$6$, $6$-$8$, and $8$-$8$ configurations. Relocations are done in order to eliminate such objects when the underlying SC code has girth, i.e., shortest cycle length, $6$ or $8$. The necessary and sufficient condition for a cycle-$2g$ in $\mathbf{H}_{\textup{SC}}$ to create a cycle-$2g$ in $\mathbf{H}_{\textup{MD}}$, i.e., remains \textit{active}, after relocations is studied in \cite{homa-hareedy} and also given in the second part of Lemma~\ref{lemma:par_rel_con}. Practically, a cycle candidate that remains active creates multiple cycle candidates associated with the same cycle length, but we say it becomes ``a cycle candidate'' from an ontological point of view.

\begin{lemma} \label{lemma:par_rel_con} Each cycle-$2g$ in the Tanner graph of an MD-SC code corresponds to a cycle-$2g$ candidate in the protograph base matrix, $\mathbf{H}^{\textup{g}}$.\footnote{A cycle candidate is a way of traversing a protograph pattern to reach a cycle after partitioning/relocations/lifting (see \cite{channel_aware}).} We label a cycle-$2g$ candidate in $\mathbf{H}^{\textup{g}}$ as $ \left(i_1,j_1, i_2,j_2, \ldots, i_g, j_g\right) $, where $ \left(i_k, j_k\right) $ and $ \left(i_k, j_{k+1}\right) $, $ 1 \leq k \leq g $, $ j_{g+1} = j_1 $, are nodes (entries) of this cycle candidate. The partitioning and relocation matrices are $ \mathbf{K} $ and $ \mathbf{M} $, respectively. Then, this cycle candidate becomes a cycle-$2g$ candidate in the SC protograph, i.e., remains active, if and only if the following condition is satisfied \cite{battaglioni}:
\begin{align} \label{eq:partition_condition}
\sum_{k=1}^g \left[\mathbf{K}(i_k, j_k) - \mathbf{K}(i_k, j_{k+1})\right] = 0. 
\end{align}
This cycle candidate becomes a cycle-$2g$ candidate in the MD protograph, i.e., remains active, if and only if it satisfies \eqref{eq:partition_condition} and the following condition\cite{homa-hareedy}:
\begin{align} \label{eq:relocation_condition}
\sum_{k=1}^g \left[\mathbf{M}(i_k, j_k) - \mathbf{M}(i_k, j_{k+1})\right] \equiv 0  \pmod{M}.
\end{align}
\end{lemma}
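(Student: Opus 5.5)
The plan is to track the block-row and block-column offsets accumulated while traversing the cycle candidate, first in the SC protograph and then in the MD protograph. The necessary and sufficient condition for closure is that the net offset returns to zero along the closed walk. I would establish three facts in turn: projection (every cycle-$2g$ in the Tanner graph of $\mathbf{H}_{\textup{MD}}$ maps to a cycle-$2g$ candidate in $\mathbf{H}^{\textup{g}}$), the partition condition \eqref{eq:partition_condition}, and the relocation condition \eqref{eq:relocation_condition}.

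\textbf{Projection.} Every nonzero circulant of $\mathbf{H}_{\textup{MD}}$ originates from a unique entry $(i,j)$ of $\mathbf{H}^{\textup{g}}$. Its row index within a $\gamma$-block and column index within a $\kappa$-block are preserved under replication, coupling and relocation. Relocation only shifts columns by multiples of $L\kappa z$, and lifting only permutes indices inside a $z\times z$ block. Hence the map sending each node of a cycle to its base-matrix position sends a cycle-$2g$ to a closed alternating walk $(i_1,j_1,\dots,i_g,j_g)$ in $\mathbf{H}^{\textup{g}}$, which is by definition a cycle-$2g$ candidate.

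\textbf{Partition condition.} In $\mathbf{H}^{\textup{g}}_{\textup{SC}}$, entry $(i,j)$ of replica $r$ sits in block row $r+\mathbf{K}(i,j)$ and block column $r$. Start the traversal at column node $j_1$ in replica $r_1$. Moving through check $i_k$ from column $j_k$ to column $j_{k+1}$ keeps the block row fixed, so the new replica index is
\[
r_{k+1}=r_k+\mathbf{K}(i_k,j_k)-\mathbf{K}(i_k,j_{k+1}).
\]
The walk closes exactly when $r_{g+1}=r_1$, which is \eqref{eq:partition_condition}. For the converse, given \eqref{eq:partition_condition}, choose $r_1$ far enough from the boundaries that all intermediate replicas lie in $\{0,\dots,L-1\}$; this is possible since $L$ is large relative to $gm$, or one treats boundary effects as in \cite{battaglioni}. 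The candidate then exists in the SC protograph.

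\textbf{Relocation condition.} In $\mathbf{H}^{\textup{g}}_{\textup{MD}}$, a circulant with $\mathbf{M}(i,j)=\ell$ lies in block-circulant position $\mathbf{X}_\ell$. This couples copy $c$ (the column block) to row copy $c+\ell \pmod M$. Repeating the same bookkeeping for the copy index gives
\[
c_{k+1}\equiv c_k+\mathbf{M}(i_k,j_k)-\mathbf{M}(i_k,j_{k+1}) \pmod{M}.
\]
Closure requires both the replica offset and the copy offset to return to their starting values. These two coordinates are independent, because relocation moves entries between copies without changing their position within $\mathbf{H}_{\textup{SC}}$. This yields \eqref{eq:partition_condition} together with \eqref{eq:relocation_condition}. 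Conversely, when both hold, any starting copy index produces a closed walk, since the copy index lives in $\mathbb{Z}_M$ with no boundary.

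\textbf{Main obstacle.} The main obstacle is the careful sign and orientation bookkeeping for the $\mathbf{X}_\ell$ placement in \eqref{md matrix}, that is, whether the copy index increases or decreases by $\ell$. Since only the difference around the cycle matters, either convention gives the same condition. I would state this explicitly and rely on \cite{homa-hareedy} for the detailed verification.
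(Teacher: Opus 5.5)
Your argument is correct, but it is not the paper's route, because the paper gives no proof of this lemma. It imports \eqref{eq:partition_condition} from \cite{battaglioni} and \eqref{eq:relocation_condition} from \cite{homa-hareedy}. Your replica/copy-index bookkeeping is a self-contained reconstruction of the arguments behind those citations. In effect, the SC protograph is a truncated $\mathbb{Z}$-cover of the base graph with offset $\mathbf{K}(i,j)$ on each edge. The MD protograph is an $M$-fold cyclic cover of the SC protograph with offset $\mathbf{M}(i,j)\in\mathbb{Z}_M$ on each edge. A closed walk lifts to a closed walk exactly when the net offset vanishes in each coordinate.

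Compared with citing, your derivation exposes a hypothesis the lemma leaves implicit. The ``if'' direction of \eqref{eq:partition_condition} needs $L$ large enough for the walk to fit inside the replicas. The replica indices visited by a closed walk with steps in $[-m,m]$ differ by at most $m\lfloor g/2\rfloor$, so $L>m\lfloor g/2\rfloor$ suffices. This matches the spans $\chi=m+1$ and $\chi=2m+1$ used in Theorem~\ref{thm: forecast}. For $L=1$ the converse genuinely fails: a cycle-$6$ candidate with $\mathbf{K}$-differences $(1,-1,0)$ satisfies \eqref{eq:partition_condition} but cannot be realized. Your route also shows that no such hypothesis is needed for \eqref{eq:relocation_condition}, since the copy index lives in $\mathbb{Z}_M$.

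Two small points. First, your ``main obstacle'' is already settled by your own computation: block $(a,b)$ of \eqref{md matrix} is $\mathbf{X}_{(a-b)\bmod M}$. So nothing remains to defer to \cite{homa-hareedy}. Second, in the projection step, say explicitly that each circulant is a permutation matrix. Then distinct edges at a node project to distinct base-matrix entries, so the projected closed walk is non-backtracking ($j_k\neq j_{k+1}$, $i_k\neq i_{k+1}$) and is a genuine cycle candidate rather than an arbitrary closed walk.
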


Finally, we also provide the definition of unlabeled absorbing and trapping sets \cite{absorbing,karimi}, which are essential for our analysis. Here, ``unlabeled'' means all edge weights are set to $1$, which is natural for binary codes.

\begin{definition} (Unlabeled absorbing and trapping sets). \label{def:uas-uts}
Let $V$ be a subset of variable nodes (VNs) in the unlabeled graph of a code. Consider the subgraph induced by $V$. Let $\mathcal{O}$ denote the set of odd-degree neighboring check nodes (CNs) of $V$, and let $\mathscr{T}$ denote the set of even-degree neighboring CNs of $V$. This subgraph configuration is called an $(a, d)$ unlabeled trapping set if $|V| = a$ and $|\mathcal{O}| = d$. Moreover, an $(a, d)$ unlabeled trapping set is said to be an unlabeled absorbing set if, in addition, each VN in $V$ has strictly more neighbors in $\mathscr{T}$ than in $\mathcal{O}$.  Note that for a binary code, all edges are unlabeled by default. An object is said to be elementary if the degree of each associated CN is at most two. Elementary unlabeled absorbing and trapping sets will be abbreviated as UAS and UTS, respectively, for the remaining of the paper.
\end{definition}

%%%%%%%%%%%%%%%%%%%%%%%%%%%%%%%%%%%%%
\section{Novel Framework for Probabilistic MD-SC Code Design}\label{sec: framework}

In this section, we present our novel probabilistic design framework that searches for a locally-optimal probability distribution for auxiliary matrices of an MD-SC code. We start by focusing on cycles. We define a joint probability distribution over the component matrix and the auxiliary matrix (including $\mathbf{X}_0$) to which a $1$ in the base matrix is assigned. This distribution is characterized by an $(m+1) \times M$ so-called probability-distribution matrix. We then formulate an optimization problem and find its solution form to obtain the locally-optimal probability-distribution matrix. Finally, we express the expected number of short cycles in $\mathbf{H}_{\textup{MD}}$ as a function of this probability distribution. The matrix $\mathbf{H}_{\textup{MD}}$ will be designed based on this distribution later on in Section~\ref{sec: algorithms}. Here, our vectors are row vectors by default.

\subsection{Probabilistic Derivations}
In this subsection, we relate the joint probability distribution to the expected number of cycle candidates in the MD protograph. In particular, we derive the probability of cycle activeness, then use it to find the expectation, both in terms of this distribution.

\begin{definition} (Joint probability-distribution matrix). Let $m \geq 1$ and $M \geq 2$. Then, 
\begin{align} \mathbf{P} = \begin{bmatrix} p_{0,0} & p_{0,1} & \dots & p_{0,M-1} \\ p_{1,0} & p_{1,1} & \dots & p_{1,M-1} \\ 
\vdots & \vdots & \ddots & \vdots 
\\ p_{m,0} & p_{m,1} & \dots & p_{m,M-1} \end{bmatrix}_{(m+1) \times M},
\end{align}
where 
\begin{align} \label{constraint_overall} \sum_{i=0}^{m} \sum_{j=0}^{M-1} p_{i,j} = 1 
\end{align}
and each $p_{i,j} \in [0,1]$ specifies the probability that a $1$ in $\mathbf{H}_{\textup{MD}}^{\textup{g}}$ is assigned to the $i^{\textup{th}}$ component of the $j^{\textup{th}}$ auxiliary matrix. Thus, $\mathbf{P}$ is referred to as the \textbf{(joint) probability-distribution matrix}. Recall that $\mathbf{X}_0 = \mathbf{H}_{\textup{SC}}'$. The two-variable \textbf{coupling polynomial} of an MD-SC code associated with the probability-distribution matrix $\mathbf{P}$ is
\vspace{-0.1em}\begin{align} \label{eq: f}
f(X,Y;\mathbf{P}) = \sum_{i=0}^{m} \sum_{j=0}^{M-1} p_{i,j}X^i Y^j,
\end{align}
which is abbreviated as $f(X,Y)$ when the context is clear.

\end{definition}
\begin{definition} \label{definition: a-con} For a matrix $\mathbf{A}$ of size $k \times \ell$, the vector $\mathbf{a}^{\textup{con}}=[a_{r}]_{1 \times k\ell}$ is defined as the vector obtained by concatenating the rows of the matrix $\mathbf{A}_{k \times \ell}$ from top to bottom, i.e., $a_{\ell i+j}$ is the entry of $\mathbf{A}$ at $(i,j)$ for $0 \leq i \leq k-1$ and $0 \leq j \leq \ell-1$.
\end{definition}

\begin{definition} The vector $\mathbf{p}^{\textup{con}}$ corresponding to the (joint) probability-distribution matrix $\mathbf{P}$ is called the \textbf{probability-distribution vector}. 
\end{definition}

\begin{remark} \label{rmk: order} In the FL construction of $\mathbf{H}_{\textup{MD}}$ via the MCMC Algorithm in Section \ref{sec: algorithms}, relocations are performed after partitioning and lifting. The respective order in the set-up of Theorems \ref{thm: expected number of cycle-6}, \ref{thm: expected number of cycle-8}, and \ref{thm: forecast} below, however, is partitioning, relocations, then lifting. This change in order, of course, does not affect $\mathbf{H}_{\textup{MD}}$ nor the (expected) cycle counts.
\end{remark}

We note that expectations in Theorem~\ref{thm: expected number of cycle-6} and Theorem~\ref{thm: expected number of cycle-8} are $L$-invariant and $M$-invariant.

\begin{theorem} \label{thm: expected number of cycle-6} Let $[\cdot]_{i,j}$ denote the coefficient of $X^iY^j$ in a two-variable polynomial. Denote by $P_6(\mathbf{p}^{\textup{con}})$ the probability of a cycle-$6$ candidate in the base matrix becoming a cycle-$6$ candidate in the MD protograph $\mathbf{H}_{\textup{MD}}^{\textup{g}}$ under random partitioning and relocations with respect to the probability-distribution vector $\mathbf{p}^{\textup{con}}$. Then, we have
\vspace{-0.1em}\begin{align} \label{exp: prob6} P_6(\mathbf{p}^{\textup{con}}) = \sum_{M \vert b}  \left[ f^3(X,Y)f^3(X^{-1},Y^{-1}) \right]_{0,b}.
\end{align}
Thus, the expected number $N_6(\mathbf{p}^{\textup{con}})$ of cycle-$6$ candidates in  $\mathbf{H}_{\textup{MD}}^{\textup{g}}$ is given by
\begin{align} \label{eqn: expected number of cycle-6} N_6(\mathbf{p}^{\textup{con}})= 6 \binom{\gamma}{3} \binom{\kappa}{3} \sum_{M \vert b}  \left[ f^3(X,Y)f^3(X^{-1},Y^{-1}) \right]_{0,b}.
\end{align}
\end{theorem}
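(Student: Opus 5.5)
The plan is to combine Lemma~\ref{lemma:par_rel_con} with a generating-function encoding of the two activeness conditions, and then count candidates. Fix a cycle-$6$ candidate $(i_1,j_1,i_2,j_2,i_3,j_3)$ in $\mathbf{H}^{\textup{g}}$. It involves six distinct entries of the base matrix. Under the random model, each entry $(i,j)$ is independently assigned a pair $(\mathbf{K}(i,j),\mathbf{M}(i,j))=(a,c)$ with probability $p_{a,c}$. By Lemma~\ref{lemma:par_rel_con}, the candidate stays active exactly when two conditions hold. First, the signed sum of the partition indices,
\[
\sum_{k}\left[\mathbf{K}(i_k,j_k)-\mathbf{K}(i_k,j_{k+1})\right],
\]
must equal $0$. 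Second, the corresponding signed sum of the relocation indices must be $\equiv 0 \pmod M$.

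To compute this probability, I will attach the monomial $X^{\mathbf{K}}Y^{\mathbf{M}}$ to each of the three ``positive'' entries $(i_k,j_k)$ and $X^{-\mathbf{K}}Y^{-\mathbf{M}}$ to each of the three ``negative'' entries $(i_k,j_{k+1})$. For a single entry, $\mathbb{E}[X^{\mathbf{K}}Y^{\mathbf{M}}]=f(X,Y)$ and $\mathbb{E}[X^{-\mathbf{K}}Y^{-\mathbf{M}}]=f(X^{-1},Y^{-1})$. The six entries are distinct and assigned independently, so the expectation of the product is $f^3(X,Y)f^3(X^{-1},Y^{-1})$. This Laurent polynomial is therefore the joint probability generating function of the pair (signed $\mathbf{K}$-sum, signed $\mathbf{M}$-sum). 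The coefficient $[\,\cdot\,]_{a,b}$ is the probability that the $\mathbf{K}$-sum equals $a$ and the $\mathbf{M}$-sum equals $b$ as an integer.

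Activeness requires $a=0$ and $b\equiv 0 \pmod M$. Summing the coefficients $[\,\cdot\,]_{0,b}$ over all integers $b$ with $M\mid b$, where $b$ ranges over $-3(M-1),\dots,3(M-1)$, gives $P_6$. The one point needing care is that the relocation condition is modular while the exponents of $Y$ are ordinary integers. Summing over all multiples of $M$, not only $b=0$, is exactly what handles this. Equivalently, one could reduce $Y$ modulo $Y^M-1$.

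For $N_6$, I will use linearity of expectation over all cycle-$6$ candidates in the all-one $\gamma\times\kappa$ base matrix, each of which is active with the same probability $P_6$. A cycle-$6$ candidate needs a choice of $3$ distinct rows and $3$ distinct columns, giving $\binom{\gamma}{3}\binom{\kappa}{3}$ choices. Within a $3\times 3$ all-one submatrix, the number of distinct $6$-cycles traversing it equals the number of $3$-cycles in $K_{3,3}$ taken up to rotation and reflection, namely $(3!\cdot 3!)/(2\cdot 3)=6$. This yields the factor $6$.

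The main obstacle is the counting convention: I must confirm that the paper counts candidates as unlabeled cycles, not labeled tuples, so that the factor is $6$. I must also confirm that $\mathbf{K}$ and $\mathbf{M}$ are independent across entries. Within a single entry, $\mathbf{K}$ and $\mathbf{M}$ are correlated according to $\mathbf{P}$, but that is harmless because the joint polynomial $f$ accounts for it. Finally, the result should be read as the expected number of active protograph candidates, so by Remark~\ref{rmk: order} the lifting does not enter.
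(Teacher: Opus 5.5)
Your proposal is correct and follows essentially the paper's own argument. The paper also writes $P_6$ as a sum over assignments satisfying both conditions of Lemma~\ref{lemma:par_rel_con}, with product weights $p_{x_k,u_k}p_{y_k,v_k}$ coming from independence across the six distinct entries. It then recognizes this sum as the $[\cdot]_{0,b}$, $M\mid b$ coefficients of $f^3(X,Y)f^3(X^{-1},Y^{-1})$ and multiplies by the $6\binom{\gamma}{3}\binom{\kappa}{3}$ candidates, so your generating-function phrasing is a compact version of the same computation. One small slip: you mean the $6$-cycles (Hamiltonian cycles) of $K_{3,3}$, not ``$3$-cycles''; your count $36/6=6$ is right.
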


\begin{proof} 
Denote a cycle-$6$ candidate $C$ in the base matrix $\mathbf{{H}^{\textup{g}}}$ by $(i_1,j_1,i_2,j_2,i_3,j_3)$, where $(i_{k},j_{k})$ and $(i_{k},j_{k+1})$, $1\leq k\leq 3$, $j_{4}=j_1$, are edges of $C$ in $\mathbf{{H}^{\textup{g}}}$. From the partitioning condition in \eqref{eq:partition_condition} and the relocation condition in \eqref{eq:relocation_condition}, $P_6(\mathbf{p}^{\textup{con}})$ is given by the joint probability of both condition equations being satisfied, and this probability is derived in \eqref{eqn: prob6}. Note that the number of cycle-$6$ candidates in $\mathbf{{H}^{\textup{g}}}$ is $6\binom{\gamma}{3} \binom{\kappa}{3}$, and each becomes a cycle-$6$ candidate in $\mathbf{H}_{\textup{MD}}^{\textup{g}}$ with probability $P_6(\mathbf{p}^{\textup{con}})$. Therefore, $N_6(\mathbf{p}^{\textup{con}})= 6\binom{\gamma}{3} \binom{\kappa}{3} P_6(\mathbf{p}^{\textup{con}})$.
\end{proof}

\begin{figure*}[ht!] 
\noindent\makebox[\linewidth]{\rule{\textwidth}{1,5pt}} \\
\begin{align} \label{eqn: prob6}
&P_6(\mathbf{p}^{\textup{con}})=\mathbb{P}\left[\sum_{k=1}^{3}[\mathbf{K}(i_{k},j_{k})-\mathbf{K}(i_{k},j_{k+1})]=0, \hspace{+0.3em}  \sum_{k=1}^{3} [\mathbf{M}({i_k,j_k})-\mathbf{M}({i_k,j_{k+1}})] \equiv 0 \pmod{M} \right] \nonumber \\
=&\sum\limits_{\substack{\sum\nolimits_{k=1}^{3}(x_k-y_k) = 0, \\ \sum\nolimits_{k=1}^{3} (u_k-v_k) \equiv 0 \pmod{M}}} \prod_{k=1}^3\mathbb{P}\left[ \mathbf{K}(i_{k},j_{k})=x_k, \mathbf{M}({i_k,j_k})=u_k, \mathbf{K}(i_{k},j_{k+1})=y_k, \mathbf{M}({i_k,j_{k+1}})=v_k \right] \nonumber \\
=&\sum\limits_{\substack{\sum\nolimits_{k=1}^{3}(x_k-y_k) = 0, \\ \sum\nolimits_{k=1}^{3} (u_k-v_k) \equiv 0 \pmod{M}}} p_{x_1,u_1}p_{x_2,u_2}p_{x_3,u_3}p_{y_1,v_1}p_{y_2,v_2}p_{y_3,v_3} \nonumber \\
=&\sum_{M \vert b} \left[\sum\limits_{\substack{x_k,y_k \in \{0,1, \dots, m\}, \\ u_k,v_k \in \{0,1, \dots, M-1\}}} p_{x_1,u_1}p_{x_2,u_2}p_{x_3,u_3}p_{y_1,v_1}p_{y_2,v_2}p_{y_3,v_3} X^{x_1+x_2+x_3-y_1-y_2-y_3}Y^{u_1+u_2+u_3-v_1-v_2-v_3} \right]_{0,b} \nonumber \\
=&\sum_{M \vert b} \left[f^3(X,Y)f^3(X^{-1},Y^{-1})\right]_{0,b}.
\end{align}
\noindent\makebox[\linewidth]{\rule{\textwidth}{1,5pt}}
\end{figure*}

\begin{theorem} \label{thm: expected number of cycle-8}
Denote by $N_8(\mathbf{p}^{\textup{con}})$ the expected number of cycle-$8$ candidates (see Fig.~\ref{fig: cycle8_patterns}) in the MD protograph. Then,
\begin{align}\label{eqn: expected number of cycle-8}
N_8(\mathbf{p}^{\textup{con}}) 
= \sum_{M \vert b} &\hspace{+0.2em}\Big\{ w_1 \left[ f^2(X^2,Y^2)f^2(X^{-2},Y^{-2}) \right]_{0,b} 
+ w_2 \left[ f(X^2,Y^2)f(X^{-2},Y^{-2})f^2(X,Y)f^2(X^{-1},Y^{-1}) \right]_{0,b} \nonumber \\
&+ w_3  \left[ f(X^2,Y^2)f^2(X,Y)f^4(X^{-1},Y^{-1}) \right]_{0,b}
+ w_4  \left[ f^4(X,Y)f^4(X^{-1},Y^{-1}) \right]_{0,b} \Big\},
\end{align}
where $w_1 = \binom{\gamma}{2} \binom{\kappa}{2}$, $w_2 = 3\binom{\gamma}{2} \binom{\kappa}{3}+ 3\binom{\gamma}{3} \binom{\kappa}{2}$, $w_3= 18 \binom{\gamma}{3} \binom{\kappa}{3}$, $ w_4 = 6 \binom{\gamma}{2} \binom{\kappa}{4}+6\binom{\gamma}{4} \binom{\kappa}{2} +36 \binom{\gamma}{3} \binom{\kappa}{4}+36\binom{\gamma}{4} \binom{\kappa}{3}+ 72 \binom{\gamma}{4} \binom{\kappa}{4}$ if $\gamma \geq 4$, and $w_4 = 6 \binom{\gamma}{2} \binom{\kappa}{4}+36 \binom{\gamma}{3} \binom{\kappa}{4}$ if $\gamma = 3$, where $\kappa \geq 4$.
\end{theorem}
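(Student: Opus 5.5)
The proof follows the template of Theorem~\ref{thm: expected number of cycle-6}, with one new ingredient. A cycle-$8$ candidate $(i_1,j_1,\dots,i_4,j_4)$ in the all-one base matrix need not visit four distinct rows and four distinct columns. Some nodes $(i_k,j_k)$ may coincide, and then the same random pair $(\mathbf{K},\mathbf{M})$ at that entry enters the conditions \eqref{eq:partition_condition} and \eqref{eq:relocation_condition} more than once. The plan is therefore to classify the cycle-$8$ candidates by the pattern of repeated entries (the patterns of Fig.~\ref{fig: cycle8_patterns}), compute the activeness probability for each pattern as a coefficient extraction, and then count the candidates of each pattern to obtain $w_1,\dots,w_4$.

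\textbf{Probability for a fixed pattern.} Fix a candidate and let $e_1,\dots,e_r$ be its distinct entries. In the alternating sum in \eqref{eq:partition_condition} and \eqref{eq:relocation_condition}, entry $e_s$ appears with net signed multiplicity $c_s\in\{\pm1,\pm2\}$. Entries are assigned independently according to $\mathbf{P}$, so, exactly as in \eqref{eqn: prob6}, the probability that both conditions hold is
\begin{align*}
\sum_{M\vert b}\Big[\prod_{s=1}^{r} f\big(X^{c_s},Y^{c_s}\big)\Big]_{0,b}.
\end{align*}
The $X$-exponent must vanish exactly, and the $Y$-exponent must vanish modulo $M$; the latter is the reason for summing over all $b$ with $M\vert b$. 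For each pattern we then read off the multiset of $c_s$:
\begin{itemize}
\item[] \emph{Two rows, two columns, each of the four entries traversed twice with the same sign:} multiplicities $\{2,2,-2,-2\}$, which gives the $w_1$ term.
\item[] \emph{A repeated row or column so that two entries are traversed twice, one with each sign, and four others once:} multiplicities $\{2,-2,1,1,-1,-1\}$, which gives the $w_2$ term.
\item[] \emph{One entry traversed twice with sign $+$:} this leaves two single $+$ entries and four single $-$ entries, giving $f(X^2,Y^2)f^2(X,Y)f^4(X^{-1},Y^{-1})$ and hence the $w_3$ term.
\item[] \emph{All eight entries distinct:} multiplicities four $+1$ and four $-1$, giving the $w_4$ term.
\end{itemize}
Any repeated-entry pattern that produces net multiplicity zero cannot arise, because consecutive nodes of a candidate must differ. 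This needs a short check.

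\textbf{Counting.} For each pattern, count the closed alternating walks of length $8$ in the complete bipartite row/column structure, as labelled tuples modulo the same equivalence used for cycle-$6$ (where the count was $6\binom{\gamma}{3}\binom{\kappa}{3}$). Choosing $\binom{\gamma}{a}\binom{\kappa}{b}$ rows and columns and multiplying by the number of traversal orders should give the coefficients:
\begin{itemize}
\item[] $w_1=\binom{\gamma}{2}\binom{\kappa}{2}$;
\item[] $w_2$ from $(a,b)=(2,3)$ and $(3,2)$, with factor $3$ each;
\item[] $w_3$ from $(3,3)$, with factor $18$;
\item[] $w_4$ from $(2,4)$, $(4,2)$, $(3,4)$, $(4,3)$ and $(4,4)$, with factors $6,6,36,36,72$.
\end{itemize}
When $\gamma=3$, the $\binom{\gamma}{4}$ terms vanish, which gives the second formula for $w_4$. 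Summing pattern count times pattern probability yields \eqref{eqn: expected number of cycle-8}.

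\textbf{Main obstacle.} I expect the combinatorial enumeration to be the hardest step: listing the patterns exhaustively, assigning each the correct sign multiplicities (in particular, distinguishing same-sign repeats from opposite-sign repeats in the $(3,3)$ and $(2,3)$ cases), and avoiding double counting. I would first enumerate small cases directly, e.g.\ $\gamma=\kappa=2,3,4$, and verify the multiplicities $3$, $18$, $36$, $72$ against a brute-force count of labelled closed walks, before writing the general argument.
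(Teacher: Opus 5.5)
Your proposal is correct and is essentially the paper's argument, which likewise splits cycle-$8$ candidates into the nine protograph patterns of Fig.~\ref{fig: cycle8_patterns}, evaluates each activeness probability by the coefficient extraction of Theorem~\ref{thm: expected number of cycle-6} (doubled entries contributing $f(X^{\pm 2},Y^{\pm 2})$), and weights by the pattern counts. Your multiplicities and the factors $1,3,18,6,36,72$ all check out, and the net-zero case is indeed impossible, since an entry hit with both signs would force two cyclically consecutive equal rows or columns; in the enumeration, note two things: the doubled cycle-$4$ has a half-turn symmetry, so its $4$ rooted directed traversals form one candidate while every other pattern's traversals come in orbits of $8$, and in the $3\times 3$ pattern the doubled entry has sign $-$ in one orientation, which is harmless since $\sum_{M\vert b}[\,\cdot\,]_{0,b}$ is invariant under $(X,Y)\mapsto(X^{-1},Y^{-1})$.
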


\begin{proof}
Observe that there are nine protograph patterns that can produce cycles-$8$ after partitioning/relocations/lifting \cite{channel_aware} (see Fig.~\ref{fig: cycle8_patterns}). Following the logic in Theorem \ref{thm: expected number of cycle-6}, \cite[Theorem 1]{GRADE}, and their proofs, we can derive the result above (see also \cite[Chapter~5]{ahh_phd} for more details).
\end{proof}

\begin{remark} In this paper, we do not discuss cycles-$4$ in the analysis since removing all of them via informed lifting is easy. We also consider prime $z$, which implies that protograph cycles-$4$ cannot produce cycles-$8$ after lifting. Therefore, we take the weight of the term corresponding to cycle-$8$ candidate obtained by traversing a cycle-$4$ twice (see Fig.~\ref{fig: cycle8_patterns}) $w_1=0$ in the version of Algorithm~\ref{algo: MD-GRADE8} that handles cycles-$8$ in Section \ref{subsec: MD-GRADE algo} to obtain the probability-distribution matrix. Observe also that the major contribution to cycle-$8$ counts comes from the candidates with $8$ distinct entries (associated with the term multiplied by $w_4$ in \eqref{eqn: expected number of cycle-8} of Theorem~\ref{thm: expected number of cycle-8}).
\end{remark}

%%%%%%%%%%%%%%%%%%%%%%%%%%%%%%%%%%%%%
\subsection{Gradient-Descent MD-SC Solution Form} 

In this section, we derive the gradient-descent MD-SC solution form by analyzing the Lagrangian of the objective functions presented above. This solution form is a key ingredient in the MD gradient-descent (MD-GRADE) algorithm in Section \ref{subsec: MD-GRADE algo}. Our algorithm produces near-optimal relocation percentages for gradient-descent MD-SC (GD-MD) codes with arbitrary memory of the underlying SC code and arbitrary number of auxiliary matrices.

\begin{remark} The vector $\mathbf{p}^*$ of the probabilities $p_i^*$'s in the constraints of Lemma~\ref{lemma: lagrange6} and Lemma~\ref{lemma: lagrange8} introduced below is set to a locally-optimal edge distribution for the underlying SC code obtained as in \cite{GRADE}. Since the $p_i^*$'s add up to $1$, \eqref{constraint_overall} is automatically satisfied. We chose this approach in order to apply MD relocations to the best underlying SC code. These probabilities are used as an input to algorithms in Section~\ref{subsec: MD-GRADE algo}. Separate design for partitioning, relocations, and lifting was shown to be the best approach in the literature \cite{GRADE,channel_aware,rohith2,homa-lev}.
\end{remark} 

\begin{lemma} \label{lemma: lagrange6} For $P_6(\mathbf{p}^{\textup{con}})$ to be locally minimized subject to the constraints 
\begin{align} \label{eq: row sum constraint}
 p_{i,0}+p_{i,1} + \dots + p_{i,M-1} = p_i^*,
\end{align}
for all $i \in \{0,1,\dots,m\}$, where each $p_{i,j} \in [0,1]$, it is necessary that the following equations hold for some $c_i \in \mathbb{R}$:
\begin{align} \label{eq_6}
\sum_{M \vert b} \left[ f^3(X,Y)f^2(X^{-1},Y^{-1}) \right]_{i,b+j} = c_i,
\end{align}
for all $i \in \{0,1,\dots, m\}$ and $j \in \{0,1, \dots, M-1\}$.
\end{lemma}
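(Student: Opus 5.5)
The plan is a Lagrange multiplier argument applied to the objective $P_6(\mathbf{p}^{\textup{con}})$ from Theorem~\ref{thm: expected number of cycle-6}, with one linear equality constraint \eqref{eq: row sum constraint} per row $i$. We introduce multipliers $\lambda_0,\dots,\lambda_m$ and form
\begin{align*}
\mathcal{L}(\mathbf{p}^{\textup{con}},\boldsymbol{\lambda}) = P_6(\mathbf{p}^{\textup{con}}) - \sum_{i=0}^{m}\lambda_i\Big(\sum_{j=0}^{M-1}p_{i,j}-p_i^*\Big).
\end{align*}
The constraints are affine, so linear independence of their gradients is immediate: they involve disjoint sets of variables. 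Hence at an interior local minimizer the stationarity conditions $\partial P_6/\partial p_{i,j}=\lambda_i$ must hold for all $i,j$. We will treat the box constraints $p_{i,j}\in[0,1]$ as inactive, i.e., assume an interior point; this is the regime in which the lemma is used. Alternatively one could note that the KKT form reduces to the same equations on the support.

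The main computation is the partial derivative of $P_6$. By Theorem~\ref{thm: expected number of cycle-6}, $P_6=\sum_{M\mid b}[G]_{0,b}$ with $G=f^3(X,Y)f^3(X^{-1},Y^{-1})$. Since extracting coefficients is linear, we may differentiate $G$ formally in $p_{i,j}$ before extracting. We have $\partial f(X,Y)/\partial p_{i,j}=X^iY^j$ and $\partial f(X^{-1},Y^{-1})/\partial p_{i,j}=X^{-i}Y^{-j}$. Therefore
\begin{align*}
\frac{\partial G}{\partial p_{i,j}} = 3X^iY^j f^2(X,Y)f^3(X^{-1},Y^{-1}) + 3X^{-i}Y^{-j}f^3(X,Y)f^2(X^{-1},Y^{-1}).
\end{align*}

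Next we show that the two terms contribute equally after the operation $\sum_{M\mid b}[\cdot]_{0,b}$. The substitution $X\mapsto X^{-1}$, $Y\mapsto Y^{-1}$ sends the first term to the second. It also maps the coefficient $[\cdot]_{0,b}$ to $[\cdot]_{0,-b}$, and the index set $\{b: M\mid b\}$ is closed under negation. So
\begin{align*}
\frac{\partial P_6}{\partial p_{i,j}} = 6\sum_{M\mid b}\big[X^{-i}Y^{-j}f^3(X,Y)f^2(X^{-1},Y^{-1})\big]_{0,b}.
\end{align*}
Multiplying by $X^{-i}Y^{-j}$ shifts exponents, so $[X^{-i}Y^{-j}Q]_{0,b}=[Q]_{i,b+j}$. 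This gives $\partial P_6/\partial p_{i,j}=6\sum_{M\mid b}[f^3(X,Y)f^2(X^{-1},Y^{-1})]_{i,b+j}$.

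Setting this equal to $\lambda_i$ and taking $c_i=\lambda_i/6$ yields \eqref{eq_6} for all $i\in\{0,\dots,m\}$ and $j\in\{0,\dots,M-1\}$. The main obstacle is bookkeeping rather than substance. One point is justifying the symmetry step: the sums are finite Laurent polynomials, so the substitution is legitimate. The other is tracking the index shift correctly, so that $j$ enters as $b+j$ and the $X$-exponent becomes $i$ rather than $-i$.
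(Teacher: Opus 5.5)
Your proposal is correct and matches the paper's proof: a Lagrangian with one multiplier per row-sum constraint (LICQ), differentiation of $f^3(X,Y)f^3(X^{-1},Y^{-1})$ before coefficient extraction, merging the two terms via the $X\mapsto X^{-1}$, $Y\mapsto Y^{-1}$ symmetry of $\sum_{M\mid b}[\cdot]_{0,b}$, the shift $[X^{-i}Y^{-j}Q]_{0,b}=[Q]_{i,b+j}$, and $c_i=\lambda_i/6$. Your interior-point caveat is, if anything, more careful than the paper, which zeroes the equality-only Lagrangian gradient without comment; at a boundary minimizer both arguments give the equations only for entries with $0<p_{i,j}<1$, while entries with $p_{i,j}=0$ satisfy only an inequality.
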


\begin{proof} Since the constraints of the optimization problem in hand satisfy the linear independence constraint qualification (LICQ), Karush-Kuhn-Tucker (KKT) conditions must hold. Note that the vector $\mathbf{{r}}^{\textup{con}}$ to appear below is obtained by concatenating the rows of the $(m+1) \times M$ matrix 
\vspace{-0.3em}\begin{align} 
\mathbf{R} = \begin{bmatrix} r_0 & r_0 & \dots & r_0 \\ r_1 & r_1 & \dots & r_1 \\ 
\vdots & \vdots & \ddots & \vdots 
\\ r_m & r_m & \dots & r_m \end{bmatrix}.
\end{align}
We now consider the Lagrangian $L_6(\mathbf{p}^{\textup{con}}) = P_6(\mathbf{p}^{\textup{con}})+ \sum_{i=0}^m r_i (p_i^*-p_{i,0}-p_{i,1} - \dots - p_{i,M-1})$ and compute its gradient as follows:
\begin{align} &\nabla_{\mathbf{p}^{\textup{con}}}  L_6(\mathbf{p}^{\textup{con}}) 
=  \nabla_{\mathbf{p}^{\textup{con}}} \left ( P_6(\mathbf{p}^{\textup{con}})+ \sum_{i=0}^m r_i (p_i^*-p_{i,0}-p_{i,1} - \dots - p_{i,M-1}) \right ) \nonumber \\ 
&= \nabla_{\mathbf{p}^{\textup{con}}} \left( \sum_{M \vert b} \left[ f^3(X,Y)f^3(X^{-1},Y^{-1}) \right]_{0,b} \right) - \mathbf{{r}}^{\textup{con}}
= \sum_{M \vert b} \left[ \nabla_{\mathbf{p}^{\textup{con}}}  (f^3(X,Y)f^3(X^{-1},Y^{-1}))\right]_{0,b} - \mathbf{r}^{\textup{con}} \nonumber \\
&= \sum_{M \vert b} \Big\{ 3 \left[f^3(X,Y)f^2(X^{-1},Y^{-1})\nabla_{\mathbf{p}^{\textup{con}}}f(X^{-1},Y^{-1}) \right]_{0,b} 
+  3 \left[ f^2(X,Y)f^3(X^{-1},Y^{-1})\nabla_{\mathbf{p}^{\textup{con}}}f(X,Y) \right]_{0,b} \Big\} - \mathbf{r}^{\textup{con}} \nonumber \\
&= 6  \sum_{M \vert b}  \left[ f^3(X,Y)f^2(X^{-1},Y^{-1})\mathbf{v}_1^{\textup{con}}) \right]_{0,b} - \mathbf{r}^{\textup{con}}, 
\end{align} 
where the last equality follows from the polynomial-symmetry observation for any polynomial $r(X,Y)$
\[
\sum_{M \vert b} \left[ r(X,Y)\nabla_{\mathbf{p}^{\textup{con}}}f(X^{-1},Y^{-1}) \right]_{0,b} = \sum_{M \vert b} \left[ r(X^{-1},Y^{-1})\nabla_{\mathbf{p}^{\textup{con}}}f(X,Y) \right]_{0,b}.
\]
Here, $\mathbf{v}_k^{\textup{con}}$ is the vector of length $(m+1)M$ obtained by concatenating the rows of the following $(m+1) \times M$ matrix (of monomials)
\begin{align}
 \mathbf{V}_k = \begin{bmatrix} 1 & Y^{-k} & \dots & Y^{-(M-1)k} \\ X^{-k} & X^{-k}Y^{-k} & \dots & X^{-k}Y^{-(M-1)k} \\ 
\vdots & \vdots & \ddots & \vdots 
\\ X^{-mk} & X^{-mk}Y^{-k} & \dots & X^{-mk}Y^{-(M-1)k} \end{bmatrix}
\end{align}
for $k \in \mathbb{Z}$. When $P_6(\mathbf{p}^{\textup{con}})$ reaches its local minimum, $\nabla_{\mathbf{p}^{\textup{con}}}  L_6(\mathbf{p}^{\textup{con}}) = \mathbf{0}_{1\times (m+1)M},$
which directly leads to \eqref{eq_6} by defining $c_i = r_i /6$ for all $i \in \{0,1,\dots,m\}$.
\end{proof}

\begin{lemma} \label{lemma: lagrange8} For $N_8(\mathbf{p}^{\textup{con}})$ to be locally minimized subject to the constraints
\begin{align} 
p_{i,0}+p_{i,1} + \dots + p_{i,M-1} = p_i^*,
\end{align}
for all $i \in \{0,1,\dots,m\}$, where each $p_{i,j} \in [0,1]$, it is necessary that the following equations hold for some $c_i \in \mathbb{R}$:
\begin{align} \label{eq_8}
\sum_{M \vert b} &\hspace{+0.2em}\Big\{ w_1 \left[ 4f^2(X^2,Y^2)f(X^{-2},Y^{-2}) \right]_{2i,b+2j} 
+ w_2 \left[ 2f(X^2,Y^2)f^2(X,Y)f^2(X^{-1},Y^{-1}) \right]_{2i,b+2j} \nonumber \\
&+ w_2 \left[ 4f(X^2,Y^2)f(X^{-2},Y^{-2})f^2(X,Y)f(X^{-1},Y^{-1}) \right]_{i,b+j}
+ w_3 \left[ f^2(X,Y)f^4(X^{-1},Y^{-1}) \right]_{-2i,b-2j} \nonumber \\ 
&+ w_3 \left[ 2f(X^2,Y^2)f(X,Y)f^4(X^{-1},Y^{-1}) \right]_{-i,b-j} 
+ w_3 \left[ 4f(X^2,Y^2)f^2(X,Y)f^3(X^{-1},Y^{-1}) \right]_{i,b+j} \nonumber \\ 
&+  w_4 \left[ 8f^4(X,Y)f^3(X^{-1},Y^{-1}) \right]_{i,b+j} \Big\}
= c_i,
\end{align}
for all $i \in \{0,1, \dots, m\}$ and $j \in \{0,1, \dots, M-1\}$.
\end{lemma}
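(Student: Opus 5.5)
We follow the template of the proof of Lemma~\ref{lemma: lagrange6}. The constraints are linear, and each involves a disjoint set of variables (row $i$ of $\mathbf{P}$). Their gradients are therefore linearly independent, so LICQ holds and the KKT conditions are necessary at a local minimum. We form the Lagrangian
\[
L_8(\mathbf{p}^{\textup{con}}) = N_8(\mathbf{p}^{\textup{con}}) + \sum_{i=0}^m r_i \left(p_i^* - \sum_{j=0}^{M-1} p_{i,j}\right).
\]
Setting $\partial L_8/\partial p_{i,j}=0$ gives $\partial N_8/\partial p_{i,j} = r_i$, which is independent of $j$. It then remains to compute $\partial N_8/\partial p_{i,j}$ term by term from Theorem~\ref{thm: expected number of cycle-8} and show that it equals the left-hand side of \eqref{eq_8}. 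The constant $c_i$ will be $r_i$ up to whatever normalization we adopt.

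The key derivative facts are as follows:
\[
\frac{\partial f(X^k,Y^k)}{\partial p_{i,j}} = X^{ki}Y^{kj}, \qquad \frac{\partial f(X^{-k},Y^{-k})}{\partial p_{i,j}} = X^{-ki}Y^{-kj}.
\]
Multiplying a polynomial by $X^{a}Y^{c}$ and extracting $[\cdot]_{0,b}$ is the same as extracting $[\cdot]_{-a,b-c}$. Since $b$ ranges over all multiples of $M$ and $j$ is fixed, $\sum_{M\vert b}[\cdot]_{0,b}$ of a shifted polynomial becomes $\sum_{M\vert b}[\cdot]_{-a,b-c}$. We also use the symmetry observation from Lemma~\ref{lemma: lagrange6}: $\sum_{M\vert b}[r(X,Y)]_{0,b} = \sum_{M\vert b}[r(X^{-1},Y^{-1})]_{0,b}$, since inversion maps $(0,b)$ to $(0,-b)$ and $M\vert -b$. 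Hence any derivative term may be replaced by its inverted version. This lets us merge each derivative of an $f(X^{-k},\cdot)$ factor with the corresponding derivative of an $f(X^{k},\cdot)$ factor whenever the term is self-conjugate.

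We then go term by term. For $w_4$, the product $f^4(X,Y)f^4(X^{-1},Y^{-1})$ is self-conjugate, so the two contributions merge into $8\,[f^4 f^3(X^{-1},Y^{-1})]_{i,b+j}$. For $w_1$, the same merging gives $4\,[f^2(X^2,Y^2)f(X^{-2},Y^{-2})]_{2i,b+2j}$. For $w_2$, the $f(X^{\pm2},\cdot)$ pair merges into the $[\cdot]_{2i,b+2j}$ term with coefficient $2$, and the $f^{2}(X^{\pm1},\cdot)$ pair merges into the $[\cdot]_{i,b+j}$ term with coefficient $4$. For $w_3$, the term is \emph{not} self-conjugate, so we get three separate derivatives:
\begin{itemize}
\item[(i)] from $f(X^2,Y^2)$, the factor $X^{2i}Y^{2j}$, which yields $[f^2 f^4(X^{-1},Y^{-1})]_{-2i,b-2j}$;
\item[(ii)] from $f^2(X,Y)$, the factor $2X^iY^j$, which yields $[\cdot]_{-i,b-j}$;
\item[(iii)] from $f^4(X^{-1},Y^{-1})$, the factor $4X^{-i}Y^{-j}$, which yields $[\cdot]_{i,b+j}$.
\end{itemize}
These match the three $w_3$ terms in \eqref{eq_8}.

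The main obstacle is bookkeeping. We must verify the sign and shift conventions for the coefficient indices, and check the factor-of-$2$ normalization between merged conjugate pairs and the unmerged $w_3$ terms. The merged terms carry an implicit factor $2$ from symmetry, while the $w_3$ terms do not, so the absorbed constant must be chosen consistently. We will handle this by defining $c_i$ as $r_i$ with the appropriate constant, and by checking the merged coefficients against direct differentiation (e.g.\ $4 = 2\cdot 2$ for $w_1$). Finally, setting the gradient of $L_8$ to zero yields \eqref{eq_8} for all $i,j$.
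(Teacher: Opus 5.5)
Your proposal is correct and follows the paper's proof: the same Lagrangian with multipliers $r_i$, the same term-by-term differentiation of $N_8$, and the same inversion symmetry $\sum_{M\vert b}[r(X,Y)]_{0,b}=\sum_{M\vert b}[r(X^{-1},Y^{-1})]_{0,b}$ to merge conjugate contributions, which yields exactly the coefficients $4,2,4,1,2,4,8$ in \eqref{eq_8}. Your closing worry about a factor-of-$2$ mismatch between merged and unmerged terms is unfounded: merging a conjugate pair is an exact identity, so the left-hand side of \eqref{eq_8} is precisely $\partial N_8/\partial p_{i,j}$ and $c_i=r_i$ with no rescaling (and a single rescaling of $c_i$ could not have repaired a relative mismatch between individual terms anyway).
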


\begin{proof}
We consider the Lagrangian $L_8(\mathbf{p}^{\textup{con}}) = N_8(\mathbf{p}^{\textup{con}})+ \sum_{i=0}^m r_i (p_i^*-p_{i,0}-p_{i,1} - \dots - p_{i,M-1})$ and compute its gradient as follows:
\begin{align}\label{eqn: cycle_8_grad}
&\nabla_{\mathbf{p}^{\textup{con}}}  L_8(\mathbf{p}^{\textup{con}})
= \nabla_{\mathbf{p}^{\textup{con}}} (N_8(\mathbf{p}^{\textup{con}})+ \sum_{i=0}^m r_i (p_i^*-p_{i,0}-p_{i,1} - \dots - p_{i,M-1})) \nonumber \\ 
&= \sum_{M \vert b} \nabla_{\mathbf{p}^{\textup{con}}} \Big\{ w_1 \left[ f^2(X^2,Y^2)f^2(X^{-2},Y^{-2}) \right]_{0,b}
+ w_2 \left[ f(X^2,Y^2)f(X^{-2},Y^{-2})f^2(X,Y)f^2(X^{-1},Y^{-1}) \right]_{0,b} \nonumber \\
&\hspace{+4.7em}+ w_3 \left[ f(X^2,Y^2)f^2(X,Y)f^4(X^{-1},Y^{-1}) \right]_{0,b}
+  w_4 \left[ f^4(X,Y)f^4(X^{-1},Y^{-1}) \right]_{0,b} \Big\} - \mathbf{r}^{\textup{con}} \nonumber \\
&= \sum_{M \vert b} \Big\{ w_1 \left[ 4f^2(X^2,Y^2)f(X^{-2},Y^{-2})\mathbf{v}_2^{\textup{con}} \right]_{0,b} 
+ w_2 \left[ 2f(X^2,Y^2)f^2(X,Y)f^2(X^{-1},Y^{-1})\mathbf{v}_2^{\textup{con}} \right]_{0,b} \nonumber \\
&\hspace{+2.7em}+ w_2 \left[ 4f(X^2,Y^2)f(X^{-2}\hspace{-0.3em},Y^{-2})f^2(X,Y)f(X^{-1}\hspace{-0.3em},Y^{-1})\mathbf{v}_1^{\textup{con}} \right]_{0,b}
+ w_3 \left[ f^2(X,Y)f^4(X^{-1},Y^{-1})\mathbf{v}_{-2}^{\textup{con}} \right]_{0,b} \nonumber \\ 
&\hspace{+2.7em}+ w_3 \left[ 2f(X^2,Y^2)f(X,Y)f^4(X^{-1},Y^{-1}) \mathbf{v}_{-1}^{\textup{con}} \right]_{0,b} 
+ w_3 \left[ 4f(X^2,Y^2)f^2(X,Y)f^3(X^{-1},Y^{-1}) \mathbf{v}_1^{\textup{con}} \right]_{0,b} \nonumber \\ 
&\hspace{+2.7em}+  w_4 \left[ 8f^4(X,Y)f^3(X^{-1},Y^{-1})\mathbf{v}_1^{\textup{con}} \right]_{0,b} \Big\} - \mathbf{r}^{\textup{con}}.
\end{align}
When $N_8(\mathbf{p}^{\textup{con}})$ reaches its local minimum, $\nabla_{\mathbf{p}^{\textup{con}}}  L_8(\mathbf{p}^{\textup{con}}) = \mathbf{0}_{1 \times (m+1)M},$ which directly leads to \eqref{eq_8} by defining $c_i = r_i$ for all $i \in \{0,1,\dots,m\}$.
\end{proof}

%%%%%%%%%%%%%%%%%%%%%%%%%%%%%%%%%%%%%
\subsection{Outcomes Forecasted} \label{subsec: forecast}

\begin{figure}
\centering
\includegraphics[width=0.40\textwidth]{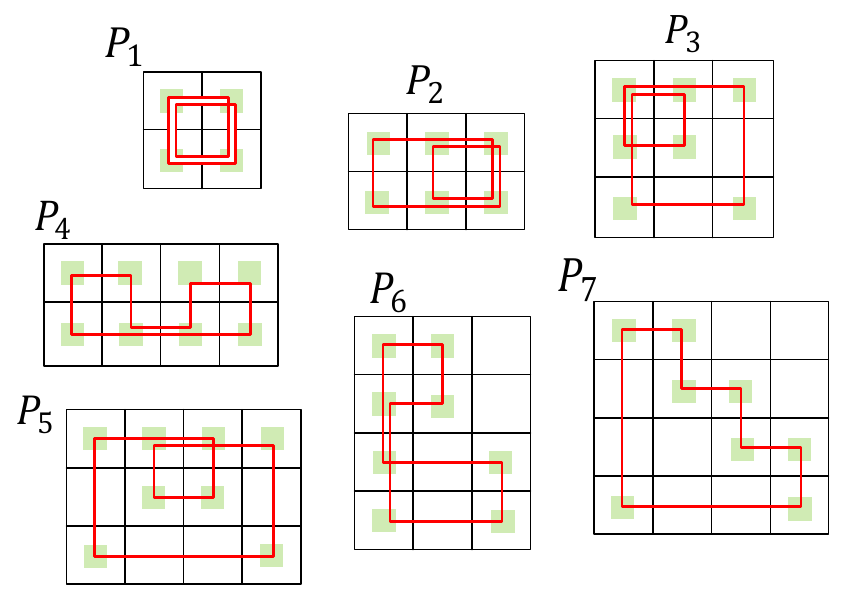} \vspace{-0.8em}
\caption{Seven (out of nine) protograph patterns that can result in cycles-$8$ in the Tanner graph after partitioning/relocations/lifting. The transposes of $P_2$ and $P_4$ are excluded from the list for brevity. Light green squares are the non-zero entries specifying the cycle-candidate edges.} \vspace{-1.0em}
\label{fig: cycle8_patterns}
\end{figure}

In this section, we compute the expected number of short cycles-$k$, $k \in \{6,8\}$, under a specific MD probability distribution, taking into account $L$, $M$, and $z$ (which gives an estimate, typically a useful upper bound, of what the MCMC algorithm can produce; see Section~\ref{subsec: MCMC_algo}). These numbers inform us what to expect from incorporating the probability-distribution matrix in designing GD-MD codes under random partitioning, relocations, and lifting.

\begin{theorem} \label{thm: forecast} After random partitioning and relocations based on a probability-distribution vector $\mathbf{p}^{\textup{con}}$, and random lifting based on a uniform distribution, the expected number $E_6(\mathbf{p}^{\textup{con}})$ of cycles-$6$ in the Tanner graph of $\mathbf{H}_{\textup{MD}}$ is
\begin{align} \label{Eobj6mid} E_6(\mathbf{p}^{\textup{con}}) \approx N_6(\mathbf{p}^{\textup{con}})\cdot \frac{(2L-m)}{2}\cdot M,
\end{align}
and the expected number $E_8(\mathbf{p}^{\textup{con}})$ of cycles-$8$ is 
\begin{align} \label{Eobj8mid}
E_8(\mathbf{p}^{\textup{con}}) \approx N_8(\mathbf{p}^{\textup{con}}) \cdot (L-m) \cdot M,
\end{align}
where $L > \chi$, $\chi$ is the maximum number of consecutive SC protograph replicas a cycle pattern of interest, specifically its VNs, can span, and $\kappa \geq 4$ (see \cite[Lemma~1]{oocpo} and \cite{channel_aware}).
\end{theorem}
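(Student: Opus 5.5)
The plan is to write the cycle count as a sum of indicator variables and use linearity of expectation, factoring each term into three parts: the probability that the protograph candidate survives partitioning and relocations, the number of placements of that candidate in $\mathbf{H}_{\textup{MD}}^{\textup{g}}$, and the probability that a placed candidate survives lifting. By Lemma~\ref{lemma:par_rel_con}, every cycle-$2g$ in the Tanner graph of $\mathbf{H}_{\textup{MD}}$ comes from a cycle-$2g$ candidate in $\mathbf{H}^{\textup{g}}$ that stays active in the MD protograph, i.e., satisfies \eqref{eq:partition_condition} and \eqref{eq:relocation_condition}. By Remark~\ref{rmk: order}, I may treat the construction as partitioning, then relocations, then lifting. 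The first factor is then exactly what Theorems~\ref{thm: expected number of cycle-6} and \ref{thm: expected number of cycle-8} compute: the expected number of active candidates, counted per base-matrix pattern, is $N_6(\mathbf{p}^{\textup{con}})$, respectively $N_8(\mathbf{p}^{\textup{con}})$.

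\textbf{Placement count.} Here I will use the convolutional structure. An active candidate has its VNs in columns of $\mathbf{H}^{\textup{g}}$. In $\mathbf{H}^{\textup{g}}_{\textup{SC}}$, each candidate appears once for each choice of starting replica such that all component blocks it uses fit inside the $L$ replicas. If the candidate's VNs span $s$ consecutive replicas, it has $L-s+1$ placements. I will then average over the distribution of the span. For cycles-$6$, the candidate's VNs span replicas determined by differences of $\mathbf{K}$ entries. Under the partitioning condition, and averaging the span uniformly over $0,\dots,m$ (a truncated-boundary argument as in \cite[Lemma~1]{oocpo}), the average number of placements is about $L-m/2=(2L-m)/2$. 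For cycles-$8$, the larger patterns typically reach the full memory, so I will use the cruder count $L-m$, following \cite{channel_aware}. The hypothesis $L>\chi$ guarantees that every pattern fits at least once, so these counts are positive and the boundary correction is consistent. Next, the cyclic block structure \eqref{md matrix} means each SC placement is replicated across the $M$ block rows: the relocation condition \eqref{eq:relocation_condition} is invariant under cyclic shifts by block index. This contributes the factor $M$.

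\textbf{Lifting.} With uniform random lifting and $z$ circulant size, a protograph cycle-$2g$ candidate becomes a cycle in the lifted graph if and only if the alternating sum of its shifts is $\equiv 0 \pmod z$. This happens with probability $1/z$, and each such candidate then yields $z$ cycles (one per row of the circulant). The net factor is therefore $1$, so lifting contributes nothing, which matches the absence of $z$ in \eqref{Eobj6mid} and \eqref{Eobj8mid}. For the cycle-$8$ patterns that revisit entries (e.g., the $w_1$ and $w_2$ terms), the shift sums degenerate. I will note that with prime $z$ and $\kappa\ge4$ these terms are either handled the same way or are negligible relative to the $w_4$ term, which is why the statement is only an approximation.

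\textbf{Main obstacle.} The step I expect to be hardest is the placement count, that is, justifying the $(2L-m)/2$ and $(L-m)$ factors. The exact count depends on the actual spread of partition values in each candidate, which varies with $\mathbf{p}^{\textup{con}}$. I would first try to bound it between $L-m$ and $L$ and then argue that an averaged midpoint is appropriate for cycle-$6$ candidates, citing \cite[Lemma~1]{oocpo}. I would state the result with ``$\approx$'' rather than attempt an exact formula.
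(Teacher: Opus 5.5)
\textbf{Overall assessment.} Your overall architecture matches the paper's proof. It sums indicators over base-matrix candidates and takes $\mathbb{P}(K,F)$ from Theorems~\ref{thm: expected number of cycle-6} and \ref{thm: expected number of cycle-8}. It uses a lifting probability of about $1/z$, cancelled by the $z$ copies, a factor $M$ from the MD coupling, and $L-s+1$ placements for a candidate spanning $s$ replicas. For cycles-$6$, your averaging between $L$ and $L-m$ is exactly the paper's argument.

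\textbf{The gap: the cycle-$8$ placement factor.} You justify it by saying that ``the larger patterns typically reach the full memory,'' and you propose the bracket $[L-m,L]$. Both implicitly assume that a cycle-$8$ candidate's VNs span at most $m+1$ replicas. That bound holds for cycles-$6$, because every pair of VNs shares a CN. It fails for cycles-$8$, because non-adjacent VNs of the cycle need not share a CN. For example, the four VNs can sit in replicas $0,m,2m,m$, giving a span of $2m+1$. This is realized by patterns $P_6$ and $P_7$ of Fig.~\ref{fig: cycle8_patterns}, and it is why $\chi=2m+1$ in the hypothesis $L>\chi$ for cycles-$8$. The correct bracket, up to the lifting approximation, is therefore $N_8(\mathbf{p}^{\textup{con}})(L-2m)M \le E_8(\mathbf{p}^{\textup{con}}) \le N_8(\mathbf{p}^{\textup{con}})\,L\,M$. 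Your proposed lower bound $L-m$ is false, and $L-m$ is the midpoint of $L$ and $L-2m$. It comes from the same endpoint-averaging heuristic you used for cycles-$6$, not from spans typically equal to $m+1$. As written, your route gives no consistent reason to use $(2L-m)/2$ for cycles-$6$ but $L-m$ for cycles-$8$; it only lands on the right number by coincidence.

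\textbf{A smaller inaccuracy in the lifting step.} The condition that the alternating sum of shifts is $\equiv 0 \pmod z$ guarantees only a closed walk, so your ``if and only if'' is too strong. Consider patterns with $8$ distinct entries in only two rows, such as $P_4$, which belongs to the dominant $w_4$ term and not only to the repeated-entry $w_1,w_2$ terms. Any embedded cycle-$4$ candidates that survive partitioning and relocations must also stay open after lifting. The paper shows that the conditional probability then ranges from $(z-1)(z-2)/z^3$ to $1/z$, with $1/z$ exact when no cycle-$4$ survives. Since all these values are $O(1/z)$, the paper adopts $1/z$ as an estimate. Your final formula is unaffected, but the approximation enters inside the dominant term as well, not only through the repeated-entry patterns.
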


\begin{proof}
Let $E_{\textup{obj}}$ be the expected total number of cycle-$k$ candidates out of all cycle-$k$ candidates $\{C_i \,|\, i \in \mathcal{I}\}$ in the all-one base matrix $\mathbf{{H}^{\textup{g}}}$ that remain active after random partitioning, random relocations, and random lifting (see Remark \ref{rmk: order}). Here, $\mathcal{I}$ is the set of candidate indices. Define $\mathbbm{1}(C_i)$ as an indicator function on $C_i$ being active. Then,
\begin{align} 
E_{\textup{obj}} &=\mathbb{E} \bigg[ \sum_{i \in \mathcal{I}} \mathbbm{1}(C_i) \bigg] \nonumber \\
&= \sum_{i \in \mathcal{I}}  \mathbb{E} [ \mathbbm{1}(C_i) ] \nonumber \\
&= \sum_{P \in \mathcal{K}(C)} \sum_{i \in \mathcal{I}(P)}  \mathbb{E}_{P} [ \mathbbm{1}(C_i) ] ,
\end{align}
where $\mathcal{K}(C)$ is the set of all protograph patterns that can produce the cycle-$k$ of interest denoted by $C$, $\mathcal{I}(P)$ is the set of indices of all cycle candidates in $\mathbf{{H}^{\textup{g}}}$ associated with pattern $P$. Moreover,
\begin{align}
\mathbb{E}_{P} [ \mathbbm{1}(C_i) ]= \mathbb{P}(K^{C_i}_{P}, F^{C_i}_{P}) \mathbb{P}(L^{C_i}_{P} \mid  K^{C_i}_{P}, F^{C_i}_{P}),
\end{align}
where $\mathbb{P}(K^{C_i}_{P}, F^{C_i}_{P})$ is the probability that $C_i$ remains active after partitioning and relocations, while $\mathbb{P}(L^{C_i}_{P} \mid  K^{C_i}_{P}, F^{C_i}_{P})$ is the probability that $C_i$ remains active after lifting given that it arrives at this stage active ($K^{C_i}_{P}$, $F^{C_i}_{P}$, and $L^{C_i}_{P}$ are the associated events). Observe that what happens at the lifting stage depends on partitioning and relocations.
\par 
First, note that there is only one pattern $P$, which is a cycle-$6$ itself, in the base matrix that can produce cycles-$6$ in $\mathbf{H}_{\textup{MD}}$, i.e., $|\mathcal{K}(C)|=1$. Furthermore,
\begin{enumerate} 
\item[a)] $\mathbb{P}(K^{C_i}_{P}, F^{C_i}_{P})=P_6(\mathbf{p}^{\textup{con}})$, 
\item[b)] $|\mathcal{I}|= |\mathcal{I}(P)| = 6 \binom{\gamma}{3} \binom{\kappa}{3}$, and
\item[c)] $\mathbb{P}(L^{C_i}_{P} \mid  K^{C_i}_{P}, F^{C_i}_{P})=1/z$. 
\end{enumerate}
Hence and using \eqref{eqn: expected number of cycle-6}, $ E_{\textup{obj}}$ can be expressed as $N_6(\mathbf{p}^{\textup{con}})/z$. Combining this with the fact that an active candidate $C_i$ produces multiple copies after each stage (for example, $z$ copies as a result of lifting), we obtain 
\begin{align} N_6(\mathbf{p}^{\textup{con}})&\cdot (L-s_{\textup{max}}+1)\cdot M \nonumber  \\
& \leq E_6(\mathbf{p}^{\textup{con}}) \nonumber \\
& \leq N_6(\mathbf{p}^{\textup{con}})\cdot (L-s_{\textup{min}}+1)\cdot M,
\end{align}
where $s_{\textup{min}}$ and $s_{\textup{max}}$ are the minimum and maximum spans, in terms of replicas, of a cycle-$6$ pattern in $\mathbf{H}^{\textup{g}}_{\textup{SC}}$. Note that $s_{\textup{min}}=1$ and $s_{\textup{max}}=\chi=m+1$. By assumption, $L\gg m$, and thus we can take $E_6(\mathbf{p}^{\textup{con}})\approx N_6(\mathbf{p}^{\textup{con}})\cdot \frac{(2L-m)}{2}\cdot M$ via averaging $L$ and $L-m$ as an estimate, which is \eqref{Eobj6mid}.
\par 
For the second part of the statement, observe that there are nine patterns in the base matrix that can produce cycles-$8$ in $\mathbf{H}_{\textup{MD}}$ \cite{GRADE, channel_aware}, i.e., $|\mathcal{K}(C)|=9$. We will find the expected number of cycles-$8$ in $\mathbf{H}_{\textup{MD}}$ for pattern $P_4$ shown in Fig.~\ref{fig: cycle8_patterns} as this is a quite rich case.

We first note that
\begin{enumerate} 
\item[a)] $\mathbb{P}(K^{C_i}_{P_4}, F^{C_i}_{P_4})= \sum_{M \vert b} [ f^4(X,Y)f^4(X^{-1},Y^{-1})]_{0,b}$ and
\item[b)] $|\mathcal{I}(P_4)| = 6\binom{\gamma}{2} \binom{\kappa}{4}$.
\end{enumerate}
Next, we show that $(z-1)(z-2)/z^3 \leq \mathbb{P}(L^{C_i}_{P_4} \mid  K^{C_i}_{P_4}, F^{C_i}_{P_4}) \leq 1/z$ for lifting.
\par
Note that $P_4$ is formed via three basic (fundamental) cycles-$4$. Following the arguments of \cite{fossorier} and \cite[Theorem~3]{rohith2}, we consider the following cases for $P_4$ after partitioning and relocations:
\begin{enumerate} 
\item[{2.a)}] The three basic cycles-$4$ are kept.
\item[{2.b)}] Only two basic cycles-$4$ are kept. 
\item[{2.c)}] No cycles-$4$ are kept.
\end{enumerate}
Suppose that the $4$ edges of $P_4$ on the first row are labeled N1, N2, N3, N4, starting from the left, and the $4$ edges on the second row are labeled N5, N6, N7, N8, starting from the left (see Fig.~\ref{fig: cycle8_patterns}). We access these edges in the following order N1, N5, N6, N2, N3, N7, N8, N4.
\par
In Case 2.a), suppose we start from the edge N1, and we trace the possible lifting choices for edges N2, N7, and N4 at the end of each basic cycle traversal. N2 has $z-1$ options for a power of $\boldsymbol{\sigma}$ in order not to form a cycle-$4$ on the left. N7 has $z-2$ options in order not to form a cycle-$4$ from the disjunctive union of the two left-most basic cycles. Finally, N4 has a single option to close the walk and form a cycle-$8$ \cite{fossorier}. Hence, $\mathbb{P}(L^{C_i}_{P_4} \mid  K^{C_i}_{P_4}, F^{C_i}_{P_4})= (z-1)(z-2)(1)/z^3=(z-1)(z-2)/z^3$.
\par
In Case 2.b), suppose we start from the edge N1, and we trace the possible lifting choices for edges N2, N7, and N4 at the end of each basic cycle traversal. We assume the cycle that ends with N7 is the one that no longer exists. N2 has $z-1$ options for a power of $\boldsymbol{\sigma}$ in order not to form a cycle-$4$ on the left. N7 is free to receive any of the $z$ options. Finally, N4 has a single option to close the walk and form a cycle-$8$ \cite{fossorier}. Hence, $\mathbb{P}(L^{C_i}_{P_4} \mid  K^{C_i}_{P_4}, F^{C_i}_{P_4})= (z-1)z(1)/z^3=(z-1)/z^2$. 
\par
In Case 2.c), it is clear that $\mathbb{P}(L^{C_i}_{P_4} \mid  K^{C_i}_{P_4}, F^{C_i}_{P_4})=1/z$. 
\par
Note that $\mathbb{P}(L^{C_i}_{P_4} \mid  K^{C_i}_{P_4}, F^{C_i}_{P_4})$ is closer to $1/z$ because it is more likely to have no cycles-$4$, for $m > 1$, after random partitioning (in which case, $C_i$ becomes a true cycle-$8$ in the MD protograph). Therefore, we take $\mathbb{P}(L^{C_i}_{P_4} \mid  K^{C_i}_{P_4}, F^{C_i}_{P_4})=1/z$ as an estimate. The other patterns can be treated in a similar fashion, and thus we take $\mathbb{P}(L^{C_i}_{P} \mid  K^{C_i}_{P}, F^{C_i}_{P})=1/z$ for all possible cycle-$8$ protograph patterns.\footnote{Note that in all cases where the probability of remaining active is not exactly $1/z$, it still is in the order of $O(1/z)$. Thus, for sufficiently large $z$, using $1/z$ as an approximation is generally justified.} Hence, $ E_{\textup{obj}}$ is approximated by $N_8(\mathbf{p}^{\textup{con}})/z$. Consequently, we obtain 
\begin{align} N_8(\mathbf{p}^{\textup{con}})&\cdot (L-s_{\textup{max}}+1)\cdot M \nonumber  \\
& \leq E_8(\mathbf{p}^{\textup{con}}) \nonumber \\
& \leq N_8(\mathbf{p}^{\textup{con}})\cdot (L-s_{\textup{min}}+1)\cdot M,
\end{align}
where $s_{\textup{min}}$ and $s_{\textup{max}}$ are the minimum and maximum spans of a cycle-$8$ pattern in $\mathbf{H}^{\textup{g}}_{\textup{SC}}$. Note that $s_{\textup{min}}=1$ and $s_{\textup{max}}=\chi=2m+1$ due to the existence of structures $P_6$ and $P_7$ (only $P_7$) after partitioning in the case of $\gamma \geq 4$ ($\gamma = 3$). By assumption, $L\gg m$, and thus we can take $E_8(\mathbf{p}^{\textup{con}}) \approx N_8(\mathbf{p}^{\textup{con}})\cdot (L-m)\cdot M$ via averaging $L$ and $L-2m$ as an estimate, which is \eqref{Eobj8mid}.
\end{proof}

\begin{remark}\label{rmk4} Note that the upper bound of $E_8(\mathbf{p}^{\textup{con}})$ in the proof above is exact. One obtains an exact lower bound as well by studying the probabilities $\mathbb{P}(L^{C_i}_{P} \mid  K^{C_i}_{P}, F^{C_i}_{P})$ for each cycle-$8$ pattern, and it is in fact lower than $N_8(\mathbf{p}^{\textup{con}}) \cdot (L-2m) \cdot M$. However, the expected number $E_8(\mathbf{p}^{\textup{con}})$ is always notably closer to the upper bound.
\end{remark}

We now discuss the value of Theorem~\ref{thm: forecast} to the coding theory community.
\begin{enumerate}
\item The GD-MD codes we design will have lower numbers of cycles (of length $=$ code girth) than the upper bound of the expected number when the partitioning and lifting matrices are designed as in \cite{GRADE}, \cite{oocpo} and \cite{reins_mcmc}. This is expected to hold in general when we use optimized partitioning and lifting matrices designed for the underlying SC~codes. As we shall see, our estimate can get close to the actual FL cycle counts.
\item The expected number of cycles-$k$, $k \in \{6,8\}$, in $\mathbf{H}_{\textup{MD}}$ decreases with the relocation percentage $\mathcal{T}$. However, the rate of reduction becomes close to $0$ after some $\mathcal{T}$ value. Hence, one can locate a threshold vector $\mathbf{p}_{\textup{thold}}$ where $E_k(\mathbf{p}_{\textup{thold}})$ is at most, say, $5\%$ more than the minimum $E_k(\mathbf{p}^{\textup{con}})$, which is reached by minimizing $N_k(\mathbf{p}^{\textup{con}})$. This allows us to specify an informed MD density $\mathcal{T}$ for the MCMC algorithm (within the margin governed by decoding latency) to potentially decrease its computational complexity.
\item $E_{\textup{obj}}$ in Theorem~\ref{thm: forecast} proof gives the expected number of  cycle-$k$ candidates in the all-one base matrix $\mathbf{{H}^{\textup{g}}}$ that remain active after random partitioning, random relocation, and random lifting. In case this number is single-digit (at $\mathbf{p}_{\textup{thold}}$ for example), it is likely that the MCMC algorithm will eliminate all active cycles and produce an MD-SC code with girth $\geq k+2$. Provided that cycles (of length $=$ code girth) can in fact be entirely eliminated, $E_{\textup{obj}}$ leads us to a threshold for the MD density to achieve this goal.
\item Having the MD-SC degrees of freedom $m$, $L$, and $M$, we can choose these parameters to achieve best performance subject to the design constraints such as the length and rate. For example, if one has an allowed range for the length of the MD-SC code to design while the rate is fixed, $M$ can be determined so that $E_{\textup{obj}}$, and hence $E_k(\mathbf{p}^{\textup{con}})$, is as low as possible while the length is within the allowed range. This, of course, requires the code designer to run the MD-GRADE algorithm multiple times (see Section~\ref{subsec: MD-GRADE algo}).
\end{enumerate}

%%%%%%%%%%%%%%%%%%%%%%%%%%%%%%%%%%%%%%%%%%%%%%%%%%%%%%%%%%%%%%%%%%%%
\section{Theoretical Framework for Objects} \label{sec: theory}

In this section, we generalize the probabilistic framework previously introduced for minimizing cycle counts in MD-SC codes. In particular, we extend our analysis to arbitrary subgraphs in the Tanner graph of the code. We introduce this generalization because, although analyzing cycles is useful, cycles are not always as detrimental as other structures, particularly in codes with VN degree $4$ or higher, where isolated cycles may have limited impact on the performance. Therefore, targeting more harmful structures during optimization can yield greater performance improvements. This approach also enables more efficient exploitation of the degrees of freedom provided by the relocation operation.

\subsection{General Case Analysis}

In this subsection, we provide a closed-form expression for the expected number of instances of any subgraph with an arbitrary topology after random partitioning and relocation operations, given the probability-distribution matrix $\mathbf{P}$. From this point onward, we will refer to such subgraphs as objects, which is customary for LDPC codes.

An object pattern is a generalization of a cycle pattern, where it represents an assignment of CNs and VNs of an object to the rows and columns of the base matrix $\mathbf{{H}^{\textup{g}}}$. Each such assignment induces an equivalence relation over the CNs and VNs, and consequently over the edges of the object. Hence, sets of equivalence classes over CNs, VNs, and edges are produced. Unlike the case of cycles, an object pattern inherently includes the way of edge traversal. Object patterns that define the same equivalence classes form what we call an object pattern class. The formal definitions of object pattern, object pattern class, and equivalence class are provided below (also given in \cite{GRADE} for SC codes).

\begin{definition}\label{def: object_pattern_class} (Object pattern class). 
Let an object be represented by a bipartite graph $G(V, C, E)$, where $V$ and $C$ denote the sets of VNs and CNs, respectively. The edge set $E$ consists of elements $e_{i,j}$ for $i \in C$ and $j \in V$, indicating an edge between CN $i$ and VN $j$ in the object graph. 

An object pattern is defined as an assignment $P = (g_V, g_C)$, where $g_V : V \rightarrow \{0,1,\dots,\kappa-1\}$ and $g_C : C \rightarrow \{0,1,\dots,\gamma-1\}$ ($\gamma$ and $\kappa$ are the dimensions of the base matrix of the code), subject to the following conditions:
\begin{enumerate}
    \item For every $c \in C$ and any pair $v_1, v_2 \in V$ such that $e_{c,v_1}, e_{c,v_2} \in E$, it holds that $g_V(v_1) \neq g_V(v_2)$.
    \item For every $v \in V$ and any pair $c_1, c_2 \in C$ such that $e_{c_1,v}, e_{c_2,v} \in E$, it holds that $g_C(c_1) \neq g_C(c_2)$.
\end{enumerate}
Here, the first (second) condition implies that $2$ VNs (CNs) that are connected to same CN (VN), cannot be mapped to same value in the base matrix.

The collection of all such valid object patterns that additionally satisfy the below conditions is called the object pattern class associated with $(\mathcal{V}, \mathcal{C})$, denoted by $\mathcal{P}(\mathcal{V}, \mathcal{C})$, where $\mathcal{V}$ and $\mathcal{C}$ denote the sets of equivalence classes (or partitions) over $V$ and $C$, respectively. The additional conditions are:
\begin{enumerate}
    \item For any $v_1, v_2 \in V$, we have $g_V(v_1) = g_V(v_2)$ if and only if $v_1 \sim v_2$ in the set $\mathcal{V}$, i.e., $v_1$ and $v_2$ are equivalent.
    \item For any $c_1, c_2 \in C$, we have $g_C(c_1) = g_C(c_2)$ if and only if $c_1 \sim c_2$ in the set $\mathcal{C}$, i.e., $c_1$ and $c_2$ are equivalent.
\end{enumerate}

We define the equivalence relation over the edges by $e_{c_1,v_1} \sim e_{c_2,v_2}$ if and only if $v_1 \sim v_2$ in $\mathcal{V}$ and $c_1 \sim c_2$ in $\mathcal{C}$, for all $v_1, v_2 \in V$ and $c_1, c_2 \in C$. The set of all equivalence classes induced by this relation is denoted by $\mathcal{E}(\mathcal{V}, \mathcal{C})$.
\end{definition}

For an object represented by the bipartite graph $G(V, C, E)$, and an object pattern class over $G$ denoted by $\mathcal{P}(\mathcal{V}, \mathcal{C})$, with the set of equivalence classes $\mathcal{E}(\mathcal{V}, \mathcal{C})$, we can say that two VNs (CNs) are equivalent under $\mathcal{V}$ ($\mathcal{C}$) if they correspond to the same column (row) in the base matrix according to the object pattern assignment. Two edges in the bipartite graph are considered equivalent if they correspond to the same entry in the base matrix under this definition. Each element of $\mathcal{E}(\mathcal{V}, \mathcal{C})$ thus represents a specific entry in the base matrix that is covered by the given object pattern. This leads to the notion of a matrix representation for an object pattern or an object pattern class. Examples illustrating matrix representations of various object patterns are provided later in this section.

Note that, whenever a set of edges corresponds to the same entry in the base matrix, we also consider they correspond to the same entry in the partitioning and relocation matrices. Consequently, these edges are assigned identically with respect to both partitioning and relocation, provided they belong to the same equivalence class.

\begin{definition} (Cycle basis). \label{def:cycle_basis}
A cycle basis of an object is a minimum-cardinality set of cycles such that every cycle in the object can be expressed as the disjunctive union of cycles in this set. The disjunctive union of cycles refers to the subgraph consisting of all edges that appear in an odd number of cycles, along with their neighboring VNs and CNs. We will refer to cycles in this cycle basis as fundamental cycles.
\end{definition}

The concept of a cycle basis is particularly important because the activeness of each object after partitioning and relocations can be determined by the activeness of its fundamental cycles, as established in \cite[Theorem~1]{rohith2}. Therefore, the probability of the object pattern being active can be expressed as the joint probability that all of its fundamental cycles are active.

To express this probability, we construct a polynomial referred to as the characteristic polynomial of the object pattern class. Summation of the certain coefficients of this polynomial correspond to the joint probability that all fundamental cycles remain active under random partitioning and relocations based on probability-distribution matrix $\mathbf{P}$, and hence the probability that the object pattern remains active under these conditions. The formal definition of the characteristic polynomial and the proof of this result are presented in Lemma~\ref{lemma:charpol}. The reader may also refer to Example~\ref{example1} in the next subsection to see how the characteristic polynomial is generated for a specific object pattern.

\begin{lemma} (Characteristic polynomial of an object pattern class). \label{lemma:charpol} 
Let $G(V, C, E)$ be the bipartite graph representation of an object, and let $\mathcal{P}(\mathcal{V}, \mathcal{C})$ denote the corresponding object pattern class. Suppose $\mathcal{E}(\mathcal{V}, \mathcal{C})$ is set of the equivalence classes on the edges $E$ induced by the partitions $\mathcal{V}$ and $\mathcal{C}$. Let $S$ represent the cycle basis of the graph $G$.

Define the mapping $\delta : E \times S \rightarrow \{-1, 0, 1\}$ as follows. For any cycle $s \in S$ given by the sequence $(c_1, v_1, c_2, v_2, \dots, c_g, v_g)$ and any edge $e \in E$,
\begin{align}
\delta_{e, s} = 
\begin{cases} 
1 & \text{if } e = e_{c_i,v_i} \text{ for } \exists i \in \lbrace 1,2,\dots,g\rbrace, \\ 
-1 & \text{if } e = e_{c_i,v_{i+1}} \text{ for } \exists i \in \lbrace 1,2,\dots,g\rbrace, \\ 
0 & \text{otherwise}.
\end{cases}
\end{align}
Note that the cycle $s$ is such that $c_i$ connects $v_i$ and $v_{i+1}$, for all $i \in \{1,2,\dots,g\}$, and $v_{g+1} = v_1$.

Let $h(\mathbf{X}, \mathbf{Y}; G \mid \mathcal{V}, \mathcal{C})$ be the characteristic polynomial associated with the object pattern class $\mathcal{P}(\mathcal{V}, \mathcal{C})$, defined as:
\begin{align}
h(\mathbf{X}, \mathbf{Y}; G \mid \mathcal{V}, \mathcal{C}) 
= \prod_{\bar{e} \in \mathcal{E}(\mathcal{V}, \mathcal{C})} 
f \left( \prod_{e \in \bar{e}} \prod_{s \in S} X_s^{\delta_{e, s}}, 
\prod_{e \in \bar{e}} \prod_{s \in S} Y_s^{\delta_{e, s}} \right).
\label{eq:char_func}
\end{align}
Then, the sum of the coefficients of the polynomial in \eqref{eq:char_func} corresponding to monomials with no $X$ components and with $Y$ exponents that are integer multiples of $M$ gives the probability that an object pattern from the class $\mathcal{P}(\mathcal{V}, \mathcal{C})$ remains active in the MD protograph after random partitioning and relocation operations.
\end{lemma}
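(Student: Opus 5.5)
The plan is to mirror the proof of Theorem~\ref{thm: expected number of cycle-6}: express the activeness event as a system of linear conditions on the random labels of the distinct base-matrix entries, then encode each condition by its own pair of formal variables $(X_s,Y_s)$.

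First, by \cite[Theorem~1]{rohith2}, an object pattern in $\mathcal{P}(\mathcal{V},\mathcal{C})$ remains active in the MD protograph if and only if every fundamental cycle $s\in S$ remains active. By Lemma~\ref{lemma:par_rel_con}, cycle $s=(c_1,v_1,\dots,c_g,v_g)$ is active if and only if
\[
\sum_{e\in E}\delta_{e,s}\,\mathbf{K}(e)=0 \quad\text{and}\quad \sum_{e\in E}\delta_{e,s}\,\mathbf{M}(e)\equiv 0 \pmod{M},
\]
where $\mathbf{K}(e),\mathbf{M}(e)$ denote the partition and relocation labels of the base-matrix entry that $e$ maps to. Since equivalent edges share the same entry, we can group terms by class $\bar e\in\mathcal{E}(\mathcal{V},\mathcal{C})$. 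Each condition becomes $\sum_{\bar e} d_{\bar e,s}\,x_{\bar e}=0$ and $\sum_{\bar e} d_{\bar e,s}\,u_{\bar e}\equiv 0 \pmod M$, where $d_{\bar e,s}=\sum_{e\in\bar e}\delta_{e,s}$ and $(x_{\bar e},u_{\bar e})$ is the label pair of class $\bar e$. Distinct classes are distinct entries of $\mathbf{H}^{\textup{g}}$. Under the random design they are therefore independent, with $\mathbb{P}[(x_{\bar e},u_{\bar e})=(x,u)]=p_{x,u}$.

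Next, I would expand the product in \eqref{eq:char_func}. The factor for $\bar e$ is
\[
f\Big(\prod_s X_s^{d_{\bar e,s}},\prod_s Y_s^{d_{\bar e,s}}\Big)=\sum_{x,u}p_{x,u}\prod_s X_s^{d_{\bar e,s}x}Y_s^{d_{\bar e,s}u}.
\]
Multiplying over all classes, $h$ becomes a sum over all joint label assignments $\{(x_{\bar e},u_{\bar e})\}$. The coefficient of each assignment is $\prod_{\bar e}p_{x_{\bar e},u_{\bar e}}$, which by independence is its probability. The attached monomial is $\prod_s X_s^{\sum_{\bar e} d_{\bar e,s}x_{\bar e}}\,Y_s^{\sum_{\bar e} d_{\bar e,s}u_{\bar e}}$. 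Thus the exponent of $X_s$ (resp.\ $Y_s$) is exactly the left-hand side of the partition (resp.\ relocation) condition for $s$.

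Consequently, summing the coefficients of monomials whose $X_s$ exponents all vanish and whose $Y_s$ exponents are all multiples of $M$ collects exactly the probability mass of assignments satisfying all conditions for all $s\in S$. This is the activeness probability, by the same manipulation as in \eqref{eqn: prob6}.

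The main obstacle is the reduction to fundamental cycles together with the bookkeeping of repeated edges. First, I must make sure the result from \cite{rohith2} applies to both the integer condition and the modular condition. I would argue this by linearity: the $\delta$-signed sum over a disjunctive union equals the signed combination of the fundamental-cycle sums, with orientation signs and cancellation of edges traversed twice. The two conditions are therefore preserved under these combinations, and the converse direction holds because basis cycles are themselves cycles of the object. Second, I must justify that an entry visited several times contributes the net exponent $d_{\bar e,s}$ through one shared label. This is exactly why the product in \eqref{eq:char_func} runs over classes, with one factor of $f$ per class, rather than over edges.
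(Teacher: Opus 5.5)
Your proposal follows the paper's proof essentially step for step. You use one label pair $(x_{\bar e},u_{\bar e})$ per equivalence class, expand the product of coupling polynomials into a sum over joint assignments whose coefficients are $\prod_{\bar e}p_{x_{\bar e},u_{\bar e}}$ and whose $X_s$, $Y_s$ exponents are the alternating sums (the paper's $k_s(\mathbf{i})$ and $m_s(\mathbf{j})$), and then collect the assignments meeting every fundamental-cycle condition, taking the object-to-fundamental-cycle reduction from \cite[Theorem~1]{rohith2} exactly as the paper does.

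One caution concerns your optional linearity sketch for that reduction. It works only when every cycle's signed edge vector is an integer combination of the basis vectors, as with spanning-tree fundamental cycles or the theta-shaped two-cycle concatenations the paper targets. A general GF$(2)$ cycle basis gives only rational combinations with odd denominators. That still settles the integer partitioning condition, but not automatically the mod-$M$ relocation condition. Since you and the paper both rely on the cited theorem for this step, the proof of the lemma itself is unaffected.
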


\begin{proof}
For simplicity, we use $ \mathcal{E} $ instead of $\mathcal{E}(\mathcal{V}, \mathcal{C})$ in the proof. We define $ \mathbf{i} \in \lbrace 0,1,\ldots,m\rbrace^{|\mathcal{E}|}$ and  $ \mathbf{j} \in \lbrace 0,1,\ldots,M-1\rbrace^{|\mathcal{E}|}$, where the elements of each of these vectors are denoted by $ i_{\overline{e}} $ and $ j_{\overline{e}} $, respectively, for $\exists \overline{e} \in \mathcal{E} $.

Since all the edges that belong to the same equivalence class correspond to the same entry on the base, partitioning and relocation matrices, these edges are assigned identically for their partitioning and relocations. In particular, all the edges in the equivalence class $\overline{e}$ are assigned with $ i_{\overline{e}} $ for partitioning and with $ j_{\overline{e}} $ for relocations. Then,
\begin{align} \label{eq:char_open}
&h(\mathbf{X},\mathbf{Y}; G \mid \mathcal{V}, \mathcal{C}) = \prod_{\bar{e} \in \mathcal{E}} 
f \left( \prod_{e \in \bar{e}} \prod_{s \in S} X_s^{\delta_{e, s}}, \prod_{e \in \bar{e}} \prod_{s \in S} Y_s^{\delta_{e, s}} \right) = \prod_{\overline{e} \in \mathcal{E}} \left[\sum_{i=0}^m \sum_{j=0}^{M-1} p_{i,j} \cdot \prod_{e\in\overline{e}} \prod_{s\in S} X_s^{\delta_{e, s} i} Y_s^{\delta_{e, s} j} \right] \nonumber \\
&=\sum_{\substack{\mathbf{i} \in \lbrace 0,1,\ldots,m\rbrace^{|\mathcal{E}|}, \\ \mathbf{j} \in \lbrace 0,1,\ldots,M-1\rbrace^{|\mathcal{E}|}}} 
\prod_{\overline{e} \in \mathcal{E}} \left[ p_{i_{\overline{e}},j_{\overline{e}}}  
\prod_{e\in\overline{e}} \prod_{s\in S} X_s^{\delta_{e, s} i_{\overline{e}}} \, Y_s^{\delta_{e, s} j_{\overline{e}}} \right] = \sum_{\mathbf{i}, \mathbf{j}} \left( \prod_{\overline{e} \in \mathcal{E}} p_{i_{\overline{e}},j_{\overline{e}}} \right) 
\prod_{\overline{e} \in \mathcal{E}} \prod_{e \in \overline{e}} \prod_{s \in S} 
X_s^{\delta_{e, s} i_{\overline{e}}} \, Y_s^{\delta_{e, s} j_{\overline{e}}} \nonumber \\
&= \sum_{\mathbf{i}, \mathbf{j}} \left( \prod_{\overline{e} \in \mathcal{E}} p_{i_{\overline{e}},j_{\overline{e}}} \right) 
\prod_{s \in S} \left[ X_s^{\sum_{\overline{e} \in \mathcal{E}} i_{\overline{e}} \sum_{e \in \overline{e}} \delta_{e, s}} 
\, Y_s^{\sum_{\overline{e} \in \mathcal{E}} j_{\overline{e}} \sum_{e \in \overline{e}} \delta_{e, s}} \right] = \sum_{\mathbf{i}, \mathbf{j}} \left( \prod_{\overline{e} \in \mathcal{E}} p_{i_{\overline{e}},j_{\overline{e}}} \right) \prod_{s \in S} X_s^{k_s(\mathbf{i})} \, Y_s^{m_s(\mathbf{j})},
\end{align}
where
\begin{align}
k_s(\mathbf{i}) = \sum_{\overline{e} \in \mathcal{E}} i_{\overline{e}} \sum_{e\in\overline{e}} \delta_{e,s} = \sum_{e\in\overline{e}, \, \overline{e}\in \mathcal{E}} i_{\overline{e}} \, \delta_{e,s},
\label{eq:ks}
\end{align}
 
\begin{align}
m_s(\mathbf{j}) = \sum_{\overline{e} \in \mathcal{E}} j_{\overline{e}} \sum_{e\in\overline{e}} \delta_{e,s} = \sum_{e\in\overline{e}, \, \overline{e}\in \mathcal{E}} j_{\overline{e}} \, \delta_{e,s}.
\label{eq:ms}
\end{align}

Equations \eqref{eq:ks} and \eqref{eq:ms} are referred to as the alternating sums of the assignments $\mathbf{i}$ and $\mathbf{j}$ on cycle $s$ for partitioning and relocations, respectively. It can be observed that for cycle $s$, $k_s(\mathbf{i})$ equals the left-hand side of \eqref{eq:partition_condition}, and $m_s(\mathbf{j})$ equals the left-hand side of \eqref{eq:relocation_condition}. Additionally, the term $\prod_{\overline{e} \in \mathcal{E}} p_{i_{\overline{e}},j_{\overline{e}}}$ appearing in \eqref{eq:char_open} represents the joint probability of the assignments $\mathbf{i}$ and $\mathbf{j}$ for the object pattern, given that partitioning and relocations are performed randomly based on $\mathbf{P}$.

A cycle $s$ remains active after partitioning and relocation operations if and only if $k_s(\mathbf{i})$ equals $0$ and $m_s(\mathbf{j})$ equals an integer multiple of $M$. An object pattern remains active if and only if this is the case for all $s\in S$. 

Let $[h(\mathbf{X}, \mathbf{Y}; G \mid \mathcal{V}, \mathcal{C})]_{\mathbf{a}, \mathbf{b}}$ denote the coefficient of the polynomial corresponding to the monomial in which the degrees of the $X$ (resp., $Y$) variables are given by the vector $\mathbf{a} \in \mathbb{Z}^{|S|}$ (resp., $\mathbf{b} \in \mathbb{Z}^{|S|}$), with each element $a_s$ (resp., $b_s$) representing the power of $X_s$ (resp., $Y_s$). For brevity, we denote the situation that every element of the vector $\mathbf{b}$ is divisible by $M$ as $M \mid \mathbf{b}$. The probability that an object pattern from the class $\mathcal{P}(\mathcal{V}, \mathcal{C})$ remains active in the MD protograph after random partitioning and relocation operations is the summation of probabilities over all partitioning and relocation assignments that satisfy the conditions required for the object pattern to be active. This probability is then:
\begin{align}
\sum_{\substack{\mathbf{i}, \mathbf{j}: \forall s\in S, \\\ k_s(\mathbf{i}) = 0, \ M\mid m_s(\mathbf{j})}}  
\prod_{\overline{e} \in \mathcal{E}} p_{i_{\overline{e}},j_{\overline{e}}}
= \sum_{M \mid \mathbf{b}} \left[\sum_{\mathbf{i}, \mathbf{j}} \left( \prod_{\overline{e} \in \mathcal{E}} p_{i_{\overline{e}},j_{\overline{e}}} \right) \prod_{s \in S} X_s^{k_s(\mathbf{i})} \, Y_s^{m_s(\mathbf{j})}\right]_{\mathbf{0},\mathbf{b}}
= \sum_{M \mid \mathbf{b}} \left[ h(\mathbf{X}, \mathbf{Y}; G \mid \mathcal{V}, \mathcal{C}) \right]_{\mathbf{0},\mathbf{b}}, 
\label{eq:prob}
\end{align}
where the last equality follows from \eqref{eq:char_open}, and it completes the proof.
\end{proof}

In general, $[h(\mathbf{X}, \mathbf{Y}; G \mid \mathcal{V}, \mathcal{C})]_{\mathbf{a}, \mathbf{b}}$ is the probability that the alternating sums of fundamental cycles for partitioning (resp., relocations) take the values in the vector $\mathbf{a}$ (resp., $\mathbf{b}$). Our interest is of course in a specific case, where for partitioning, the only possible option for the alternating sums of fundamental cycles is to equal $0$. For relocations however, the available options are all integer multiples of $M$, since the activeness condition is defined modulo $M$. Hence, we perform the summation over all $\mathbf{b}$, where $M \mid b_i$ for all $i$.

As we discussed, each object pattern in a given class corresponds to an assignment of VNs and CNs to the sets $\{0,1,\dots,\kappa-1\}$ and $\{0,1,\dots,\gamma-1\}$, respectively. However, due to the inherent symmetries of the object patterns, certain permutations of assigned values can result in isomorphic object patterns. These symmetry-preserving transformations are automorphisms of $G$ under the object pattern class $\mathcal{P}(\mathcal{V}, \mathcal{C})$. It follows that any automorphism of $G$ under $\mathcal{P}(\mathcal{V}, \mathcal{C})$ is a graph isomorphism that preserves the equivalence classes denoted by $\mathcal{V}$ and $\mathcal{C}$. Therefore, each automorphism can be represented as a pair of permutations acting on the equivalence classes $\mathcal{V}$ and $\mathcal{C}$. The set of all such automorphisms under a given object pattern class forms a group, and it is formally defined in Definition~\ref{def:automorphism}.

\begin{definition} (Automorphism group of an object under an object pattern class).
\label{def:automorphism}
Let an object be represented by the bipartite graph $G(V, C, E)$, and let $\mathcal{P}(\mathcal{V}, \mathcal{C})$ denote an object pattern class over $G$. An automorphism of $G$ under $\mathcal{P}(\mathcal{V}, \mathcal{C})$ is defined as a pair of bijections $(\pi_V, \pi_C)$, where $\pi_V : V \rightarrow V$ and $\pi_C : C \rightarrow C$, such that the following conditions are satisfied:
\begin{enumerate}
    \item For all $v \in V$ and $c \in C$, we have $e_{c,v} \in E$ if and only if $e_{\pi_C(c),\pi_V(v)} \in E$.
    \item For all $v_1, v_2 \in V$, $v_1 \sim v_2$ if and only if $\pi_V(v_1) \sim \pi_V(v_2)$.
    \item For all $c_1, c_2 \in C$, $c_1 \sim c_2$ if and only if $\pi_C(c_1) \sim \pi_C(c_2)$.
\end{enumerate}

The set of all such automorphisms over $G$ under $\mathcal{P}(\mathcal{V}, \mathcal{C})$ forms a group under composition of permutations. This group is referred to as the \textit{automorphism group} of $G$ under $\mathcal{P}(\mathcal{V}, \mathcal{C})$, and denoted by $\mathrm{Aut}(G \mid \mathcal{V}, \mathcal{C})$.
\end{definition}

We can now use the notions of characteristic polynomial and automorphism group of an object pattern class to introduce the characteristic polynomial of an object by enumerating all patterns in the protograph base matrix. As in the case of object pattern classes, we can use this characteristic polynomial to express the expected number of active objects in the MD-SC protograph after random partitioning and relocations. The exact expression and its proof are provided in Theorem~\ref{thm: char_poly_object}. Here, expectations are $L$-invariant and $M$-invariant.

\begin{theorem} \label{thm: char_poly_object} (Characteristic polynomial of an object).
Given the bipartite graph $G(V, C, E)$ representing an object, let $\mathcal{B}(G)$ denote the set of all object pattern classes of $G$. Define the characteristic polynomial $h(\mathbf{X}, \mathbf{Y}; G)$ of the object $G$ as:
\begin{align}\label{eqn_char_obj}
&h(\mathbf{X}, \mathbf{Y}; G) = \sum_{\mathcal{P}(\mathcal{V}, \mathcal{C}) \in \mathcal{B}(G)} \frac{|\mathcal{V}|! |\mathcal{C}|!}{| \mathrm{Aut}(G \mid \mathcal{V}, \mathcal{C}) |} \binom{\kappa}{|\mathcal{V}|} \binom{\gamma}{|\mathcal{C}|} h(\mathbf{X}, \mathbf{Y}; G \mid \mathcal{V}, \mathcal{C}).
\end{align}
Then, the expression $\sum_{M \mid \mathbf{b}} [h(\mathbf{X}, \mathbf{Y}; G)]_{\mathbf{0},\mathbf{b}}$ equals the expected number of active object patterns of $G$ after partitioning and relocations in the MD-SC protograph.
\end{theorem}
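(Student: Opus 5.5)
We would prove the theorem by linearity of expectation, reducing everything to Lemma~\ref{lemma:charpol}. The expected number of active object patterns in the MD protograph is the sum, over all concrete object patterns $P=(g_V,g_C)$ of $G$ in the all-one base matrix $\mathbf{H}^{\textup{g}}$ (counted up to automorphism, i.e., as distinct embedded subgraphs), of the probability that $P$ remains active. Every valid pattern induces a unique pair of partitions $(\mathcal{V},\mathcal{C})$ and so lies in exactly one class $\mathcal{P}(\mathcal{V},\mathcal{C})\in\mathcal{B}(G)$. Hence the sum splits into one sum per class. By Lemma~\ref{lemma:charpol}, every pattern in a given class remains active with the same probability $\sum_{M\mid\mathbf{b}}[h(\mathbf{X},\mathbf{Y};G\mid\mathcal{V},\mathcal{C})]_{\mathbf{0},\mathbf{b}}$. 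This depends only on the class, because $\mathbf{H}^{\textup{g}}$ is all-one and the entries of $\mathbf{K}$ and $\mathbf{M}$ are drawn i.i.d.\ from $\mathbf{P}$.

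The combinatorial core is then to show that the number of distinct object patterns of $G$ in class $\mathcal{P}(\mathcal{V},\mathcal{C})$ equals
\[
\frac{|\mathcal{V}|!\,|\mathcal{C}|!}{|\mathrm{Aut}(G\mid\mathcal{V},\mathcal{C})|}\binom{\kappa}{|\mathcal{V}|}\binom{\gamma}{|\mathcal{C}|}.
\]
We would argue this in three steps.
\begin{enumerate}
\item An assignment in the class is the same as an injective map from the $|\mathcal{V}|$ VN classes to the $\kappa$ columns, together with an injective map from the $|\mathcal{C}|$ CN classes to the $\gamma$ rows. There are $\binom{\kappa}{|\mathcal{V}|}|\mathcal{V}|!\binom{\gamma}{|\mathcal{C}|}|\mathcal{C}|!$ such pairs. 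The validity conditions of Definition~\ref{def: object_pattern_class} are automatically satisfied for every such pair once the partitions themselves are admissible, i.e., no two neighbors of a node share a class.
\item Two labeled assignments describe the same subgraph configuration in $\mathbf{H}^{\textup{g}}$ exactly when they differ by relabeling $G$ through an automorphism preserving the equivalence classes, namely an element of $\mathrm{Aut}(G\mid\mathcal{V},\mathcal{C})$ as in Definition~\ref{def:automorphism}.
\item Each pattern is therefore counted once per automorphism, and we divide by $|\mathrm{Aut}(G\mid\mathcal{V},\mathcal{C})|$.
\end{enumerate}

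With this count, linearity gives the expected number as the sum over classes of (count)$\times\sum_{M\mid\mathbf{b}}[h(\cdot\mid\mathcal{V},\mathcal{C})]_{\mathbf{0},\mathbf{b}}$. The operator $\sum_{M\mid\mathbf{b}}[\cdot]_{\mathbf{0},\mathbf{b}}$ is linear in the polynomial. This requires all class polynomials to share the same variables $X_s,Y_s$ indexed by a fixed cycle basis $S$ of $G$, which we fix once for the whole object. The operator can therefore be pulled outside the weighted sum, yielding $\sum_{M\mid\mathbf{b}}[h(\mathbf{X},\mathbf{Y};G)]_{\mathbf{0},\mathbf{b}}$ with $h(\mathbf{X},\mathbf{Y};G)$ as in \eqref{eqn_char_obj}.

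The main obstacle is the orbit-counting step, specifically showing that the automorphism group acts freely on labeled assignments. One must show that if an element $(\pi_V,\pi_C)$ fixes an assignment (i.e., $g_V\circ\pi_V=g_V$ and $g_C\circ\pi_C=g_C$) then it is trivial. Otherwise the division by $|\mathrm{Aut}|$ is wrong. We would handle this by interpreting the automorphism group as acting on the quotient structure, i.e., as permutations of the classes in $\mathcal{V}$ and $\mathcal{C}$, as the paper indicates. On that quotient the action on injective class-to-index maps is manifestly free, and the orbit–stabilizer theorem gives the count. As a sanity check, we would verify that the formula recovers the coefficient $6\binom{\gamma}{3}\binom{\kappa}{3}$ of Theorem~\ref{thm: expected number of cycle-6} for the cycle-$6$ with trivial partitions: $3!\,3!/6=6$, using the dihedral automorphism group of order $6$ acting on the bipartite hexagon's three VNs and three CNs.
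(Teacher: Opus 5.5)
Your proposal is correct and follows the paper's proof. It uses the same count $|\mathcal{V}|!\,|\mathcal{C}|!\binom{\kappa}{|\mathcal{V}|}\binom{\gamma}{|\mathcal{C}|}/|\mathrm{Aut}(G\mid\mathcal{V},\mathcal{C})|$ for $|\mathcal{P}(\mathcal{V},\mathcal{C})|$, the same linearity of expectation with Lemma~\ref{lemma:charpol} supplying a class-dependent activation probability, and the same step of pulling $\sum_{M\mid\mathbf{b}}[\cdot]_{\mathbf{0},\mathbf{b}}$ out of the weighted sum. Your reading of the automorphism group as the induced permutation group on the classes is a real improvement over the paper's bare claim that each pattern is counted exactly $|\mathrm{Aut}(G\mid\mathcal{V},\mathcal{C})|$ times: for the cycle-$8$ traversing a cycle-$4$ twice, the group of Definition~\ref{def:automorphism} has order $8$ (the half-turn fixes every class) while the induced group has order $4$, and only the latter reproduces $w_1=\binom{\gamma}{2}\binom{\kappa}{2}$ from Theorem~\ref{thm: expected number of cycle-8}.
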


\begin{proof}
The total number of possible assignments of node indices in $ G $ that result in the partitions specified by $ (\mathcal{V}, \mathcal{C}) $ is given by $|\mathcal{V}|! \, |\mathcal{C}|! \binom{\kappa}{|\mathcal{V}|} \binom{\gamma}{|\mathcal{C}|}$, since the MD-SC code is constructed from an all-one base matrix of size $\gamma\times\kappa$. However, each distinct assignment is counted multiple times, specifically exactly $ |\text{Aut}(G \mid \mathcal{V}, \mathcal{C})| $ times, due to the presence of automorphisms. Therefore, the cardinality of the object pattern class $ \mathcal{P}(\mathcal{V}, \mathcal{C}) $ is obtained by dividing the total number of assignments by the size of the automorphism group, leading to:
\begin{equation}\label{eqn_card_class}
| \mathcal{P}(\mathcal{V}, \mathcal{C}) | = \frac{|\mathcal{V}|! |\mathcal{C}|!}{| \mathrm{Aut}(G \mid \mathcal{V}, \mathcal{C}) |} \binom{\kappa}{|\mathcal{V}|} \binom{\gamma}{|\mathcal{C}|}.
\end{equation}

For any object pattern $P$ of $G$, let $X_P$ be a Bernoulli random variable indicating whether $P$ is active. Define the random variable $X = \sum_P X_P$, which is the total number of active patterns across all object patterns $P$ of $G$ in the protograph. Then,
\begin{align}
&\mathbb{E}[X] = \sum_{\substack{P \in \mathcal{P}(\mathcal{V}, \mathcal{C}), \\ \mathcal{P}(\mathcal{V}, \mathcal{C}) \in \mathcal{B}(G)}} \mathbb{E}[X_P] =  
\sum_{\substack{P \in \mathcal{P}(\mathcal{V}, \mathcal{C}), \\ \mathcal{P}(\mathcal{V}, \mathcal{C}) \in \mathcal{B}(G)}} \mathbb{P}[X_P = 1] = \sum_{\substack{P \in \mathcal{P}(\mathcal{V}, \mathcal{C}), \\ \mathcal{P}(\mathcal{V}, \mathcal{C}) \in \mathcal{B}(G)}} 
\sum_{M \mid \mathbf{b}} [h(\mathbf{X}, \mathbf{Y}; G \mid \mathcal{V}, \mathcal{C})]_{\mathbf{0},\mathbf{b}} \nonumber \\
&= \sum_{\mathcal{P}(\mathcal{V}, \mathcal{C}) \in \mathcal{B}(G)} | \mathcal{P}(\mathcal{V}, \mathcal{C}) | \sum_{M \mid \mathbf{b}} \left[h(\mathbf{X}, \mathbf{Y}; G \mid \mathcal{V}, \mathcal{C})\right]_{0,\mathbf{b}} = \sum_{M \mid \mathbf{b}} \; \sum_{\mathcal{P}(\mathcal{V}, \mathcal{C}) \in \mathcal{B}(G)} | \mathcal{P}(\mathcal{V}, \mathcal{C}) | \; \left[h(\mathbf{X}, \mathbf{Y}; G \mid \mathcal{V}, \mathcal{C})\right]_{\mathbf{0},\mathbf{b}} \nonumber \\
&= \sum_{M \mid \mathbf{b}} [h(\mathbf{X}, \mathbf{Y}; G)]_{\mathbf{0},\mathbf{b}},
\end{align}
where the last equality follows from \eqref{eqn_char_obj} and \eqref{eqn_card_class}, and it completes the proof.
\end{proof}

\comment{
As a final step, we derive the gradient of the characteristic polynomial of a given object pattern class, as well as the probability that any instance of this class remains active with respect to $\mathbf{p}^{\textup{con}}$.

\begin{lemma} For an object pattern class $\mathcal{P}(\mathcal{V}, \mathcal{C})$ with the associated bipartite graph $G(V, C, E)$, and the equivalence classes induced by $\mathcal{V}$ and $\mathcal{C}$ denoted by $\mathcal{E}(\mathcal{V}, \mathcal{C})$, we can express the partial derivative of $\left[ h(\mathbf{X}, \mathbf{Y}; G \mid \mathcal{V}, \mathcal{C}) \right]_{\mathbf{0},\mathbf{b}}$ with respect to the $(i,j)$-th entry of $\mathbf{P}$, denoted by $p_{i,j}$, as derived in \eqref{eqn: general_gradient1} and \eqref{eqn: general_gradient2}. The gradient with respect to $\mathbf{p}^{\textup{con}}$ can be then constructed by evaluating these partial derivatives for all entries.

\begin{align} \label{eqn: general_gradient1}
& \frac{\partial}{\partial p_{i,j}} h(\mathbf{X}, \mathbf{Y}; G \mid \mathcal{V}, \mathcal{C}) = \sum_{\overline{e}\in\mathcal{E}} \left(\left( \frac{\partial}{\partial p_{i,j}} \ f \left( \prod_{e \in \bar{e}} \prod_{s \in S} X_s^{\delta_{e, s}}, \prod_{e \in \bar{e}} \prod_{s \in S} Y_s^{\delta_{e, s}} \right) \right) \cdot \prod_{\overline{e}' \in \mathcal{E}\setminus \{\overline{e}\}} f \left( \prod_{e \in \bar{e}} \prod_{s \in S} X_s^{\delta_{e, s}}, \prod_{e \in \bar{e}} \prod_{s \in S} Y_s^{\delta_{e, s}} \right) \right) \nonumber \\
&=\sum_{\overline{e}\in\mathcal{E}} \left(\left( \frac{\partial}{\partial p_{i,j}} \ f\left(\prod_{s\in S} X_s^{\sum_{e\in\overline{e}}\delta_{e,s}}, \prod_{s\in S} Y_s^{\sum_{e\in\overline{e}}\delta_{e,s}} \right) \right) \cdot \prod_{\overline{e}' \in \mathcal{E}\setminus \{\overline{e}\}} f \left( \prod_{e \in \bar{e}} \prod_{s \in S} X_s^{\delta_{e, s}}, \prod_{e \in \bar{e}} \prod_{s \in S} Y_s^{\delta_{e, s}} \right) \right) \nonumber \\
&=\sum_{\overline{e}\in\mathcal{E}} \left(\prod_{s\in S}X_s^{i\sum_{e\in\overline{e}}\delta_{e,s}} Y_s^{j\sum_{e\in\overline{e}}\delta_{e,s}} \cdot \prod_{\overline{e}' \in \mathcal{E}\setminus \{\overline{e}\}} f \left( \prod_{e \in \bar{e}} \prod_{s \in S} X_s^{\delta_{e, s}}, \prod_{e \in \bar{e}} \prod_{s \in S} Y_s^{\delta_{e, s}} \right) \right)
\end{align}

\begin{align} \label{eqn: general_gradient2}
&\frac{\partial}{\partial p_{i,j}} \left[ h(\mathbf{X}, \mathbf{Y}; G \mid \mathcal{V}, \mathcal{C}) \right]_{\mathbf{0},\mathbf{b}} =\left[\sum_{\overline{e}\in\mathcal{E}} \left(\prod_{s\in S}X_s^{i\sum_{e\in\overline{e}}\delta_{e,s}} Y_s^{j\sum_{e\in\overline{e}}\delta_{e,s}} \cdot \prod_{\overline{e}' \in \mathcal{E}\setminus \{\overline{e}\}} f \left( \prod_{e \in \bar{e}} \prod_{s \in S} X_s^{\delta_{e, s}}, \prod_{e \in \bar{e}} \prod_{s \in S} Y_s^{\delta_{e, s}} \right) \right)\right]_{\mathbf{0},\mathbf{b}}
\end{align}
\end{lemma}
}

\subsection{Cycle Concatenation Analysis}

Our main goal is to reduce the number of absorbing and trapping sets in the Tanner graph of the MD-SC code, since these objects dominate the error profile of LDPC codes in the error-floor region~\cite{absorbing, richardson_error_floor}. These objects are also harmful in the waterfall region as they create decoding dependencies and they subsume (or they are substructures of) low-weight codewords. Instead of developing an optimization scheme that targets every possible absorbing set, which could be inefficient and is highly complicated, we concentrate on certain elementary structures formed by concatenating two short cycles (specifically, cycles of lengths $6$ and $8$), as these commonly occur as subgraphs within the most problematic absorbing sets.

Since absorbing sets have a cycle basis for each, focusing on cycles alone might initially seem logical. However, as discussed earlier, isolated cycles that are not directly connected to other cycles may not be very problematic, especially when the VN degrees are greater than three. Therefore, allocating all degrees of freedom offered by the partitioning and relocation operations solely to cycles is not efficient, which is also demonstrated in \cite{GRADE} and \cite{rohith2}.

Based on this discussion, we construct our optimization problem over configurations obtained by either concatenating two cycles-$6$, one cycle-$6$ and one cycle-$8$, or two cycles-$8$. Our main focus for these configurations is on the case where the two cycles concatenate through a chain of VN-CN-VN, since the population of these is higher compared with other types of concatenations such as VN-CN-VN-CN-VN, and they appear more frequently in detrimental absorbing sets. We refer to these objects as $6$-$6$, $6$-$8$, and $8$-$8$ configurations, respectively, where the respective bipartite graphs are also shown in Fig.~\ref{fig:concat}. Note that throughout the paper in the bipartite graph drawings, circles represent VNs and squares represent CNs. It should be also observed that for $6$-$6$, $6$-$8$, and $8$-$8$ configurations, concatenated cycles also form a basis for these objects.

\begin{figure}[hb!]
\vspace{-0.5em}
\centering
\includegraphics[width=0.5\textwidth]{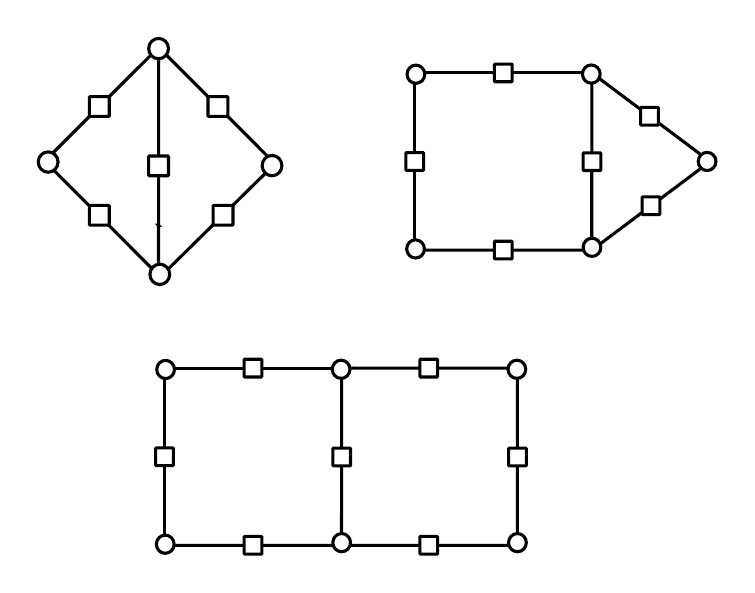} \vspace{-0.8em}
\caption{Bipartite graphs of $6$-$6$ (top left), $6$-$8$ (top right), and $8$-$8$ (bottom) configurations.} \vspace{-1.0em}
\label{fig:concat}
\end{figure}

\begin{example} \label{example1}
We now focus on an example object shown in Fig.~\ref{fig:68}, which depicts the bipartite graph of the object alongside the matrix representation of a pattern that belongs to one of its object pattern classes. In this example, the VNs and CNs of the object are partitioned into five equivalence classes, labeled as $v_i$ and $c_i$, respectively, for $i \in \{1,2,\dots,5\}$. Each VN class is assigned to a column and each CN class is assigned to a row in the matrix representation. All edges in the bipartite graph that belong to the same equivalence class correspond to the same entry in the matrix representation.

For each fundamental cycle, we present its cycle candidate in the matrix representation to illustrate how the corresponding cycle pattern is traversed. We use the variables $(X_1, Y_1)$ to represent the first fundamental cycle (a cycle-$8$ in this case), which is highlighted in red, and $(X_2, Y_2)$ to represent the second fundamental cycle (a cycle-$6$), which is highlighted in blue.

To construct the characteristic polynomial, we use the coupling polynomial $f(X,Y)$ for each equivalence class, that is, for each unique entry in the matrix representation, taking into account the two fundamental cycles. We label each edge on the bipartite graph with monomials $X_1^{i_1} X_2^{i_2}$ and $Y_1^{i_1} Y_2^{i_2}$, where the exponents $i_1$ and $i_2$ are determined based on the presence of the corresponding edge in the fundamental cycles. In particular, if an edge does not participate in cycle $s$, then $i_s = 0$; otherwise, the powers alternate between $1$ and $-1$ for consecutive edges of that cycle.

The inputs of the coupling polynomial are then evaluated by multiplying the $X_1^{i_1} X_2^{i_2}$ and $Y_1^{i_1} Y_2^{i_2}$ terms for all edges belonging to same equivalence class. The final expression for the characteristic polynomial is obtained by multiplying the coupling polynomials of all equivalence classes, i.e., of all unique matrix entries.

Following this procedure and referring to Fig.~\ref{fig:68}, where the powers of the $X$ variables are indicated for each edge in the bipartite graph and for each matrix entry ($Y$ powers are omitted for brevity), the characteristic polynomial corresponding to this object pattern class is given by $f(X_1 X_2,Y_1 Y_2) f(X_1^{-1} X_2^{-1},Y_1^{-1} Y_2^{-1}) f^3(X_1,Y_1) f^3(X_1^{-1},Y_1^{-1}) f^2(X_2,Y_2) f^2(X_2^{-1},Y_2^{-1})$.

\end{example}

\begin{figure*}[ht!]
\centering
\includegraphics[width=0.9\textwidth]{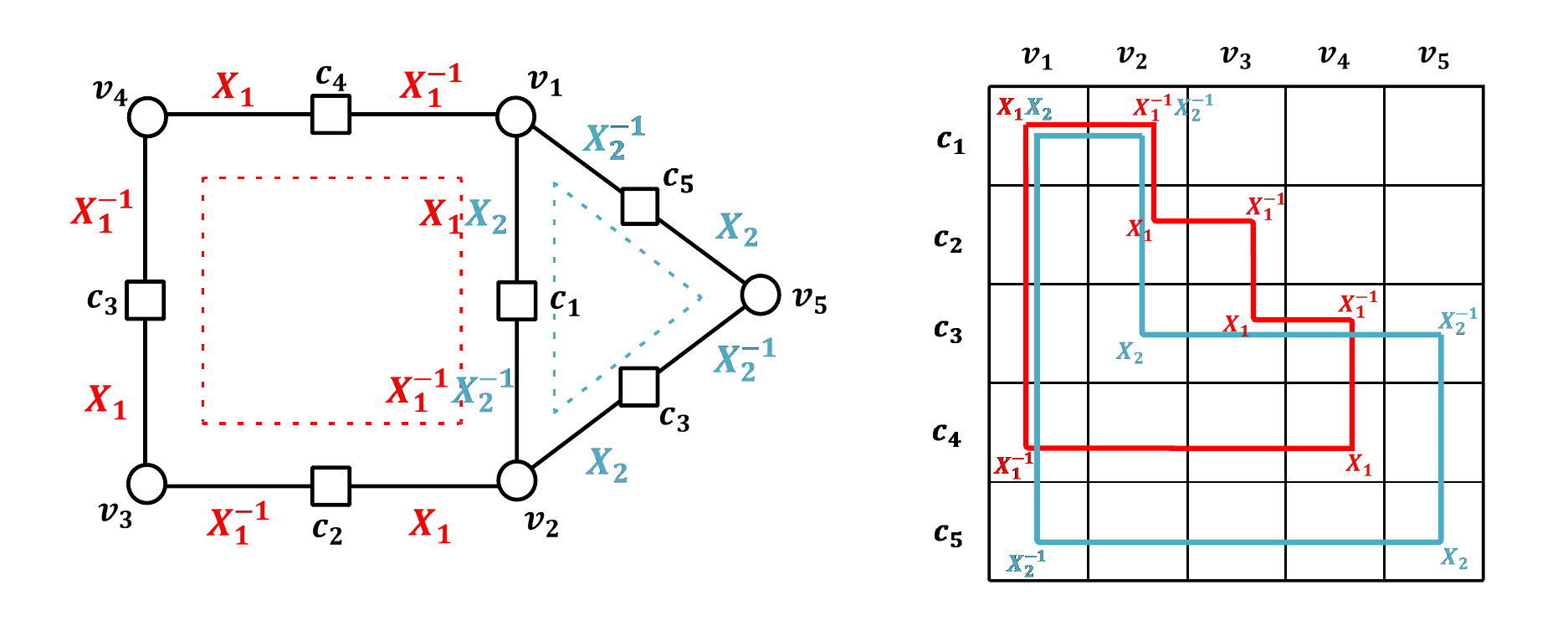}
\vspace{-0.8em}
\caption{Bipartite graph of a $6$-$8$ configuration (left) and matrix representation of one of its object patterns (right).}
\vspace{-1.0em}
\label{fig:68}
\end{figure*}

\begin{definition} (Dominant object pattern classes).
We denote the graphs corresponding to the $6$-$6$, $6$-$8$, and $8$-$8$ configurations by $G_{6,6}$, $G_{6,8}$, and $G_{8,8}$, respectively. An object pattern class $\mathcal{P}(\mathcal{V},\mathcal{C})$ with the set of equivalence classes $\mathcal{E}(\mathcal{V},\mathcal{C})$ for the object $G_{2k,2l}(V,C,E)$, where $V$, $C$, and $E$ denote the sets of VNs, CNs, and edges, respectively, and $(k,l) \in \{(3,3), (3,4), (4,4)\}$, is called a dominant object pattern class if it satisfies that $|E| = |\mathcal{E}(\mathcal{V},\mathcal{C})|$. This condition means that none of the edges in $E$ are equivalent to each other under any element of $\mathcal{E}(\mathcal{V},\mathcal{C})$, or equivalently, each edge is mapped to a distinct entry of the matrix representation of this object pattern class.
Lastly, we denote the set of all dominant object pattern classes for the object $G_{2k,2l}$ by $\mathcal{D}(G_{2k,2l})$. Here, $G_{2k,2l}$ is used to refer to the graph or to the object interchangeably.
\end{definition}

Dominant object pattern classes are central to our analysis of these configurations and to the optimization problem that we construct. The reason behind this choice, as well as the justification for naming these pattern classes ``dominant,'' will be explained later in this section. It can be observed that the object pattern class in Fig.~\ref{fig:68} is also dominant. 

Note that for a given object, there could be multiple pattern classes with the same characteristic polynomial. One example to this is the case of cycle-$8$, where patterns $P_4$, $P_5$, $P_6$, and $P_7$ (shown in Fig.~\ref{fig: cycle8_patterns}) share the characteristic polynomial $f^4(X,Y)f^4(X^{-1},Y^{-1})$. Another example is the case of $6$-$6$, $6$-$8$, and $8$-$8$ configurations, where dominant object pattern classes share the same characteristic polynomial. Proof of this claim with the exact expression of characteristic polynomial is given in Lemma~\ref{lem:dom_char}.

\begin{lemma} \label{lem:dom_char}
For the $2k$-$2l$ configurations, $(k,l) \in \{(3,3), (3,4), (4,4)\}$, all dominant object pattern classes of this object share the following characteristic polynomial:
\begin{align} \label{eq:char_pol_dom}
h(\mathbf{X}, \mathbf{Y}; G_{2k,2l} \mid \mathcal{V}, \mathcal{C}) = f(X_1 X_2,Y_1 Y_2) f(X_1^{-1} X_2^{-1},Y_1^{-1} Y_2^{-1}) f^{k-1}(X_1,Y_1) f^{k-1}(X_1^{-1},Y_1^{-1})f^{l-1}(X_2,Y_2) f^{l-1}(X_2^{-1},Y_2^{-1}).
\end{align}
\end{lemma}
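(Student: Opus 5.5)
The plan is to apply the definition of the characteristic polynomial in Lemma~\ref{lemma:charpol} directly. In a dominant object pattern class, every equivalence class $\bar{e} \in \mathcal{E}(\mathcal{V},\mathcal{C})$ is a singleton $\{e\}$. The product over $\bar{e}$ in \eqref{eq:char_func} therefore becomes a product over the edges $e \in E$. Each factor is $f\big(\prod_{s\in S} X_s^{\delta_{e,s}}, \prod_{s \in S} Y_s^{\delta_{e,s}}\big)$, and it depends only on the vector $(\delta_{e,s_1},\delta_{e,s_2})$. The conditions defining the class ($\mathcal{V}$, $\mathcal{C}$, and the particular matrix representation) then drop out entirely. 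Once we fix the cycle basis $S=\{s_1,s_2\}$, consisting of the cycle-$2k$ and the cycle-$2l$ concatenated through the VN-CN-VN chain, the polynomial is determined by $G_{2k,2l}$ alone. This already shows that all dominant classes share one polynomial, and it remains to compute it.

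First, I will describe the graph $G_{2k,2l}$ explicitly from Fig.~\ref{fig:concat}. It has a cycle $s_1$ of length $2k$ with edges alternating $\delta=\pm1$, and a cycle $s_2$ of length $2l$ with the same alternation. The two cycles share a path of the form VN-CN-VN, that is, two edges through a common CN. The shared edges lie in both cycles, and every other edge lies in exactly one. The cycle-$2k$ therefore contributes $2k-2$ private edges. The alternating labeling gives $k-1$ of them $\delta=+1$ and $k-1$ of them $\delta=-1$. These yield the factor $f^{k-1}(X_1,Y_1)f^{k-1}(X_1^{-1},Y_1^{-1})$, and the cycle-$2l$ similarly yields $f^{l-1}(X_2,Y_2)f^{l-1}(X_2^{-1},Y_2^{-1})$.

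The main step is the treatment of the two shared edges. Here the traversal directions of $s_1$ and $s_2$ have to be chosen compatibly. Along the common path, the two shared edges are consecutive in both cycles. Hence in each cycle they receive opposite signs, $(+1,-1)$ for $s_1$. I will orient $s_2$, or choose where its alternation starts, so that its signs on these two edges are also $(+1,-1)$, in the same order. This is permissible because reversing or reindexing a fundamental cycle only flips all of its $\delta$'s. The substitution $X_2\to X_2^{-1}, Y_2\to Y_2^{-1}$ preserves the activeness condition: the set of $(\mathbf{0},\mathbf{b})$ with $M\mid \mathbf{b}$ is invariant under negating $b_2$. The shared edges then contribute $f(X_1X_2,Y_1Y_2)f(X_1^{-1}X_2^{-1},Y_1^{-1}Y_2^{-1})$.

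Multiplying all the factors gives \eqref{eq:char_pol_dom}. The edge counts check out: $2+2(k-1)+2(l-1)=2k+2l-2=|E|$. The computation agrees with Example~\ref{example1} in the case $(k,l)=(4,3)$, up to relabeling which cycle is indexed by $1$.

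The step I expect to be the main obstacle is justifying the sign convention on the shared edges. The definition of $\delta$ depends on how each basis cycle is written as a sequence $(c_1,v_1,\dots)$. I would therefore state explicitly that the basis cycles are written with a consistent orientation. An opposite orientation would give $f(X_1X_2^{-1},\cdot)f(X_1^{-1}X_2,\cdot)$ instead, and I would note that this is equivalent for the probability in Lemma~\ref{lemma:charpol}.
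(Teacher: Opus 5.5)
Your proposal is correct and follows the same route as the paper. Dominance makes every edge class a singleton, so $h$ reduces to $\prod_{e\in E} f\bigl(\prod_{s} X_s^{\delta_{e,s}},\prod_{s} Y_s^{\delta_{e,s}}\bigr)$. The $2k-2$ and $2l-2$ private edges then give the $f^{k-1}$ and $f^{l-1}$ factors. The two shared edges at the common CN, with $s_1,s_2$ oriented so that they carry $(\delta_{e,s_1},\delta_{e,s_2})=(1,1)$ and $(-1,-1)$, give $f(X_1X_2,Y_1Y_2)f(X_1^{-1}X_2^{-1},Y_1^{-1}Y_2^{-1})$. The paper simply fixes this orientation by assigning the $\delta$ values ``as was done in Example~\ref{example1}'', so your remark that reversing $s_2$ negates every $\delta_{e,s_2}$ without changing the activeness probability clarifies a convention the paper leaves implicit.
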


\begin{proof}
The object $G_{2k,2l}$ is constructed by concatenating a cycle-$2k$ (denoted by $s_1$) and a cycle-$2l$ (denoted by $s_2$), connecting them through a chain of VN-CN-VN. These cycles are also the fundamental cycles of the object. Due to this construction, the fundamental cycles share exactly two edges on the bipartite graph, while the remaining edges are divided between $2k-2$ edges that belong solely to the first cycle and $2l-2$ edges that belong solely to the second cycle.

By the definition of dominant object pattern classes, the characteristic polynomial simplifies to:
\begin{align}
h(\mathbf{X}, \mathbf{Y}; G_{2k,2l} \mid \mathcal{V}, \mathcal{C}) = \prod_{e \in E} 
f \left( \prod_{s \in S} X_s^{\delta_{e, s}}, \prod_{s \in S} Y_s^{\delta_{e, s}} \right),
\end{align}
since no two distinct edges are equivalent in a dominant object pattern class.

For the two shared edges between the fundamental cycles, we assign $\delta$ values such that for one edge, $\delta_{e,s_1} = \delta_{e,s_2} = 1$, and for the other edge, $\delta_{e,s_1} = \delta_{e,s_2} = -1$, as was done in Example~\ref{example1}. For the remaining $2k-2$ edges belonging to the first cycle, we assign $\delta_{e,s_2} = 0$, and set $\delta_{e,s_1} = 1$ for half of them (i.e., $k-1$ edges) and $\delta_{e,s_1} = -1$ for the other half. Similarly, for the $2l-2$ edges belonging to the second cycle, we assign $\delta_{e,s_1} = 0$, and set $\delta_{e,s_2} = 1$ for half of them (i.e., $l-1$ edges) and $\delta_{e,s_2} = -1$ for the other half. With these degree assignments, the resulting characteristic polynomial matches the expression given in \eqref{eq:char_pol_dom}.
\end{proof}

\begin{lemma}
We denote the probability that any dominant object pattern of the object $G_{2k,2l}$ remains active after partitioning and relocation by $P_{2k,2l}^{\mathcal{D}}(\mathbf{p}^{\textup{con}})$, for a given distribution matrix $\mathbf{P}$. From Lemma~\ref{lemma:charpol} and Lemma~\ref{lem:dom_char}, it follows that:
\begin{align}
&P_{2k,2l}^{\mathcal{D}}(\mathbf{p}^{\textup{con}})= \nonumber \\
&\sum_{M \mid b_1, M \mid b_2} \left[ f(X_1 X_2,Y_1 Y_2) f(X_1^{-1} X_2^{-1},Y_1^{-1} Y_2^{-1}) f^{k-1}(X_1,Y_1) f^{k-1}(X_1^{-1},Y_1^{-1}) f^{l-1}(X_2,Y_2) f^{l-1}(X_2^{-1},Y_2^{-1}) \right]_{0,0,b_1,b_2}.
\end{align}
Here, $[\cdot]_{a_1, a_2, b_1, b_2}$ denotes the coefficient of the polynomial corresponding to the monomial in which the degrees of the variables $X_1$, $X_2$, $Y_1$, and $Y_2$ are given by $a_1$, $a_2$, $b_1$, and $b_2 \in \mathbb{Z}$, respectively.

Additionally, the gradient of $P_{2k,2l}^{\mathcal{D}}(\mathbf{p}^{\textup{con}})$ with respect to $\mathbf{p}^{\textup{con}}$ is expressed as:
\vspace{-0.1em}\begin{align} \label{eq:dom_grad}
&\nabla_{\mathbf{p}^{\textup{con}}} P_{2k,2l}^{\mathcal{D}}(\mathbf{p}^{\textup{con}})=  \nonumber \\
&\sum_{M \mid b_1, M \mid b_2} \hspace{-1.0em} \left[ \nabla_{\mathbf{p}^{\textup{con}}} \big ( f(X_1 X_2,Y_1 Y_2) f(X_1^{-1} X_2^{-1},Y_1^{-1} Y_2^{-1}) f^{k-1}(X_1,Y_1) f^{k-1}(X_1^{-1},Y_1^{-1}) f^{l-1}(X_2,Y_2) f^{l-1}(X_2^{-1},Y_2^{-1}) \big ) \right]_{0,0,b_1,b_2} \nonumber \\
&= 2\sum_{M \mid b_1, M \mid b_2} \Big ( [f(X_1 X_2,Y_1 Y_2) f^{k-1}(X_1,Y_1) f^{k-1}(X_1^{-1},Y_1^{-1}) f^{l-1}(X_2,Y_2) f^{l-1}(X_2^{-1},Y_2^{-1}) \mathbf{v}_{1,1}^{\textup{con}}]_{0,0,b_1,b_2} \nonumber \\
&+ [(k-1) f(X_1 X_2,Y_1 Y_2) f(X_1^{-1} X_2^{-1},Y_1^{-1} Y_2^{-1}) f^{k-1}(X_1,Y_1) f^{k-2}(X_1^{-1},Y_1^{-1}) f^{l-1}(X_2,Y_2) f^{l-1}(X_2^{-1},Y_2^{-1}) \mathbf{v}_{1,0}^{\textup{con}}]_{0,0,b_1,b_2} \nonumber \\
&+ [(l-1) f(X_1 X_2,Y_1 Y_2) f(X_1^{-1} X_2^{-1},Y_1^{-1} Y_2^{-1}) f^{k-1}(X_1,Y_1) f^{k-1}(X_1^{-1},Y_1^{-1}) f^{l-1}(X_2,Y_2) f^{l-2}(X_2^{-1},Y_2^{-1}) \mathbf{v}_{0,1}^{\textup{con}}]_{0,0,b_1,b_2} \Big ).
\end{align}
Here, $\mathbf{v}_{i,j}^{\textup{con}}$ is the vector of length $(m+1)M$ obtained by concatenating the rows of the following $(m+1) \times M$ matrix (of monomials)
\begin{align}
\mathbf{V}_{i,j} = \begin{bmatrix}
1 & Y_1^{-i} Y_2^{-j} & \dots & Y_1^{-(M-1)i} Y_2^{-(M-1)j} \\
X_1^{-i} X_2^{-j} & X_1^{-i} X_2^{-j} Y_1^{-i} Y_2^{-j} & \dots & X_1^{-i} X_2^{-j} Y_1^{-(M-1)i} Y_2^{-(M-1)j} \\
\vdots & \vdots & \ddots & \vdots \\
X_1^{-mi} X_2^{-mj} & X_1^{-mi} X_2^{-mj} Y_1^{-i} Y_2^{-j} & \dots & X_1^{-mi} X_2^{-mj} Y_1^{-(M-1)i} Y_2^{-(M-1)j}
\end{bmatrix}.
\end{align}
\end{lemma}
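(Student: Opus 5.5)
The first formula needs no new computation. By Lemma~\ref{lem:dom_char}, every dominant object pattern class of $G_{2k,2l}$ has the same characteristic polynomial \eqref{eq:char_pol_dom}. Lemma~\ref{lemma:charpol} says that the activeness probability of any pattern in a class equals the sum of the coefficients of that class's characteristic polynomial at monomials with $X$-exponent vector $\mathbf{0}$ and $Y$-exponents divisible by $M$. The cycle basis here is $S=\{s_1,s_2\}$, so the exponent vectors have two components. Substituting \eqref{eq:char_pol_dom} into \eqref{eq:prob} therefore gives the stated expression for $P_{2k,2l}^{\mathcal{D}}(\mathbf{p}^{\textup{con}})$. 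In particular the value is the same for every dominant pattern.

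For the gradient, I will follow the proof of Lemma~\ref{lemma: lagrange6}. The plan has four steps.

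\textbf{Step 1.} Extracting a coefficient and summing over $b_1,b_2$ are linear operations in $\mathbf{p}^{\textup{con}}$, so the gradient passes inside $\sum_{M\mid b_1,M\mid b_2}[\cdot]_{0,0,b_1,b_2}$.

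\textbf{Step 2.} Apply the product rule to the six groups of factors. Since $f(X^aY^b\ldots)$ is linear in $\mathbf{P}$, we have $\nabla_{\mathbf{p}^{\textup{con}}} f(Z,W)$ equal to the vector of monomials $Z^{i}W^{j}$ in the row-concatenated order. Taking $Z=X_1^{-\alpha}X_2^{-\beta}$ and $W=Y_1^{-\alpha}Y_2^{-\beta}$, this vector is exactly $\mathbf{v}_{\alpha,\beta}^{\textup{con}}$. Hence:
\begin{itemize}
\item the factor $f(X_1^{-1}X_2^{-1},Y_1^{-1}Y_2^{-1})$ contributes a term with $\mathbf{v}_{1,1}^{\textup{con}}$;
\item $f^{k-1}(X_1^{-1},Y_1^{-1})$ contributes $(k-1)f^{k-2}(\cdot)\,\mathbf{v}_{1,0}^{\textup{con}}$;
\item $f^{l-1}(X_2^{-1},Y_2^{-1})$ contributes $(l-1)f^{l-2}(\cdot)\,\mathbf{v}_{0,1}^{\textup{con}}$.
\end{itemize}
The three factors with positive exponents produce the analogous terms with $\mathbf{v}_{-1,-1}^{\textup{con}}$, $\mathbf{v}_{-1,0}^{\textup{con}}$ and $\mathbf{v}_{0,-1}^{\textup{con}}$.

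\textbf{Step 3.} Pair each positive-exponent term with its negative-exponent partner. The key fact is a two-variable-pair version of the polynomial-symmetry identity used in Lemma~\ref{lemma: lagrange6}: for any Laurent polynomial $r$,
\[
\sum_{M\mid b_1,M\mid b_2}\big[r(\mathbf{X},\mathbf{Y})\,\nabla f(\mathbf{X}^{-\mathbf{a}},\mathbf{Y}^{-\mathbf{a}})\big]_{0,0,b_1,b_2}=\sum_{M\mid b_1,M\mid b_2}\big[r(\mathbf{X}^{-1},\mathbf{Y}^{-1})\,\nabla f(\mathbf{X}^{\mathbf{a}},\mathbf{Y}^{\mathbf{a}})\big]_{0,0,b_1,b_2}.
\]
It holds because the substitution $X_s\mapsto X_s^{-1}$, $Y_s\mapsto Y_s^{-1}$ maps the coefficient at $(0,0,b_1,b_2)$ to the coefficient at $(0,0,-b_1,-b_2)$. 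The index set $\{M\mid b_1, M\mid b_2\}$ is invariant under $b\mapsto -b$, so the two sums agree.

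\textbf{Step 4.} The remaining product $r$ is invariant under full inversion once the differentiated factor is removed and replaced by its partner, since the characteristic polynomial is itself inversion-symmetric. So each paired term equals its partner, and the sum of each pair is twice one of them. This produces the factor $2$ and the three bracketed terms of \eqref{eq:dom_grad}.

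The main obstacle is making Step~4 precise. For each of the three pairs, I must check that after removing one copy of, say, $f(X_1X_2,Y_1Y_2)$, the remaining product is the inversion image of the product left after removing one copy of $f(X_1^{-1}X_2^{-1},Y_1^{-1}Y_2^{-1})$. The same check is needed for the $f^{k-1}$ and $f^{l-1}$ groups, so that the counts $k-1$ and $l-1$ are preserved. This is a direct bookkeeping check on the exponents, and the rest is routine.
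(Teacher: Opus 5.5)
Your proposal is correct and follows the route the paper intends: the paper simply calls the proof self-explanatory, and the intended argument is yours, namely Lemma~\ref{lemma:charpol} together with Lemma~\ref{lem:dom_char} for the probability, followed by the product rule and the inversion-symmetry pairing from the proof of Lemma~\ref{lemma: lagrange6}, which gives the factor $2$ and the three cofactor terms with $\mathbf{v}_{1,1}^{\textup{con}}$, $\mathbf{v}_{1,0}^{\textup{con}}$, and $\mathbf{v}_{0,1}^{\textup{con}}$. The bookkeeping you flag in Step~4 does go through: under $X_s\mapsto X_s^{-1}$, $Y_s\mapsto Y_s^{-1}$, each negative-exponent term (cofactor times gradient vector) maps exactly onto its positive-exponent partner, since the paired factors are inversion images of each other and the multiplicities $k-1$ and $l-1$ are the same on both sides.
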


\begin{proof}
The lemma is self-explanatory given all the previous results and discussions.
\end{proof}

We now proceed to analyze the expected number of active $6$-$6$, $6$-$8$, and $8$-$8$ configurations after partitioning and relocations. We first need to determine the cardinality of each possible object pattern class. This is accomplished by examining the matrix representations of all of the relevant object pattern classes. 

Since these configurations consist of two cycles connected through a VN-CN-VN chain, in the matrix representation of all their pattern classes, the cycle candidates of the fundamental cycles (i.e., the traversal paths of the fundamental cycles over matrix entries) share two entries located in the same row (corresponding to the shared CN) but in different columns (corresponding to the adjacent VNs). Furthermore, in both cycle candidates, these entries are directly connected in the matrix. An example of this can be seen in Fig.~\ref{fig:68}, where the entries labeled with $(c_1,v_1)$ and $(c_1,v_2)$ are shared between the two cycle candidates marked in red and blue. 

Another condition is that the two cycle candidates must not share any additional entry directly connected to these shared entries. Otherwise, the cycles would at least share an additional CN at the connection point, forming a CN-VN-CN-VN chain, which is not the configuration we target. We exclude such cases from our analysis, as their population is relatively low and they do not frequently appear among the most detrimental absorbing sets.

We group each object pattern class according to the triplet $( |\mathcal{E}|,|\mathcal{V}|, |\mathcal{C}|)$, which represents the number of spanned entries, columns, and rows in the matrix representation, respectively. To count these configurations, we developed a software algorithm for each $2k$-$2l$ configuration, $(k,l) \in \{(3,3), (3,4), (4,4)\}$. This algorithm lists all cycles of length $2k$ and length $2l$ in an all-one submatrix of size $|\mathcal{C}| \times |\mathcal{V}|$ for each possible $(|\mathcal{C}|, |\mathcal{V}|)$ pair. It then iteratively checks for each distinct cycle pair whether it satisfies the conditions described above, and whether all the columns and rows of this all-one submatrix are spanned by this pair of cycles. If a pair meets the required conditions, then the corresponding configuration on the matrix constitutes a valid matrix representation of the desired configuration. The algorithm counts and groups these instances according to their number of rows, columns, and total entries (edges).

It can be observed that these counts correspond to the number of object patterns associated with the $( |\mathcal{E}|,|\mathcal{V}|, |\mathcal{C}|)$ triplets generated from a submatrix of size $|\mathcal{C}| \times |\mathcal{V}|$. Consequently, the actual number of these object patterns in the base matrix is obtained by multiplying these counts by the combinatorial factor $\binom{\kappa}{|\mathcal{V}|} \binom{\gamma}{|\mathcal{C}|}$.

The detailed counts are presented in Table~\ref{table: base_matrix_counts}, and a sample matrix representation for each triplet is illustrated in Fig.~\ref{fig:base matrix patterns}. It should be noted that we list only the triplets corresponding to the top three highest $|\mathcal{E}|$ values (the highest shows last) for each configuration, as the counts for the remaining instances are significantly lower in comparison. Additionally, we only include cases with $|\mathcal{C}| \leq 4$, since in our codes we assume $\gamma \leq 4$, and $|\mathcal{C}|$ cannot exceed $\gamma$. Having said that, the same procedure can be adopted to enumerate such configurations for $\gamma \geq 5$.

\begin{table}[h!]
\caption{Number of Object Patterns of The $6$-$6$, $6$-$8$, and $8$-$8$ Configurations Grouped by Number of Rows, Columns and Spanned Entries of The Corresponding Matrix Representations}
\centering
\scalebox{1.00}
{
\renewcommand{\arraystretch}{1.2}
\begin{tabular}{|c|c|c|c|c|}
\hline
\makecell{Configuration type} & \makecell{$\left|\mathcal{E}\right|$} & \makecell{$\left|\mathcal{V}\right|$} & \makecell{$\left|\mathcal{C}\right|$} & \makecell{Number of instances \\ of object pattern classes} \\
\hline
\multirow{4}{*}{$6$-$6$} & $8$ & $3$ & $3$ & $9 \, \binom{\kappa}{3} \binom{\gamma}{3}$\\
\cline{2-5}
& $9$ & $3$ & $4$ & $72 \, \binom{\kappa}{3} \binom{\gamma}{4}$ \\
\cline{2-5}
& \multirow{2}{*}{$10$} & $4$ & $3$ & $36 \, \binom{\kappa}{4} \binom{\gamma}{3}$ \\
\cline{3-5}
& & $4$ & $4$ & $288 \binom{\kappa}{4} \, \binom{\gamma}{4}$ \\
\thickhline
\multirow{6}{*}{$6$-$8$} & $10$ & $4$ & $4$ & $576 \, \binom{\kappa}{4} \binom{\gamma}{4}$ \\
\cline{2-5}
& \multirow{2}{*}{$11$} & $4$ & $3$ & $144 \, \binom{\kappa}{4} \binom{\gamma}{3}$ \\
\cline{3-5}
& & $4$ & $4$ & $2{,}880 \, \binom{\kappa}{4} \binom{\gamma}{4}$ \\
\cline{2-5}
& \multirow{3}{*}{$12$} & $4$ & $4$ & $1{,}152 \, \binom{\kappa}{4} \binom{\gamma}{4}$ \\
\cline{3-5}
& & $5$ & $3$ & $360 \, \binom{\kappa}{5} \binom{\gamma}{3}$ \\
\cline{3-5}
& & $5$ & $4$ & $11{,}520 \, \binom{\kappa}{5} \binom{\gamma}{4}$ \\
\thickhline
\multirow{11}{*}{$8$-$8$} & \multirow{4}{*}{$12$} & $4$ & $3$ &  $198 \, \binom{\kappa}{4} \binom{\gamma}{3}$ \\
\cline{3-5}
& & $4$ & $4$ & $2{,}952 \, \binom{\kappa}{4} \binom{\gamma}{4}$ \\
\cline{3-5}
& & $5$ & $3$ & $1{,}080 \, \binom{\kappa}{5} \binom{\gamma}{3}$ \\
\cline{3-5}
& & $5$ & $4$ & $10{,}080 \, \binom{\kappa}{5} \binom{\gamma}{4}$ \\
\cline{2-5}
& \multirow{3}{*}{$13$} & $4$ & $4$ & $1{,}728 \, \binom{\kappa}{4} \binom{\gamma}{4}$ \\
\cline{3-5}
& & $5$ & $3$ & $2{,}520 \, \binom{\kappa}{5} \binom{\gamma}{3}$ \\
\cline{3-5}
& & $5$ & $4$ & $53{,}280 \, \binom{\kappa}{5} \binom{\gamma}{4}$ \\
\cline{2-5}
& \multirow{4}{*}{$14$} & $4$ & $4$ & $864 \, \binom{\kappa}{4} \binom{\gamma}{4}$ \\
\cline{3-5}
& & $5$ & $4$ & $17{,}280 \, \binom{\kappa}{5} \binom{\gamma}{4}$ \\
\cline{3-5}
& & $6$ & $3$ & $5{,}400 \, \binom{\kappa}{6} \binom{\gamma}{3}$ \\
\cline{3-5}
& & $6$ & $4$ & $120{,}960 \, \binom{\kappa}{6} \binom{\gamma}{4}$ \\
\hline
\end{tabular}}
\label{table: base_matrix_counts}
\vspace{-0.5em}
\end{table}

\begin{figure*}
\centering
\includegraphics[width=0.9\textwidth]{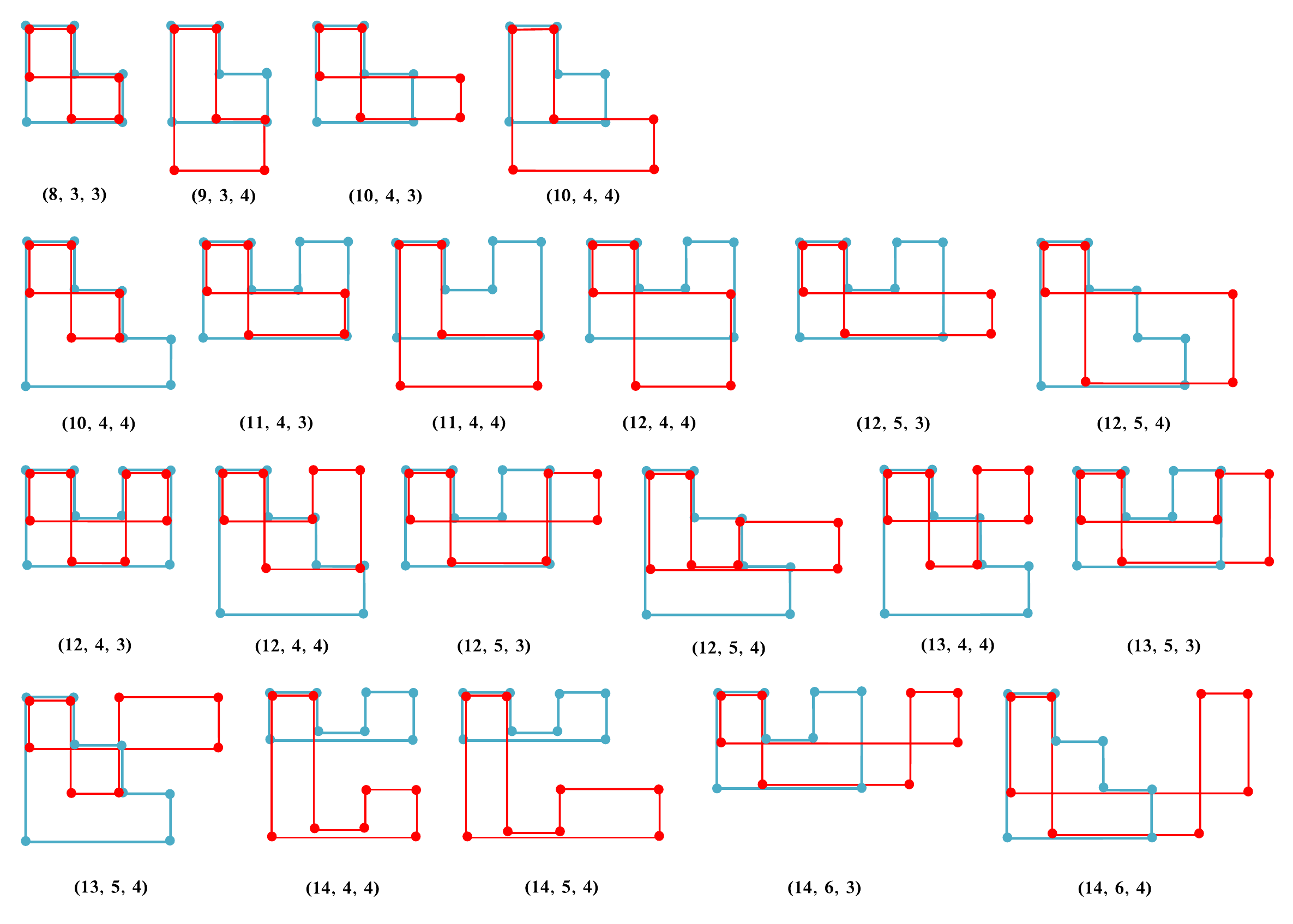} \vspace{-0.8em}
\caption{Example object pattern class matrix representations of the $6$-$6$, $6$-$8$, and $8$-$8$ configurations grouped by the values in $( |\mathcal{E}|,|\mathcal{V}|, |\mathcal{C}|)$.} \vspace{-1.0em}
\label{fig:base matrix patterns}
\end{figure*}

Since for dominant object patterns we have $|E| = |\mathcal{E}|$, it can be observed from Table~\ref{table: base_matrix_counts} that all cases with the maximum $|\mathcal{E}|$ correspond to dominant object pattern cases. Hence, the table justifies the use of the term "dominant," as the number of these patterns is considerably higher compared with other object patterns, especially for $\gamma = 4$.

Accordingly, in our algorithms, we simplify the characteristic polynomial computations for the $6$-$6$, $6$-$8$, and $8$-$8$ configurations by considering only the dominant object patterns. This simplification significantly reduces the computational complexity of our algorithms without causing any considerable deviation from the exact results; further details will be discussed in Section~\ref{sec: algorithms}.

Based on this discussion, we introduce the new notation $\tilde{h}(\mathbf{X},\mathbf{Y};G_{2k,2l})$ for the approximated characteristic polynomial, which is expressed as follows:
\begin{align}
&\tilde{h}(\mathbf{X}, \mathbf{Y}; G_{2k,2l}) = \sum_{\mathcal{P}(\mathcal{V}, \mathcal{C}) \in \mathcal{D}(G_{2k,2l})} |\mathcal{P}(\mathcal{V}, \mathcal{C})| \, h(\mathbf{X}, \mathbf{Y}; G_{2k,2l} \mid \mathcal{V}, \mathcal{C}) \nonumber  \\ &= \Lambda_{2k,2l} \cdot f(X_1 X_2, Y_1 Y_2) f(X_1^{-1} X_2^{-1}, Y_1^{-1} Y_2^{-1}) f^{k-1}(X_1, Y_1) f^{k-1}(X_1^{-1}, Y_1^{-1}) f^{l-1}(X_2, Y_2) f^{l-1}(X_2^{-1}, Y_2^{-1}).
\end{align}
Here, $\Lambda_{2k,2l} = \sum_{\mathcal{P}(\mathcal{V}, \mathcal{C}) \in \mathcal{D}(G_{2k,2l})} |\mathcal{P}(\mathcal{V}, \mathcal{C})|$, which can be obtained from the counts provided in Table~\ref{table: base_matrix_counts} for each configuration. 
We define the expected number of active $6$-$6$, $6$-$8$, and $8$-$8$ configurations with dominant object patterns as follows:
\begin{align}
&N_{2k,2l}^{\mathcal{D}}(\mathbf{p}^{\textup{con}}) = \nonumber \\
&\Lambda_{2k,2l} \cdot \hspace{-0.6em} \sum_{M \mid b_1, M \mid b_2} \hspace{-0.5em} \left[ f(X_1 X_2,Y_1 Y_2) f(X_1^{-1} X_2^{-1},Y_1^{-1} Y_2^{-1}) f^{k-1}(X_1,Y_1) f^{k-1}(X_1^{-1},Y_1^{-1}) f^{l-1}(X_2,Y_2) f^{l-1}(X_2^{-1},Y_2^{-1}) \right]_{0,0,b_1,b_2},
\end{align}
for $(k,l)\in\lbrace (3,3), (3,4), (4,4)\rbrace$, where the coefficients $\Lambda_{2k,2l}$ are given by:
\begin{align*}
\Lambda_{6,6} &= 
\begin{cases}
36\binom{\kappa}{4}\binom{\gamma}{3}, & \text{for } \gamma = 3, \\
36\binom{\kappa}{4}\binom{\gamma}{3} + 288\binom{\kappa}{4}\binom{\gamma}{4}, & \text{for } \gamma = 4,
\end{cases} \\
\Lambda_{6,8} &= 
\begin{cases}
360\binom{\kappa}{5}\binom{\gamma}{3}, & \text{for } \gamma = 3, \\
360\binom{\kappa}{5}\binom{\gamma}{3} + 1{,}152\binom{\kappa}{4}\binom{\gamma}{4} + 1{,}520\binom{\kappa}{5}\binom{\gamma}{4}, & \text{for } \gamma = 4,
\end{cases} \\
\Lambda_{8,8} &= 
\begin{cases}
5{,}400\binom{\kappa}{6}\binom{\gamma}{3}, & \text{for } \gamma = 3, \\
5{,}400\binom{\kappa}{6}\binom{\gamma}{3} + 864\binom{\kappa}{4}\binom{\gamma}{4} + 17{,}280\binom{\kappa}{5}\binom{\gamma}{4} + 120{,}960\binom{\kappa}{6}\binom{\gamma}{4}, & \text{for } \gamma = 4.
\end{cases}
\end{align*}
These coefficients are all obtained from Table~\ref{table: base_matrix_counts}.

Finally, we define a joint expression for the expected number of $6$-$6$, $6$-$8$, and $8$-$8$ configurations with dominant object patterns by considering a weighted sum of the individual expectations as follows:
\begin{align} \label{eqn: expected number of cycle-con}
N^{\mathcal{D}}(\mathbf{p}^{\textup{con}}) = w_{6,6} N_{6,6}^{\mathcal{D}}(\mathbf{p}^{\textup{con}}) + w_{6,8} N_{6,8}^{\mathcal{D}}(\mathbf{p}^{\textup{con}}) + w_{8,8} N_{8,8}^{\mathcal{D}}(\mathbf{p}^{\textup{con}}),
\end{align}
where $(w_{6,6}, w_{6,8}, w_{8,8})$ is a triplet denoting the weight values assigned to each corresponding configuration. Gradient of this expression can be evaluated using \eqref{eq:dom_grad}. 

%%%%%%%%%%%%%%%%%%%%%%%%%%%%%%%%%%%%%%%%%%%%%%%%%%%%%%%%%%%%%%
\section{Design and Optimization Algorithms} \label{sec: algorithms} 

In this section, we present our optimization algorithm based on gradient descent, namely MD-GRADE, which is designed to search for a locally-optimal probability-distribution matrix for the optimization problem introduced in the previous sections. We also introduce our FL optimizer, which aims to find an optimal or sub-optimal relocation arrangement for constructing the desired GD-MD code. For the FL optimization, we adapt an MCMC algorithm \cite{reins_mcmc}, which we recently introduced in the literature for other stages of code design such as partitioning and lifting. The MCMC algorithm is initialized with a relocation matrix generated randomly according to the probability-distribution matrix obtained by the MD-GRADE algorithm, which guides the FL optimizer to reach a better result.
   
\subsection{Multi-Dimensional Gradient-Descent Distributor}\label{subsec: MD-GRADE algo}

\begin{definition} (Coefficient array). For every multi-variable polynomial $g(x_1, x_2, \dots, x_n)$ with $n$ variables, we define an $n$-dimensional array $\mathsf{G}$ as the coefficient array such that:
\begin{align}
g(x_1, x_2, \dots, x_n) = \sum_{i_1, i_2, \dots, i_n} \mathsf{G}\left[i_1, i_2, \dots, i_n\right] x_1^{i_1} x_2^{i_2} \dots x_n^{i_n} .
\end{align} 
Each polynomial can be represented using its coefficient array. For every variable $x_k$, where $k \in \lbrace 1, 2, \dots, n\rbrace $, the index $i_k$ spans from the smallest to the largest exponent of $x_k$ with a non-zero coefficient (coefficients associated with exponents in between are allowed to be zeros). We refer to the cardinality of the set of integers that $i_k$ spans as the range of $\mathsf{G}$ along the $k$-th dimension, and the multiplication of range values along all different dimensions as the size of $\mathsf{G}$.
\end{definition}

The coefficient array of an $n$-variable polynomial resulting from the multiplication of two $n$-variable polynomials can be computed using the $n$-dimensional discrete convolution of their respective coefficient arrays. Similarly, the coefficient array of an $n$-variable polynomial obtained by multiplying a scalar with another $n$-variable polynomial can be computed by multiplying the scalar with each entry.

Here, we represent $n$-dimensional discrete convolution between two $n$-dimensional arrays with $\mathsf{G} = \mathsf{G}_1 * \mathsf{G}_2$, where these coefficient arrays satisfy the given relation:
\begin{align} \label{eq: discrete conv}
\mathsf{G}\left[i_1, i_2, \dots, i_n\right] = \sum_{j_1, j_2, \dots, j_n} \mathsf{G}_1\left[j_1, j_2, \dots, j_n\right] \cdot \mathsf{G}_2\left[i_1-j_1, i_2-j_2, \dots, i_n-j_n\right],
\end{align}
for every possible $(i_1, i_2, \dots, i_n)$.

In our objective functions and their gradients, we frequently use the function $f(X, Y; \mathbf{P})$ for cycles, and $f(X_1 X_2, Y_1 Y_2; \mathbf{P})$ for cycle concatenations, often involving different powers of the $(X, Y)$ pairs.  To simplify the notation, we introduce the following convention: $\mathsf{F}_i$ denotes the coefficient array of the polynomial $f(X^i, Y^i; \mathbf{P})$, and $\mathsf{F}_{i,j}$ denotes the coefficient array of $f(X_1^i X_2^j, Y_1^i Y_2^j; \mathbf{P})$. The first dimension of $\mathsf{F}_i$ corresponds to $X$, and the second corresponds to $Y$. Similarly, the first two dimensions of $\mathsf{F}_{i,j}$ correspond to $X_1$ and $X_2$, while the last two correspond to $Y_1$ and $Y_2$, respectively. That is,
\[
\mathsf{F}_i[i_1, i_2] = \left[f(X^i, Y^i; \mathbf{P})\right]_{i_1, i_2},
\]
and
\[
\mathsf{F}_{i,j}[i_1, i_2, i_3, i_4] = \left[f(X_1^i X_2^j, Y_1^i, Y_2^j; \mathbf{P})\right]_{i_1, i_2, i_3, i_4}.
\]

These coefficient arrays are computed based on the probability matrix $\mathbf{P}$ and powers of $(X, Y)$ pairs using \eqref{eq: f} in our algorithm. It can also be seen that range of $\mathsf{F}_i$ along the first dimension is $m|i|+1$ and it is $(M-1)|i|+1$ along the second dimension. This can be generalized for $\mathsf{F}_{i,j}$ as well. From this point onward, we will use this notation in our algorithm.

Now, we introduce our self-defined functions in the body of the MD-GRADE algorithm:
\begin{enumerate}
\item $\mathrm{conv}(\cdot)$ is used to compute the discrete convolution of a list of coefficient arrays with the same dimension $n$. If the same input array appears multiple times in the input list, we denote this using superscript notation to indicate the number of repetitions. Specifically, if the coefficient array $\mathsf{G}$ appears $k$ times in the input list, we denote it by $\mathsf{G}^k$.

We adopt two approaches to compute the discrete convolution. The first is the direct approach; when applied to two input arrays, as in $\mathrm{conv}(\mathsf{G}_1,\mathsf{G}_2)$, the resulting array is computed using $2n$ nested loops based on~\eqref{eq: discrete conv}. When the length of the input list is higher, say some $\ell>2$, then the output is evaluated using the recursive relation:
\[
\mathrm{conv}(\mathsf{G}_1,\mathsf{G}_2,\mathsf{G}_3, \dots, \mathsf{G}_{\ell}) = \mathrm{conv}\left(\mathsf{G}_1,\mathrm{conv}(\mathsf{G}_2, \mathsf{G}_3, \dots, \mathsf{G}_{\ell})\right).
\]
The second approach exploits the well-known property that convolution in the time domain is equivalent to element-wise multiplication in the frequency domain. Specifically, by applying the $n$-dimensional discrete Fourier transform (DFT) to the input coefficient arrays, performing element-wise multiplication in the frequency domain, and then applying the inverse DFT (IDFT), we can obtain the convolution result. Here, both DFT and IDFT should be applied with the ranges of the resulting array on each dimension.

The rationale behind preferring the Fourier-based approach lies in computational efficiency. Convolution is the most computationally demanding component of our MD-GRADE algorithm. In the direct case, the number of arithmetic operations is $O(N_1 N_2)$, where $N_1$ and $N_2$ are the sizes of the two input arrays. In contrast, the Fourier-based method can be implemented using the fast Fourier transform (FFT) algorithm, reducing the complexity to $O(N \log N)$, where $N$ is the size of the resulting array. This logarithmic speed-up is particularly beneficial in the case of cycle concatenation, where higher-dimensional coefficient arrays are used, and the array sizes increase with the number of dimensions. The array size also grows with the $m$ and $M$. Hence, this approach allows us to efficiently evaluate probability-distribution matrices with significantly larger $m$ and $M$ values.

Nevertheless, we continue to use the direct approach for small-sized input arrays, as the overhead of FFT may outweigh its benefits in such cases, and use the FFT-based method when input sizes are large to optimize the overall efficiency.

\item $\mathrm{force(\cdot)}$ is a function specifically defined for $\mathbf{P}$. It takes the inputs $\mathbf{P}$ and $\mathbf{p}^*$. It projects each row of $\mathbf{P}$ indexed by $t$ onto the hyperplane defined by the orthogonal vector $\mathbf{1}_{1\times M}$ and adds the vector $\mathbf{p}^{*}[t]/M \cdot \mathbf{1}_{1\times M}$ via
$$\mathbf{P}[t,k] \gets \mathbf{P}[t,k] + \bigg[ \mathbf{p}^{*}[t]- \sum_{\ell=0}^{M-1} \mathbf{P}[t,\ell] \bigg]/M.$$
Summations are for non-updated values. Then, it corrects its rows by making the negative entries $0$ and by updating the positive entries in order not to change the sum of each row via
\vspace{-0.5em}$$ \mathbf{P}[t,k] \gets \mathbf{P}'[t,k] \cdot \frac{\mathbf{p}^{*}[t]}{\sum_{\ell=0}^{M-1} \mathbf{P}'[t,\ell]},$$
where $\mathbf{P}'[t,k] = \begin{cases} 0 & \textup{ if } \mathbf{P}[t,k] < 0, \\ \mathbf{P}[t,k] & \textup{ otherwise.} \end{cases}$

The purpose of this function is to map any matrix $\mathbf{P}$ into the domain and feasible set of the optimization problem. The first update projects each row of the matrix $\mathbf{P}$ onto the hyperplane where the row-wise sums are equal to the corresponding entry in  $\mathbf{p}^{*}$. The second update assigns zero to any negative entry in $\mathbf{P}$ and then rescales each row to preserve its original sum. In brief, this is our way to satisfy the optimization constraints.
\end{enumerate}

Here, $\mathbf{p}^{*}$ is a locally-optimal edge distribution of an SC code with parameters $(\gamma, \kappa, m)$, obtained by \cite[Algorithm~1]{GRADE}. It should also be noted that if an underlying SC code and its partitioning matrix are already available, the edge distribution vector derived from this partitioning matrix can be used as $\mathbf{p}^{*}$, where here $\mathbf{p}^{*}[i] = \frac{1}{\gamma\kappa} \sum_{j,k} \mathbbm{1}\{\mathbf{K}(j,k) = i\}, \, \forall i \in \{0,1,\dots,m\}$. This approach enables the code designer to generate the relocation matrix in a way that is more specifically tailored to the already-designed underlying SC code.

We now introduce the MD-GRADE algorithm for the simplest case: minimizing the number of cycles-$6$. The algorithm begins by initializing the matrix $\mathbf{P}$ such that its first column is equal to the vector $\mathbf{p}^*$ and all other entries are set to zero. This corresponds to the case of $0\%$ relocation.

In each iteration, the algorithm evaluates the coefficient matrices of the polynomials $f(X, Y)$ and $f(X^{-1}, Y^{-1})$ based on the current matrix $\mathbf{P}$. It then computes the coefficient arrays of the polynomials that appear in the objective function and its gradient expressions in \eqref{eqn: expected number of cycle-6} and \eqref{eq_6} using the convolution operation described earlier.

Next, the algorithm sums up the necessary entries of these arrays to evaluate both the objective function and its gradient. To ensure numerical stability, the gradient is normalized using its $L_2$ norm. The matrix $\mathbf{P}$ is then updated in the opposite direction of the gradient. Since the gradient update may push $\mathbf{P}$ outside the domain or the feasible set, $\mathrm{force}(\cdot)$ is applied to map it back into these sets. The algorithm checks termination criteria at each iteration: whether the MD density exceeds a predefined maximum, or whether the change in the objective function between consecutive iterations falls below a certain threshold. If either condition is met, the algorithm terminates; otherwise, it proceeds to the next iteration. Finally, the algorithm returns the locally-optimal probability-distribution matrix and the corresponding expected number of cycles-$6$ in the resulting Tanner graph of $\mathbf{H}_{\textup{MD}}$ obtained by Theorem \ref{thm: forecast}.

\begin{algorithm}
\caption{Multi-Dimensional Gradient-Descent Distributor (MD-GRADE) for Cycle-$6$ Optimization} \label{algo: MD-GRADE6}
\begin{algorithmic}[1]
\Statex \textbf{Inputs:} $\gamma,\kappa, L, m, M$: parameters of the MD code; $\mathbf{p}^{*}$: locally-optimal SC edge distribution; $\mathcal{T}_{\textup{max}}$: maximum MD density; $\epsilon, \alpha$: termination threshold and step size of gradient descent.
\Statex \textbf{Outputs:} $\mathbf{P}$: a (joint) locally-optimal probability-distribution matrix;  $E_6$: the expected number of cycles-$6$.
\Statex \textbf{Intermediate Variables:} $F_{\textup{prev}}, F_{\textup{cur}}$: the values of the objective function in \eqref{eqn: expected number of cycle-6} at the previous and current iterations; $\mathsf{F}_1, \mathsf{F}_{-1}$ coefficient arrays for the polynomials $f(X, Y), f(X^{-1}, Y^{-1})$; $\mathsf{G}_{\textup{obj}}, \mathsf{G}_{\textup{grad}}$ coefficient arrays of the polynomials appearing in the objective function and its gradient expressions;  $\mathbf{G}$: the gradient matrix (of size $(m+1) \times M$) of the objective function, where $\mathbf{g}^{\textup{con}}=\nabla_{\mathbf{p}^{\textup{con}}} N_6(\mathbf{p}^{\textup{con}})$; $\mathcal{T}_{\textup{cur}}$: current MD density.
\State $\mathbf{P}[i, 0] \gets \mathbf{p}^*[i]$, $\mathbf{P}[i,j] \gets 0$,  $\forall i\in \{0,1,\dots,m\}$ and $\forall j\in \{1,\dots,M-1\}$. 
\State $F_{\textup{prev}}=0$.
\State Evaluate $\mathsf{F}_1$ and $\mathsf{F}_{-1}$ based on $\mathbf{P}$.
\State $\mathsf{G}_{\textup{obj}} \gets  6\binom{\gamma}{3}\binom{\kappa}{3} \cdot \mathrm{conv}(\mathsf{F}_1^3,\mathsf{F}_{-1}^3)$.
\State $\mathsf{G}_{\textup{grad}} \gets  36\binom{\gamma}{3}\binom{\kappa}{3} \cdot \mathrm{conv}(\mathsf{F}_1^3,\mathsf{F}_{-1}^2)$.
\State $F_{\textup{cur}} \gets \sum_{M \vert b} \mathsf{G}_{\textup{obj}}\left[0,b\right]$. 
\For{$\forall i\in \{0,1,\dots,m\}$ and $\forall j\in \{0,\dots,M-1\}$}
\State $\mathbf{G}[i,j] \gets \sum_{M \vert b} \mathsf{G}_{\textup{grad}}[i, b+j]$.
\EndFor
\State $\mathbf{P} \gets \mathbf{P} - \alpha \displaystyle \frac{\mathbf{G}}{\| \mathbf{g}^{\textup{con}}  \|_2}$.
\State $\mathbf{P} \gets \mathrm{force}(\mathbf{P}, \mathbf{p}^*)$.
\State $\mathcal{T}_{\textup{cur}} \gets 1 - \sum_{i=0}^m \mathbf{P}[i,0]$.
\If{$(\mathcal{T}_{\textup{cur}} < \mathcal{T}_{\textup{max}})$ and $(|F_{\textup{cur}}-F_{\textup{prev}}| > \epsilon)$} 
\State $F_{\textup{prev}} \gets F_{\textup{cur}}$.
\State \textbf{go to} Step 3.
\EndIf
\State $E_6 \gets \frac{(2L-m)}{2} \cdot M \cdot F_{\textup{cur}}$.
\State \textbf{return} $\mathbf{P}$ and $E_6$.
\end{algorithmic}
\end{algorithm}

For the case of cycle-$8$, the overall structure of the algorithm remains mostly the same as in the cycle-$6$ case. The only difference lies in the computation of the objective function and the gradient. Specifically, we evaluate four distinct coefficient arrays for the objective function and seven distinct coefficient arrays for the gradient. Each of these arrays corresponds to a polynomial term in \eqref{eqn: expected number of cycle-8} and \eqref{eq_8}, in the same order as they appear in the equations.

\begin{algorithm}
\caption{Multi-Dimensional Gradient-Descent Distributor (MD-GRADE) for Cycle-$8$ Optimization} \label{algo: MD-GRADE8}
\begin{algorithmic}[1]
\Statex \textbf{Inputs:} $\gamma,\kappa, L, m, M$: parameters of the MD code; $\mathbf{p}^{*}$: locally-optimal SC edge distribution; $\mathcal{T}_{\textup{max}}$: maximum MD density; $w_1, w_2, w_3, w_4$: coefficients appearing in the objective function expression; $\epsilon, \alpha$: termination threshold and step size of gradient descent.
\Statex \textbf{Outputs:} $\mathbf{P}$: a (joint) locally-optimal probability-distribution matrix;  $E_8$: the expected number of cycles-$8$.
\Statex \textbf{Intermediate Variables:} $F_{\textup{prev}}, F_{\textup{cur}}$: the values of the objective function in \eqref{eqn: expected number of cycle-8} at the previous and current iterations; $\mathsf{F}_1, \mathsf{F}_{-1}, \mathsf{F}_2, \mathsf{F}_{-2}$: coefficient arrays for the polynomials $f(X, Y), f(X^{-1}, Y^{-1}), f(X^2, Y^2), f(X^{-2}, Y^{-2})$; $\mathsf{G}_{\textup{obj}}^{\{i\}}, \mathsf{G}_{\textup{grad}}^{\{j\}}$ for $i\in\{1,2,3,4\}, j\in\{1,2,\dots,7\}$: coefficient arrays of each polynomial term appearing in the objective function and its gradient expressions; $\mathbf{G}$: the gradient matrix (of size $(m+1) \times M$) of the objective function, where $\mathbf{g}^{\textup{con}}=\nabla_{\mathbf{p}^{\textup{con}}} N_8(\mathbf{p}^{\textup{con}})$; $\mathcal{T}_{\textup{cur}}$: current MD density.
\State $\mathbf{P}[i,0] \gets \mathbf{p}^*[i]$, $\mathbf{P}[i,j] \gets 0$,  $\forall i\in \{0,1,\dots,m\}$ and $\forall j\in \{1,\dots,M-1\}$. 
\State $F_{\textup{prev}}=0$.
\State Evaluate $\mathsf{F}_1, \mathsf{F}_{-1}, \mathsf{F}_2, \mathsf{F}_{-2}$ based on $\mathbf{P}$.
\State $\mathsf{G}_{\textup{obj}}^{\{1\}} \gets  w_1 \cdot \mathrm{conv}(\mathsf{F}_2^2, \mathsf{F}_{-2}^2)$, $\mathsf{G}_{\textup{obj}}^{\{2\}} \gets  w_2 \cdot \mathrm{conv}(\mathsf{F}_2, \mathsf{F}_{-2}, \mathsf{F}_1^2, \mathsf{F}_{-1}^2)$, $\mathsf{G}_{\textup{obj}}^{\{3\}} \gets  w_3 \cdot \mathrm{conv}(\mathsf{F}_2, \mathsf{F}_1^2, \mathsf{F}_{-1}^4)$, \Statex $\mathsf{G}_{\textup{obj}}^{\{4\}} \gets  w_4 \cdot \mathrm{conv}(\mathsf{F}_1^4, \mathsf{F}_{-1}^4)$.
\State $\mathsf{G}_{\textup{grad}}^{\{1\}} \gets  4 w_1 \cdot \mathrm{conv}(\mathsf{F}_2^2,\mathsf{F}_{-2})$, $\mathsf{G}_{\textup{grad}}^{\{2\}} \gets  2 w_2 \cdot \mathrm{conv}(\mathsf{F}_2,\mathsf{F}_1^2, \mathsf{F}_{-1}^2)$, $\mathsf{G}_{\textup{grad}}^{\{3\}} \gets  4 w_2 \cdot \mathrm{conv}(\mathsf{F}_2, \mathsf{F}_{-2}, \mathsf{F}_1^2, \mathsf{F}_{-1})$, \Statex $\mathsf{G}_{\textup{grad}}^{\{4\}} \gets  w_3 \cdot \mathrm{conv}(\mathsf{F}_1^2, \mathsf{F}_{-1}^4)$, $\mathsf{G}_{\textup{grad}}^{\{5\}} \gets  2 w_3 \cdot \mathrm{conv}(\mathsf{F}_2, \mathsf{F}_1, \mathsf{F}_{-1}^4)$, $\mathsf{G}_{\textup{grad}}^{\{6\}} \gets  4 w_3 \cdot \mathrm{conv}(\mathsf{F}_2, \mathsf{F}_1^2, \mathsf{F}_{-1}^3)$, $\mathsf{G}_{\textup{grad}}^{\{7\}} \gets  8 w_4 \cdot \mathrm{conv}(\mathsf{F}_1^4, \mathsf{F}_{-1}^3)$.
\State $F_{\textup{cur}} \gets \sum_{M \vert b} \sum_{i=1}^4 \mathsf{G}_{\textup{obj}}^{\{i\}}\left[0,b\right]$. 
\For{$\forall i\in \{0,1,\dots,m\}$ and $\forall j\in \{0,\dots,M-1\}$}
\State $\mathbf{G}[i,j] \gets \sum_{M \vert b} \mathsf{G}_{\textup{grad}}^{\{1\}}[2i, b+2j] + \mathsf{G}_{\textup{grad}}^{\{2\}}[2i, b+2j] + \mathsf{G}_{\textup{grad}}^{\{3\}}[i, b+j] + \mathsf{G}_{\textup{grad}}^{\{4\}}[-2i, b-2j] + \mathsf{G}_{\textup{grad}}^{\{5\}}[-i, b-j] + \mathsf{G}_{\textup{grad}}^{\{6\}}[i, b+j] + \mathsf{G}_{\textup{grad}}^{\{7\}}[i, b+j]$.
\EndFor
\State $\mathbf{P} \gets \mathbf{P} - \alpha \displaystyle \frac{\mathbf{G}}{\| \mathbf{g}^{\textup{con}}  \|_2}$.
\State $\mathbf{P} \gets \mathrm{force}(\mathbf{P}, \mathbf{p}^*)$.
\State $\mathcal{T}_{\textup{cur}} \gets 1 - \sum_{i=0}^m \mathbf{P}[i,0]$.
\If{$(\mathcal{T}_{\textup{cur}} < \mathcal{T}_{\textup{max}})$ and $(|F_{\textup{cur}}-F_{\textup{prev}}| > \epsilon)$} 
\State $F_{\textup{prev}} \gets F_{\textup{cur}}$.
\State \textbf{go to} Step 3.
\EndIf
\State $E_8 \gets (L-m) \cdot M \cdot F_{\textup{cur}}$.
\State \textbf{return} $\mathbf{P}$ and $E_8$.
\end{algorithmic}
\end{algorithm}

Lastly, for the case of cycle concatenation, the overall structure of the algorithm remains similar to the previous two cases, with the key difference lying in how the objective function and gradient are evaluated. Unlike the cycle-$6$ and cycle-$8$ cases, the cycle concatenation case involves minimizing a weighted sum of expected counts corresponding to multiple concatenated cycle configurations. Specifically, the objective function is formed by summing the expected counts of $6$-$6$, $6$-$8$, and $8$-$8$ concatenations, each multiplied by a corresponding weight. In our implementation, we typically set the weights for cycle concatenations as $(w_{6,6}, w_{6,8}, w_{8,8}) = (1, 10^{-2}, 10^{-4})$, which is natural given how detrimental each of them could be.

\begin{remark}
A key difference in the cycle concatenation case is that the polynomials appearing in the objective function involve four variables. This increased dimensionality is a limiting factor due to the associated increase in computational complexity. Although the convolution operations are optimized using an FFT-based method, they still remain computationally expensive.

To make the problem tractable, we focus only on the dominant object patterns for each cycle concatenation configuration, since all of them share the same characteristic polynomial, and their total count in the base matrix is significantly higher compared with the other object pattern classes. Including all possible configurations would dramatically increase the number of required convolution operations, making the evaluation of the probability matrix for large values of $m$ and $M$ computationally prohibitive. As a sanity check, we also applied the algorithm including all the configurations (without approximation) for the $6$-$6$ case and observed that the approximated solution closely matches the actual result in terms of both the final objective value and the probability-distribution matrix.
\end{remark}  

\begin{algorithm}
\caption{Multi-Dimensional Gradient-Descent Distributor (MD-GRADE) for Cycle-Concatenation Optimization} \label{algo: MD-GRADE-CON}
\begin{algorithmic}[1]
\Statex \textbf{Inputs:} $\gamma,\kappa, L, m, M$: parameters of the MD code; $\mathbf{p}^{*}$: locally-optimal SC edge distribution; $\mathcal{T}_{\textup{max}}$: maximum MD density; $w_{6,6}, w_{6,8}, w_{8,8}$: weights appearing in the objective function expression; $\Lambda_{6,6}, \Lambda_{6,8}, \Lambda_{8,8}$: coefficients appearing in the objective function expression; $\epsilon, \alpha$: termination threshold and step size of gradient descent.
\Statex \textbf{Outputs:} $\mathbf{P}$: a (joint) locally-optimal probability-distribution matrix.
\Statex \textbf{Intermediate Variables:} $F_{\textup{prev}}, F_{\textup{cur}}$: the values of the objective function in \eqref{eqn: expected number of cycle-con} at the previous and current iterations; $\mathsf{F}_{1,1}, \mathsf{F}_{-1,-1}, \mathsf{F}_{1,0}, \mathsf{F}_{-1,0}, \mathsf{F}_{0,1}, \mathsf{F}_{0,-1}$: coefficient arrays for the polynomials $f(X_1 X_2, Y_1 Y_2), f(X_1^{-1} X_2^{-1}, Y_1^{-1} Y_2^{-1}),$ $f(X_1 , Y_1), f(X_1^{-1}, Y_1^{-1}), f(X_2 , Y_2), f(X_2^{-1}, Y_2^{-1})$; $\mathsf{G}_{\textup{obj}}^{\{2k,2l\}},\mathsf{G}_{\textup{grad}}^{\{2k,2l,r\}}$ for $(k,l)\in\{(3,3),(3,4),(4,4)\}$ and $r \in \{1,2,3\}$: coefficient arrays of each polynomial term appearing in the objective function and its gradient expressions for $6$-$6$, $6$-$8$, $8$-$8$ cycle concatenation; $\mathbf{G}$: the gradient matrix (of size $(m+1) \times M$) of the objective function, where $\mathbf{g}^{\textup{con}}=\nabla_{\mathbf{p}^{\textup{con}}} N^{\mathcal{D}}(\mathbf{p}^{\textup{con}})$; $\mathcal{T}_{\textup{cur}}$: current MD density.
\State $\mathbf{P}[i,0] \gets \mathbf{p}^*[i]$, $\mathbf{P}[i,j] \gets 0$,  $\forall i\in \{0,1,\dots,m\}$ and $\forall j\in \{1,\dots,M-1\}$. 
\State $F_{\textup{prev}}\gets 0$.
\State $F_{\textup{cur}}\gets 0$ and $\mathbf{G}=\mathbf{0}_{(m+1)\times M}$.
\State Evaluate $\mathsf{F}_{1,1}, \mathsf{F}_{-1,-1}, \mathsf{F}_{1,0}, \mathsf{F}_{-1,0}, \mathsf{F}_{0,1}, \mathsf{F}_{0,-1}$ based on $\mathbf{P}$.
\For{$(k,l)\in\{(3,3),(3,4),(4,4)\}$}
\State $\mathsf{G}_{\textup{obj}}^{\{2k,2l\}} \gets \Lambda_{2k,2l} \cdot \mathrm{conv}(\mathsf{F}_{1,1}, \mathsf{F}_{-1,-1}, \mathsf{F}_{1,0}^{k-1}, \mathsf{F}_{-1,0}^{k-1}, \mathsf{F}_{0,1}^{l-1}, \mathsf{F}_{0,-1}^{l-1})$
\State $\mathsf{G}_{\textup{grad}}^{\{2k,2l,1\}} \gets 2 \Lambda_{2k,2l} \cdot \mathrm{conv}(\mathsf{F}_{1,1}, \mathsf{F}_{1,0}^{k-1}, \mathsf{F}_{-1,0}^{k-1}, \mathsf{F}_{0,1}^{l-1}, \mathsf{F}_{0,-1}^{l-1})$.
\State $\mathsf{G}_{\textup{grad}}^{\{2k,2l,2\}} \gets 2 (k-1) \Lambda_{2k,2l} \cdot \mathrm{conv}(\mathsf{F}_{1,1}, \mathsf{F}_{-1,-1},  \mathsf{F}_{1,0}^{k-1}, \mathsf{F}_{-1,0}^{k-2}, \mathsf{F}_{0,1}^{l-1}, \mathsf{F}_{0,-1}^{l-1})$.
\State $\mathsf{G}_{\textup{grad}}^{\{2k,2l,3\}} \gets 2 (l-1) \Lambda_{2k,2l} \cdot \mathrm{conv}(\mathsf{F}_{1,1}, \mathsf{F}_{-1,-1},  \mathsf{F}_{1,0}^{k-1}, \mathsf{F}_{-1,0}^{k-1}, \mathsf{F}_{0,1}^{l-1}, \mathsf{F}_{0,-1}^{l-2})$.
\EndFor 
\For{$(k,l)\in\{(3,3),(3,4),(4,4)\}$}
\State $F_{\textup{cur}} \gets  F_{\textup{cur}} + w_{2k,2l} \cdot \sum_{M \mid b_1} \sum_{M \mid b_2} \mathsf{G}_{\textup{obj}}^{\{k,l\}}\left[0,0,b_1,b_2\right]$. 
\For{$\forall i\in \{0,1,\dots,m\}$ and $\forall j\in \{0,\dots,M-1\}$}
\State $\mathbf{G}[i,j] \gets \mathbf{G}[i,j] + w_{2k,2l} \cdot \sum_{M \mid b_1} \sum_{M \mid b_2} \mathsf{G}_{\textup{grad}}^{\{2k,2l,1\}}[i, i, b_1+j, b_2+j] + \mathsf{G}_{\textup{grad}}^{\{2k,2l,2\}}[i, 0, b_1+j, b_2] + \mathsf{G}_{\textup{grad}}^{\{2k,2l,2\}}[0, i, b_1, b_2+j]$.
\EndFor
\EndFor 
\State $\mathbf{P} \gets \mathbf{P} - \alpha \displaystyle \frac{\mathbf{G}}{\| \mathbf{g}^{\textup{con}}  \|_2}$.
\State $\mathbf{P} \gets \mathrm{force}(\mathbf{P}, \mathbf{p}^*)$.
\State $\mathcal{T}_{\textup{cur}} \gets 1 - \sum_{i=0}^m \mathbf{P}[i,0]$.
\If{$(\mathcal{T}_{\textup{cur}} < \mathcal{T}_{\textup{max}})$ and $(|F_{\textup{cur}}-F_{\textup{prev}}| > \epsilon)$} 
\State $F_{\textup{prev}} \gets F_{\textup{cur}}$.
\State \textbf{go to} Step 3.
\EndIf
\State \textbf{return} $\mathbf{P}$.
\end{algorithmic}
\end{algorithm}

\begin{remark}
One of our key design choices in the proposed code construction is to separate the SC partitioning and the MD relocation stages in both the FL and the probabilistic continuous (the MD-GRADE) frameworks. In the FL setting, the partitioning and relocation matrices are obtained in two distinct stages. In the continuous case, we first compute the vector $\mathbf{p}^*$, and then compute the matrix $\mathbf{P}$ while enforcing the constraint described in \eqref{eq: row sum constraint}. This constraint ensures that the row sums of $\mathbf{P}$ remain constant, preserving the edge distribution found for partitioning.

An alternative design approach would be to merge these two stages, at least for the probabilistic framework, and directly optimize the matrix $\mathbf{P}$ without separately computing $\mathbf{p}^*$. To explore this, we conducted a small experiment using only the constraint from \eqref{constraint_overall} and modifying the MD-GRADE algorithm accordingly. In this case, we allowed the row sums of $\mathbf{P}$ to vary, enforcing only that the total sum of all elements in $\mathbf{P}$ equals $1$.

We observe that the result of the joint design approach is very close to that of the separate design. This is true in terms of the final distribution matrix $\mathbf{P}$, the objective function value (which is even slightly worse in the joint design case), and the row sums of $\mathbf{P}$, the inferred edge distribution. This result is significant because it suggests that merging the two stages does not offer any performance improvements. Furthermore, keeping the stages separate is advantageous in terms of computational efficiency, particularly regarding the FL algorithms.
\end{remark}

After obtaining a locally-optimal probability-distribution matrix using the MD-GRADE algorithm, we use this matrix to initialize the FL optimizer. First, we quantize the matrix $\gamma\kappa \mathbf{P}$, where the $(i,j)$-th entry after quantization represents the number of circulants in the base matrix assigned to the $i$-th component matrix and the $j$-th auxiliary matrix. 

Next, for each component matrix, we list the circulants assigned to it, which are already specified from the SC code design. Based on the quantized entries of $\gamma\kappa \mathbf{P}$, we determine the number of circulants to be relocated from this component matrix. We then randomly select that number of circulants from the list, and randomly redistribute them across the auxiliary matrices according to the relevant row in the quantized $\gamma\kappa \mathbf{P}$, excluding the auxiliary matrix $\mathbf{X}_0$.

Instead of assigning a uniform relocation probability to each circulant, i.e., picking the circulants to relocate uniformly at random, we adopt an informed initialization strategy based on the underlying SC code. For each circulant, we count how many times it appears in the objects (e.g., cycles or cycle concatenations) that we aim to reduce their count. We then scale the relocation probabilities of the circulants proportionally to these counts, so that those most involved in harmful configurations have a higher likelihood of being relocated during the random initialization phase. We observe that this informed initialization strategy yields a better reduction in the final number of objects in the MD Tanner graph.

\begin{remark}\label{remark:rel_per_dist} Note that the MD-GRADE algorithm provides us with a non-trivial solution. In particular, the relocation percentages for each component matrix $\mathbf{H}_i$ are not necessarily the same. For instance, for the input parameters $(\gamma,\kappa,L,m,M) = (3,20,13,20,4,7)$, the vector of component-wise relocation percentages is $[ 25.90 \textup{ } 65.24 \textup{ } 63.78 \textup{ } 45.65 \textup{ } 28.58],$ where the total relocation percentage is $35.23\%$ (see Fig.~\ref{fig: md_code_6} for the actual probability-distribution matrix). This vector shows that more circulants (percentage-wise) need to be relocated from the middle component matrices of the SC code than the side ones. These percentages are quite close to the final relocation percentages produced by the MCMC algorithm to design GD-GD MD Code~6, where the vector of component-wise relocation percentages is $[ 21.74 \textup{ } 80.00 \textup{ } 75.00 \textup{ } 50.00 \textup{ } 25.00]$.
\end{remark}

\subsection{Markov Chain Monte Carlo Finite-Length Optimizer} \label{subsec: MCMC_algo}

We now introduce our FL algorithm for performing relocations and constructing a GD-MD code. This algorithm adapts the MCMC-based approach introduced in \cite{reins_mcmc} for the relocation stage. Details related to FL partitioning and lifting are skipped since they do not represent novel contributions in this paper. However, the same MCMC approach is adopted for both stages.

Our method relies on an MCMC technique known as \textit{Gibbs sampling} \cite{geman_gibbs, mackay_it}, which is used for sampling from joint probability distributions. For the optimization algorithm, we define a probability-distribution for the inputs of the problem such that the probability is inversely proportional to the objective function. In particular, our approach assigns higher probability to better solutions with lower objective function values and uses Gibbs sampling to sample from this distribution, thereby increasing the likelihood of sampling an optimal/sub-optimal solution eventually.

Let the input of the optimization problem be the vector $\mathbf{x}$, and let the objective function be $C(\mathbf{x})$. Here, $\mathbf{x}$ is the row-concatenated version of the relocation matrix $\mathbf{M}$, and $C(\mathbf{x})$ is the number of objects of interest (cycles or cycle concatenations), or a weighted sum of the numbers of different types of objects, in the Tanner graph of the GD-MD code that we aim to minimize. The probability assigned to $\mathbf{x}$ is defined as:
\begin{align}
P(\mathbf{x}) = \frac{P^*(\mathbf{x})}{Z(\beta)},
\end{align}
where
\begin{align}
P^*(\mathbf{x}) = 
\begin{cases}
e^{-\beta C(\mathbf{x})}, & \text{if } \mathbf{x}\in F , \\
0, & \text{otherwise,}
\end{cases}
\end{align}
and $Z(\beta)$ is the normalization constant given by
\begin{align}
Z(\beta) = \sum_{\mathbf{x} \in F} e^{-\beta C(\mathbf{x})}.
\end{align}
Here, $\beta \in \mathbb{R}^{\geq 0}$ is a hyperparameter of the algorithm, and $F$ is the feasible set of the problem.

The update rule of Gibbs sampling is as follows:
\begin{align} \label{eq:conditional_probability}
x_i^{(t+1)} \sim P\left(x_i \mid x_1^{(t+1)}, \ldots, x_{i-1}^{(t+1)}, x_{i+1}^{(t)}, \ldots, x_n^{(t)}\right) = \frac{P^*([x_1^{(t+1)} \textup{ } \ldots \textup{ } x_{i-1}^{(t+1)} \textup{ } x_i \textup{ } x_{i+1}^{(t)} \textup{ } \ldots \textup{ } x_n^{(t)}])}{\sum_{x'_i=0}^{M-1} P^*([x_1^{(t+1)} \textup{ } \ldots \textup{ } x_{i-1}^{(t+1)} \textup{ } x'_i \textup{ } x_{i+1}^{(t)} \textup{ } \ldots \textup{ } x_n^{(t)}])},
\end{align}
where $x_i$ is the $i$-th element of $\mathbf{x}$, $t$ denotes the iteration index, and $n = \gamma\kappa$ denotes the length of $\mathbf{x}$. While Gibbs sampling traditionally updates one variable at a time in sequential index order (i.e., first index, then second, and so on), it can be adapted to update a small number of variables simultaneously in a randomized order. Here, each feasible $\mathbf{x} \in F$ represents a state in a Markov chain, with the transition probabilities between states determined by~\eqref{eq:conditional_probability}.

In our algorithm, we update a small set of entries in $\mathbf{x}$ per iteration, where the cardinality of updated sets is chosen as some integer $\delta$. The algorithm maintains a list of index sets whose length is the same as that of the input vector $\mathbf{x}$. Each set in the list includes one index $\zeta$ from $\{1,2,\ldots, n\}$ and the $\delta-1$ indices of the most correlated entries, where correlation is determined based on the number of common objects of interest involving the entries corresponding to these indices.

The algorithm iterates over a list of such index sets. For each set, it evaluates the conditional probability mass function (PMF) and sample the new values of the input variables using an extension of \eqref{eq:conditional_probability} to the multi-variable case (see \cite[(11)]{reins_mcmc}). This list is shuffled at each pass to avoid introducing systematic Markov chain transition patterns.

The algorithm keeps track of the minimum objective function value observed until its current iteration, along with its corresponding input vector. These records are updated whenever a better solution is encountered during the evaluation of objective function values within the conditional probability computation.

The pseudo-code of the algorithm is provided in Algorithm~\ref{algo:MCMC}. The final relocation matrix $\mathbf{M}$ is obtained by reshaping the optimal input vector found by the algorithm, $\mathbf{x}_{\textup{opt}}$, to the matrix format.

\begin{algorithm}
\caption{Markov Chain Monte Carlo (MCMC) Finite-Length Optimizer} \label{algo:MCMC}
\begin{algorithmic}[1]
\Statex \textbf{Inputs:} $\mathcal{L}$: list of objects of interest; $\delta$: cardinality of index sets; $\mathbf{x}_{\textup{init}}$: initial input vector; $\beta_{\text{init}}$: initial value of $\beta$; $T$: maximum number of iterations; $M$: number of auxiliary matrices.
\Statex \textbf{Outputs:} $C_{\textup{opt}}$: minimum value of the objective function recorded during iterations; $\mathbf{x}_{\textup{opt}}$: optimal value of the input vector resulting in the minimum objective function.
\Statex \textbf{Intermediate Variables:} $S$: list of index sets; $\mathbf{x}$: input vector for the current iteration; $i$: number of samples; $\beta$: hyper-parameter of the main distribution; $\nu$: index set of entries that are updated; $Z$: normalizing factor for probabilities; $\mathbf{x}', \mathbf{x}_{\textup{prev}}$: input vectors used for intermediate calculations; $P_\nu ,P_\nu^*$: normalized and non-normalized mappings of conditional PMF values of transition probabilities between $\mathbf{x}'$ and $\mathbf{x}$, where $\mathbf{x}'$ and $\mathbf{x}$ differ only at entries indexed by $\nu$.
\State Initialize $S$ by going over $\mathcal{L}$ and calculating the common object counts for all entry pairs, then list $d$ indices of the most correlated entries for each entry. 
\State $\mathbf{x} \gets \mathbf{x}_{\text{init}}$.
\State Initialize $C_{\textup{opt}} \gets \infty$, $i \gets 0$, $\beta \gets \beta_{\text{init}}$.
\While{$i < T$}
	\State Shuffle the order of $S$ randomly.
    \ForEach{$\nu \in S$}
        \State $i \gets i+1$, $Z \gets 0$, $\mathbf{x}_{\textup{prev}}\gets \mathbf{x}$.
        \ForEach{$\mathbf{x}'_{\nu} \in \{0,1,\dots,M-1\}^{\delta}$}
            \State Obtain $\mathbf{x}'$ by assigning its entries indexed by $\nu$ from $\mathbf{x}'_{\nu}$, and taking the remaining entries from $\mathbf{x}$.
            \If {$\mathbf{x}'$ does not satisfy the constraints}
            	\State $P_\nu(\mathbf{x}') \gets 0$.
            \Else
                \State Evaluate $C(\mathbf{x}')$.
                \If{$C(\mathbf{x}') < C_{\textup{opt}}$}
                    \State $C_{\textup{opt}} \gets C(\mathbf{x}')$ and $\mathbf{x}_{\textup{opt}}\gets \mathbf{x}'$.
                    \If{$C_{\textup{opt}} = 0$} \textbf{go to} Step~28.
                    \EndIf
                \EndIf
                \State $P_\nu^*(\mathbf{x}') \gets \exp(-\beta C(\mathbf{x}'))$.
                \State $Z \gets Z + P_\nu^*(\mathbf{x}')$.
            \EndIf
        \EndFor
        \State $P_\nu \gets P_\nu^* / Z$. \textit{// Normalization to reach a probability distribution, for all $\mathbf{x}$.}
        \State Sample from the distribution $P_\nu$ to reach $\mathbf{x}$.
    \EndFor
    \State Update MCMC internal and distribution variables. \textit{// This is to control the transition rate, see \cite{reins_mcmc} for details.}
\EndWhile
\State \textbf{return} $C_{\textup{opt}}$, and $\mathbf{x}_{\textup{opt}}$.
\end{algorithmic}
\end{algorithm}

Before executing the MCMC algorithm, we extract a list of all active objects in the underlying SC code whose instances we aim to minimize \cite{hashemi}. We emphasize that only elementary objects are considered in this step. The evaluation of the objective function involves iterating over this list and checking whether each object remains active for the given relocation arrangement. Specifically, a cycle concatenation is considered active if both cycles in its basis remain active after relocations \cite{rohith2}.

As discussed earlier, the initial relocation matrix is derived from the MD-GRADE algorithm. To limit the search space, allowable input vectors are constrained to lie within fixed $L_1$ and $L_\infty$ distances from the initial vector. Additionally, we impose an upper bound on the MD density, typically set to $0.35$, i.e., to $35\%$ relocations. The main reason behind limiting the MD density is to enable low-latency windowed decoding that can take advantage of that most of the non-zero entries are within $\mathbf{X}_0$ and its copies, which is part of our future work. Any input vector that violates these constraints is assigned zero probability, and thus excluded from the sampling process.

When the algorithm targets cycles, it first attempts to eliminate all instances of cycle-$6$. If successful, it then proceeds to minimize the number of cycle-$8$ instances, while ensuring that all cycle-$6$ candidates are inactive by assigning zero probability to any vector that would activate any of them. In the cycle concatenation case, the objective function is defined as the weighted sum of the number of active $6$-$6$, $6$-$8$, and $8$-$8$ configurations. The weights are chosen as in Algorithm~\ref{algo: MD-GRADE-CON}. Observe that all underlying SC codes we adopt do not have any cycles-$4$.

%%%%%%%%%%%%%%%%%%%%%%%%%%%%%%%%%%%%%%%%%%%%%%%%%%%%%%%%%%%%%%%%%%%%%%%%%%%%%%%%%%%%%

\vspace{-0.1em}
\section{Numerical Results and Comparisons} \label{sec: numeric}

In this section, we present MD-SC codes constructed using our probabilistic framework, report their short cycle and object counts, and demonstrate their error-rate performance with simulation plots. Moreover, we report the estimation results for short cycles as derived in Theorem~\ref{thm: forecast}. We design eight sets of MD-SC codes (MD Codes~1--8) for this section, with their parameters, codeword lengths, and design rates listed in Table~\ref{table: code_parameters}. As presented in Table~\ref{table: code_parameters}, our framework supports the design of MD-SC codes across a wide spectrum of code lengths and design rates. These codes are suitable for use in diverse application areas, including communication systems (wireless, optical, etc.) and modern data storage systems.

\begin{table}
\caption{Parameters, Lengths and Design Rates of Proposed MD-SC Codes Where Each Entry Refers to a Set of Codes}
\vspace{-0.5em}
\centering
\scalebox{1.00}
{
\renewcommand{\arraystretch}{1.2}
\begin{tabular}{|c|c|c|c|c|c|c|c|c|}
\hline
\makecell{Code name} & \makecell{$\gamma$} & \makecell{$\kappa$} & \makecell{$z$} & \makecell{$L$} & \makecell{$m$} & \makecell{$M$} & \makecell{Length} & \makecell{Design \\ rate} \\
\hline
MD Code~1 & $4$ & $17$ & $17$ & $10$ & $1$ & $3$ & $8{,}670$ & $0.74$ \\
\hline
MD Code~2 & $3$ & $19$ & $23$ & $10$ & $2$ & $4$ & $17{,}480$ & $0.81$ \\
\hline
MD Code~3 & $3$ & $17$ & $7$ & $50$ & $9$ & $13$ & $77{,}350$ & $0.79$ \\
\hline
MD Code~4 & $4$ & $29$ & $29$ & $50$ & $19$ & $5$ & $210{,}250$ & $0.81$ \\
\hline
MD Code~5 & $4$ & $24$ & $17$ & $30$ & $6$ & $3$ & $36{,}720$ & $0.80$ \\
\hline
MD Code~6 & $3$ & $20$ & $13$ & $20$ & $4$ & $7$ & $36{,}400$ & $0.82$ \\
\hline
MD Code~7 & $4$ & $13$ & $5$ & $10$ & $3$ & $5$ & $3{,}250$ & $0.60$ \\
\hline
MD Code~8 & $4$ & $17$ & $7$ & $15$ & $4$ & $5$ & $8{,}925$ & $0.70$ \\
\hline
\end{tabular}}
\label{table: code_parameters}
\vspace{-0.5em}
\end{table}

We use MD Codes~1--4 to examine optimization targeting cycle counts, whereas MD Codes~5--8 are generated to examine optimization targeting object counts. We refer to the FL construction method performed by the MCMC algorithm guided by a GD-based distribution as the GD-MCMC method, both for partitioning and relocation design stages. Occasionally, we drop the word ``method'' from ``the MCMC method'' and ``the GD-MCMC method'' for brevity. We also include codes constructed with a uniform distribution for comparison purposes. Here, by a uniform distribution for the partitioning stage, we mean that edge distribution for the underlying SC code is $\mathbf{p}_{\textup{UNF}} = \frac{1}{m+1}\cdot \mathbf{1}$. On the other hand, by a uniform distribution for the relocation stage, we mean that the relocation density is applied evenly across all component matrices and distributed uniformly among all auxiliary matrices. If we denote the edge distribution for partitioning by $\mathbf{p}$ and the relocation density by $\mathcal{T}$, then the uniform probability-distribution matrix takes the following form:
\begin{align}
\mathbf{P}_{\textup{UNF}} = \begin{bmatrix}
(1 - \mathcal{T})\mathbf{p}[0] & \frac{\mathcal{T}}{M-1} \mathbf{p}[0] & \dots & \frac{\mathcal{T}}{M-1} \mathbf{p}[0] \\
(1 - \mathcal{T})\mathbf{p}[1] & \frac{\mathcal{T}}{M-1} \mathbf{p}[1] & \dots & \frac{\mathcal{T}}{M-1} \mathbf{p}[1] \\
\vdots & \vdots & \ddots & \vdots \\
(1 - \mathcal{T})\mathbf{p}[m] & \frac{\mathcal{T}}{M-1} \mathbf{p}[m] & \dots & \frac{\mathcal{T}}{M-1} \mathbf{p}[m] \\
\end{bmatrix}.
\end{align}

The design methods for each set of codes are listed below in more detail:
\begin{enumerate}
    \item MD Codes~1--2: Partitioning matrices are generated using the optimal overlap (OO) method~\cite{oocpo}, since the memory is low for both codes. Lifting is performed via the MCMC, and relocations are conducted using the GD-MCMC. All three stages of the code design aim to reduce short cycle counts.
    
    \item MD Codes~3--4: Partitioning and relocations are both done using the GD-MCMC, while lifting is again performed via the MCMC. The design goal in all three stages is to reduce short cycle counts.
    
    \item MD Code~5: Three variants of the code are created, differing mainly in their partitioning and relocation distributions. One is generated by GD-based distributions for both stages (GD-GD MD Code~5), another is generated by GD distribution for partitioning and uniform distribution for relocation (GD-UNF MD Code~5), and the last is generated by uniform distributions in both stages (UNF-UNF MD Code~5). The underlying SC codes are obtained from~\cite[$(4,24)$-BSC]{GRADE}, where partitioning is performed using the algorithmic optimizer of~\cite{GRADE} and lifting is done using the circulant power optimizer (CPO) of~\cite{GRADE}, which is based on that of~\cite{oocpo}. Relocations are done using the MCMC for all variants. In all cases, the optimization targets objects (concatenated cycles) for all three design stages.
    
    \item MD Codes~6--8: Similar to MD Code~5, three variants (GD-GD, GD-UNF, and UNF-UNF) are generated for each code of the three. Partitioning and relocations are guided by the respective distributions and done using the MCMC, and lifting is also performed via the MCMC. Here, the design targets short cycles during the partitioning and lifting stages, and concatenated cycles (objects) during the relocation stage. The rationale is that by focusing on short cycles initially, we can eliminate most cycles-$6$, effectively removing $6$-$6$ and $6$-$8$ configurations. This allows the relocation stage to focus on reducing $8$-$8$ configurations, which appear in many detrimental absorbing sets.
\end{enumerate}

For MD Codes~5--8, it should be noted that our main contribution lies in the design of the GD-GD MD codes. The GD-UNF and UNF-UNF variants are generated for comparison purposes, in order to highlight the improvement offered by our proposed framework compared with a straightforward distribution.

\begin{table}
\centering
\caption{Reduced Short Cycle Counts of Proposed MD-SC Codes}
\scalebox{1.00}
{
\renewcommand{\arraystretch}{1.2}
\begin{tabular}{|c|c|c|c|}
\hline
\makecell{Code name} & \makecell{Cycle-$6$ \\ count} & \makecell{Relocation \\ percentage} & \makecell{Underlying \\ SC count} \\ 
\hline
\cite{homa-lev} & $9{,}078$ & $33.82\%$ & $-$ \\
\hline
MD Code~1 & $3{,}366$ & $33.82\%$ & $25{,}211$ \\ 
\hline
\addlinespace
\hline
\makecell{Code name} & \makecell{Cycle-$8$ \\ count} & \makecell{Relocation \\ percentage} & \makecell{Underlying \\ SC count} \\ 
\hline
\cite{homa-lev} & $249{,}320$ & $33.33\%$ & $-$ \\
\hline
MD Code~2 & $206{,}356$ & $33.33\%$ & $282{,}693$ \\ 
\thickhline
MD Code~3 & $15{,}652$ & $31.37\%$ & $164{,}794$ \\ 
\thickhline
MD Code~4 & $1{,}581{,}080$ & $34.48\%$& $2{,}740{,}036$ \\ 
\hline
\end{tabular}}
\label{table: cycle_counts} 
\end{table}

In Table~\ref{table: cycle_counts}, we list the short cycle counts for MD Codes~1--4, along with the relocation percentages, and the cycle counts of the corresponding underlying SC codes (which share the same design rate). All codes listed in the table are free of cycles-$4$, and those shown in the lower panel of the table are also free of cycles-$6$. For comparison, we also include two codes from~\cite{homa-lev}, aligned with MD Code~1 and MD Code~2, respectively. In particular, each of these literature codes shares the same design parameters, partitioning matrices, and relocation densities with our corresponding GD-MD code. It should be noted, however, that the lifting matrices differ: ours are obtained via the MCMC algorithm, whereas theirs are generated using the CPO method.

In~\cite{homa-lev}, the authors introduce the concept of the depth of an MD-SC code, where during the relocation stage, not all $M$ auxiliary matrices are used. Instead, only the first $d\leq M$ of them are utilized, where $d$ is referred to as the depth. This approach could enable the use of a windowed decoder in multiple dimensions, which would reduce decoding latency, while also mitigating the effect of short cycles. They report the number of cycles-$6$ in the final Tanner graph of an MD-SC code with parameters $(\gamma, \kappa, z, L, m, M) = (4, 17, 17, 10, 1, 5)$, while varying $d$ from $2$ to $5$. For comparison, we also generated MD-SC codes using the same underlying SC code, with slight modifications to our method. Specifically, in the MD-GRADE algorithm, we force the columns of the probability-distribution matrix indexed by all $i$ such that $i\geq d$ to be zero at all iterations. During FL optimization, we constrain the MCMC algorithm to assign relocation values only from the set $\{0,1, \dots, d-1\}$ (Algorithm~\ref{algo:MCMC}, Step-8 changes accordingly). In Table~\ref{table: cycle_counts_with_depth}, we present our resulting cycle counts (right side) alongside those reported in~\cite{homa-lev} (left side) for direct comparison. It should be noted that all of these codes share the same relocation density, which is $26.47\%$. The notable reduction in cycle counts achieved by our GD-MD codes is demonstrated by the table.

\begin{remark}
Although we generated MD-SC codes using the first $d$ auxiliary matrices, the selection of which auxiliary matrices to use can be made in a more informed way. This is similar to the case of topologically-coupled codes \cite{GRADE}, where some component matrices are deliberately set to all zeros for an SC code, and the selection of the non-zero component matrices is guided by a similar probabilistic framework through the formulation of an optimization problem. Exploring this problem could be a promising direction for future research.
\end{remark}

In Table~\ref{table: estimation_results}, we list the lower and upper bounds as well as the estimated cycle counts based on Theorem~\ref{thm: forecast} for MD Codes~1--4, along with their corresponding actual cycle counts.

In Table~\ref{table: object_counts}, we list the elementary $6$-$6$, $6$-$8$, and $8$-$8$ configuration counts in the final Tanner graphs of MD Codes~5--8, for all three distribution variants (GD-GD, GD-UNF, UNF-UNF). Relocation percentages and counts of these configurations for the underlying SC codes are also provided. It should be noted that we only provide underlying SC counts for the GD-GD MD codes (where the SC code is designed using the GD algorithm).

\begin{remark}\label{remark:extended_sc}
Instead of constructing an MD-SC code via relocations, one can simply increase the number of replicas to $L \times M$ of an SC code while keeping all other parameters and partitioning/lifting matrices fixed; the resulting code is referred as the extended SC code. Extended SC codes have the same length as their MD-SC counterparts, and have close design rates when $L$ is sufficiently larger than $m$, ensuring a fair comparison. For two of the MD-SC codes considered in this work, we also constructed their extended SC counterparts (note that the rate loss of MD-SC codes is not higher than $0.025$ for these cases). For MD Code~1, the extended SC code counterpart has $78{,}591$ cycles-$6$, yielding a reduction of approximately $95.7\%$ achieved by our MD-SC construction. Similarly, the extended SC counterpart of GD-GD MD Code~6 contains $17{,}287{,}933$ many $8$-$8$ configurations, corresponding to a reduction of about $99.3\%$ achieved by our MD-SC construction.
\end{remark}

\begin{table}
\centering
\caption{Cycle-$6$ Counts of MD-SC Codes With Parameters $(\gamma, \kappa, z, L, m, M) = (4, 17, 17, 10, 1, 5)$ and Varying Depth}
\scalebox{1.00}{
\renewcommand{\arraystretch}{1.2}
\begin{tabular}{|c|c|c|}
\hline
\multirow{2}{*}{Depth} & \multicolumn{2}{c|}{Cycle-$6$ count} \\
\cline{2-3}
    & \cite{homa-lev} & GD-MD Code \\
\hline
$2$ & $20{,}825$ & $14{,}195$ \\
\hline
$3$ & $9{,}775$ & $4{,}080$ \\
\hline
$4$ & $5{,}695$ & $1{,}615$ \\
\hline
$5$ & $5{,}610$ & $765$ \\
\hline
\end{tabular}}
\label{table: cycle_counts_with_depth}
\end{table}

\begin{table}
\centering
\caption{Cycle Counts With Their Estimated Values, Lower And Upper Bounds}
\scalebox{1.00}
{
\renewcommand{\arraystretch}{1.2}
\begin{tabular}{|c|c|c|c|c|}
\hline
\makecell{Code Name} & \makecell{Cycle count} & \makecell{Estimate} & \makecell{Lower bound} & \makecell{Upper bound} \\
\hline 
MD Code~1 & $3{,}366$ & $49{,}782$ & $47{,}162$ & $52{,}402$ \\
\hline
MD Code~2 & $206{,}356$ & $226{,}650$ & $169{,}990$ & $283{,}310$ \\
\hline
MD Code~3 & $15{,}652$ & $203{,}340$ & $158{,}700$ & $247{,}970$ \\
\hline
MD Code~4 & $1{,}581{,}080$ & $2{,}553{,}500$ & $988{,}450$ & $4{,}118{,}600$ \\
\hline
\end{tabular}}
\label{table: estimation_results} 
\end{table}

\begin{table*}
\centering
\caption{Object Counts of Proposed MD-SC Codes and Their Underlying SC Codes}
\scalebox{1.00}
{
\renewcommand{\arraystretch}{1.2}
\begin{tabular}{|c|c|c|c|c|c|}
\hline
 & \makecell{Code} & \makecell{$6$-$6$ configuration \\ count} & \makecell{$6$-$8$ configuration \\ count} & \makecell{$8$-$8$ configuration \\ count} &  \makecell{Relocation \\ percentage} \\
\hline
\multirow{4}{*}{MD Code~5} & Underlying SC & $1{,}292$ & $859{,}010$ & $88{,}781{,}378$ & $-$ \\
\cline{2-6}
    & UNF-UNF MD & $0$ & $580{,}992$ & $49{,}166{,}448$ & $33.33\%$ \\
\cline{2-6}
	& GD-UNF MD & $0$ & $357{,}459$ & $33{,}641{,}283$ & $34.38\%$ \\	
\cline{2-6}
	  & GD-GD MD & $0$ & $44{,}115$ & $23{,}612{,}898$ & $33.33\%$ \\
\hline
\addlinespace
\hline
\multirow{4}{*}{MD Code~6} & Underlying SC & $0$ & $0$ & $2{,}001{,}493$ & $-$ \\
\cline{2-6}
	& UNF-UNF MD & $0$ & $0$ & $535{,}535$ & $35.00\%$ \\
\cline{2-6}
	& GD-UNF MD & $0$ & $0$ & $331{,}058$ & $35.00\%$ \\
\cline{2-6}
    & GD-GD MD & $0$ & $0$ & $112{,}931$ & $35.00\%$ \\
\hline
\addlinespace
\hline
\multirow{4}{*}{MD Code~7} & Underlying SC & $4{,}305$ & $261{,}280$ & $5{,}984{,}110$ & $-$ \\
\cline{2-6}
	& UNF-UNF MD & $3{,}700$ & $142{,}700$ & $2{,}053{,}375$ & $34.62\%$ \\
\cline{2-6}
	& GD-UNF MD & $3{,}250$ & $81{,}875$ & $1{,}454{,}000$ & $34.62\%$ \\
\cline{2-6}
    & GD-GD MD & $0$ & $11{,}775$ & $980{,}750$ & $34.62\%$ \\
\hline
\addlinespace
\hline
\multirow{4}{*}{MD Code~8} & Underlying SC & $5{,}243$ & $619{,}150$ & $23{,}417{,}464$ & $-$ \\
\cline{2-6}
	& UNF-UNF MD & $9{,}870$ & $483{,}105$ & $11{,}861{,}570$ & $33.82\%$ \\
\cline{2-6}
	& GD-UNF MD & $1{,}225$ & $124{,}250$ & $5{,}585{,}685$ & $35.29\%$ \\
\cline{2-6}
    & GD-GD MD & $0$ & $0$ & $4{,}653{,}845$ & $35.29\%$ \\
\hline
\end{tabular}}
\label{table: object_counts} 
\end{table*}

All variants of MD Code~6, GD-GD/GD-UNF variants of MD Code~7, and GD-GD/UNF-UNF variants of MD Code~8 are simulated over the additive white Gaussian noise channel (AWGNC) using a sum-product (fast Fourier transform based) decoder~\cite{GRADE}. The results are presented in Fig.~\ref{fig: gamma3_code}, Fig.~\ref{fig: shorter_code}, and Fig.~\ref{fig: short_code}, respectively. Note that the GD-UNF MD Code~7 exhibits an error floor, and we observe that the main cause of the faulty frames is the presence of two detrimental objects in the Tanner graph of the code, namely a $(4,4)$ UAS and a $(5,4)$ elementary unlabeled oscillating set (UOS).\footnote{A UOS is a special case of a UTS, with the condition that for each VN, the number of neighboring even-degree CNs is greater than or equal to the number of neighboring odd-degree CNs, and there exists at least one VN for which the equality holds \cite{harredy_nb}.} The graph representations of these two objects are also presented in Fig.~\ref{fig: error_floor}.

Below are some takeaways from these results summarized in the tables and figures: 
\begin{enumerate} 
	\item Our probabilistic framework offers a significant reduction in the number of cycles-$6$ and cycles-$8$ for our GD-MD codes. These reductions can be as high as $95.7\%$ relative to the extended SC counterparts, even when the rate loss is negligible (see Remark~\ref{remark:extended_sc}). More intriguingly, the GD-MD codes have remarkably lower cycle counts compared with their underlying SC codes, despite the latter having notably shorter lengths. In particular, the proposed framework offers an $86.6\%$ reduction in cycle-$6$ count in the case of MD Code~1 relative to its underlying SC code, as well as reductions in cycle-$8$ counts ranging from $27.0\%$ to $90.5\%$ in the cases of MD Codes~2--4.

	\item Our GD-MD codes outperform relevant MD-SC codes in the literature \cite{homa-lev} in terms of cycle counts. Starting with MD Code~1, which has $3{,}366$ cycles-$6$, surpassing the relevant MD-SC code designed by \cite[Algorithm~2]{homa-lev}, which has $9{,}078$ cycles-$6$. Similarly, MD Code~2 has $206{,}356$ cycles-$8$, whereas the relevant MD-SC code designed in \cite{homa-lev} has $249{,}320$ cycle-$8$. Additionally, by incorporating the concept of MD depth, where only a subset of auxiliary matrices are utilized for relocations, our method is able to generate MD-SC codes with significantly lower cycle counts compared with~\cite{homa-lev}, under identical code parameters, underlying SC code, and MD density, which could be observed from Table~\ref{table: cycle_counts_with_depth}.

	\item Since all three stages of the code design are performed to minimize the number of cycles for MD Codes~1--4, these numbers are lower than the estimates obtained based on Theorem \ref{thm: forecast} for our GD-MD codes. For MD Code~2, Theorem \ref{thm: forecast} gives a good indicator of the outcome of the MCMC algorithm, where the (rounded) expected number of cycles-$8$ is $226{,}650$ and the actual final number of cycles-$8$ is $206{,}356$. Similarly, for MD Code~4, the expected number of cycles-$8$ is $2{,}553{,}500$, and the actual final number of cycles-$8$ is $1{,}581{,}080$. However, the expected number of cycles-$8$ of MD Code~3 is $203{,}340$ and the actual number of cycles-$8$ obtained is $15{,}652$, highlighting the strength of our FL design.	
	
	\item When we compare our GD-GD MD Codes with their underlying SC and their GD-UNF/UNF-UNF MD counterparts, we observe a substantial reduction in the number of detrimental objects. Key findings are summarized below:
    \begin{itemize}
        \item All GD-GD MD Codes~5--8 are entirely free of elementary $6$-$6$ configurations. In contrast, their underlying SC codes and their GD-UNF/UNF-UNF variants do contain such objects in the case of MD Codes~7 and 8.

        \item Regarding elementary $6$-$8$ configurations, GD-GD MD Code~8 contains none of these objects, even though its counterparts exhibit large populations. Additionally, for MD Code~5 and MD Code~7, the reduction percentages are:
        \begin{itemize}
            \item Relative to GD-UNF: $87.65\%$ and $85.62\%$, respectively.
            \item Relative to UNF-UNF: $92.40\%$ and $91.75\%$, respectively.
            \item Relative to SC codes: $94.86\%$ and $95.49\%$, respectively.
        \end{itemize}

        \item For elementary $8$-$8$ configurations, the reduction percentages range as follows:
        \begin{itemize}
            \item Compared with GD-UNF: from $16.68\%$ to $65.89\%$.
            \item Compared with UNF-UNF: from $51.97\%$ to $78.91\%$.
            \item Compared with SC codes: from $73.40\%$ to $94.35\%$.
        \end{itemize}
        Moreover, the reduction reaches up to $99.3\%$ relative to the extended SC counterparts, while incurring only a minimal rate loss (as discussed in Remark~\ref{remark:extended_sc}).
    \end{itemize}

\begin{figure}
\centering
\includegraphics[width=0.45\textwidth]{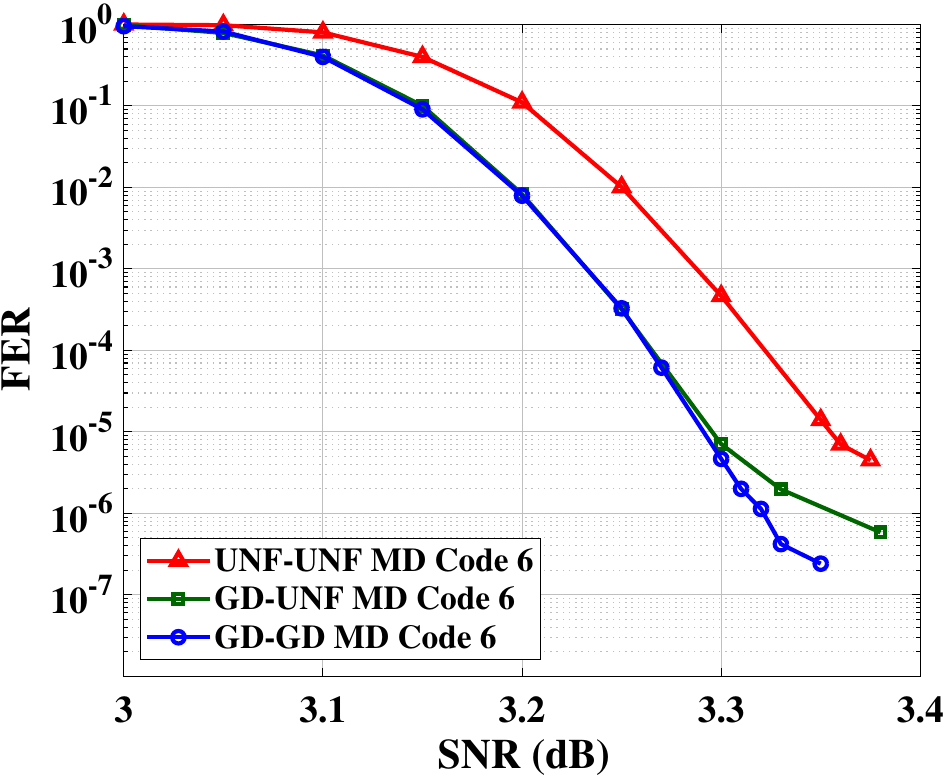} \vspace{-0.8em}
\caption{FER curves of GD-GD/GD-UNF/UNF-UNF MD Code 6 over the AWGNC. All curves correspond to codes with parameters $\left(\gamma, \kappa, z, L, m, M\right) = \left(3,20,13,20,4,7\right)$.} \vspace{-1.0em}
\label{fig: gamma3_code}
\end{figure}

\begin{figure}
\centering
\includegraphics[width=0.45\textwidth]{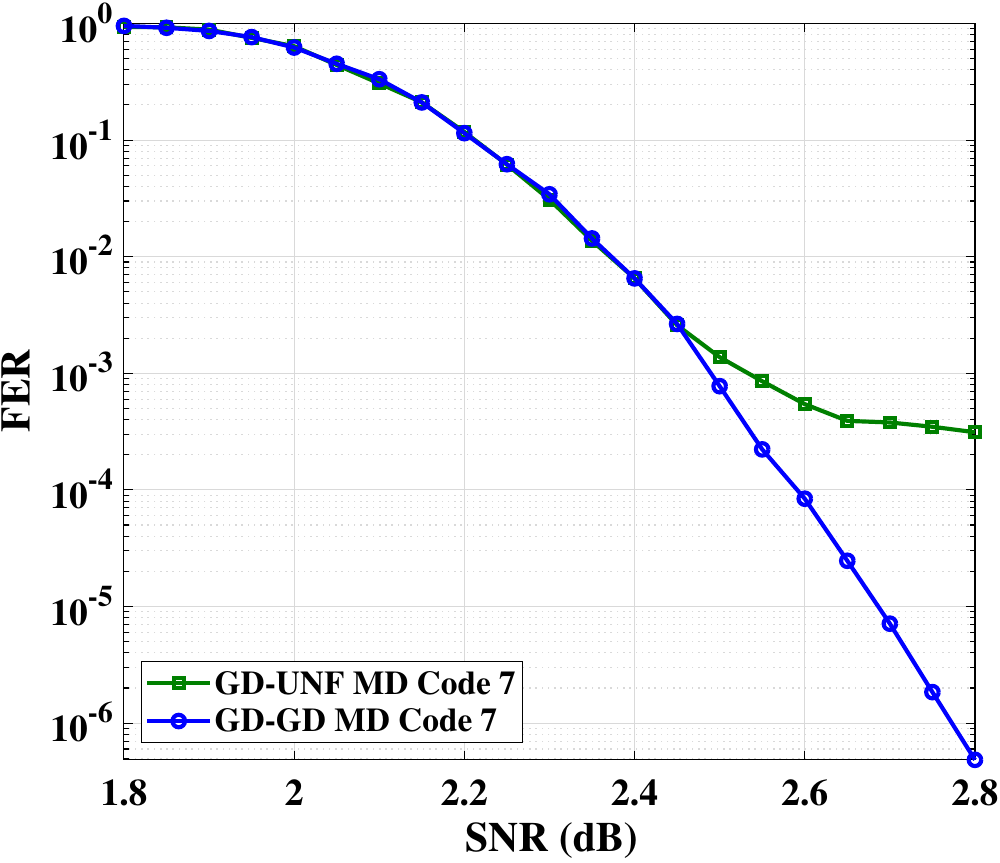} \vspace{-0.8em}
\caption{FER curves of GD-GD/GD-UNF MD Code 7 over the AWGNC. Both curves correspond to codes with parameters $\left(\gamma, \kappa, z, L, m, M\right) = \left(4,13,5,10,3,5\right)$.} \vspace{-1.0em}
\label{fig: shorter_code}
\end{figure}

	\item Based on the frame error rate (FER) simulations, we observe that the GD-GD MD codes exhibit considerable performance improvements over their GD-UNF and UNF-UNF counterparts. When compared with the UNF-UNF MD codes, the GD-GD MD codes consistently outperform them across the entire SNR range. In particular, the FER gains reach up to $1.76$ orders of magnitude at $3.35$~dB SNR for MD Code~6, and up to $3.08$ orders of magnitude at $2.8$~dB SNR for MD Code~8. Furthermore, at $10^{-6}$ FER level, GD-GD MD Code~8 has approximately $0.2$~dB SNR improvement over the corresponding UNF-UNF code.

    We also observe that the GD-GD MD codes outperform their GD-UNF counterparts in the high-SNR region. Both GD-UNF MD Codes~6--7 exhibit an error floor, whereas the GD-GD variants achieve significant gains of FER up to $0.67$ orders of magnitude at $3.33$~dB SNR for MD Code~6, and up to $2.81$ orders of magnitude at $2.8$~dB SNR for MD Code~7.

    It is important to note that the error floor is directly linked to the large number of detrimental objects present in the GD-UNF codes. As illustrated in Fig.~\ref{fig: error_floor}, the dominant problematic structures responsible for the error floor in GD-UNF MD Code~7 Tanner graph contain elementary $6$-$6$ and $6$-$8$ configurations.
	
	\item Since the number of possible relocation matrices is $M^{\gamma\kappa}$, the design space of our problem is extremely large and it grows rapidly with the code parameters. The main goal of our probabilistic framework is to guide the FL optimization by providing an informed initialization point that is likely near an optimal solution, effectively reducing the search space and improving the efficiency of heuristic methods such as the MCMC. Moreover, in~\cite{reins_mcmc}, the existence of multiple recurrent classes in the Markov chain constructed by the MCMC algorithm is discussed, and it is shown that the difference between the final result of the algorithm and the locally-optimal solution within the recurrent class that the algorithm operates on is bounded by a small margin. This implies that initialization strongly influences the final outcome, particularly for the MCMC method. Our results show that when the MCMC is initialized with the MD-GRADE distribution, it achieves significantly better results than when initialized with the uniform distribution, both in object counts (see Table~\ref{table: object_counts}) and in error-rate performance (see Fig.~\ref{fig: gamma3_code}, Fig.~\ref{fig: shorter_code}, and Fig.~\ref{fig: short_code}). This suggests that our probabilistic framework enables the MCMC to operate within a ``good'' recurrent class and to reach a better result.

    \item In general, for SC codes, the locally-optimal edge distribution obtained through the framework of~\cite{GRADE} indicates that the component matrices near the edges should contain a higher concentration of circulants, while those in the middle should exhibit lower populations. Intriguingly, for the locally-optimal probability-distribution matrices produced by the MD-GRADE algorithm, we observe that the component-wise relocation percentages tend to be lower for the edge component matrices and higher for the middle ones (as discussed in Remark~\ref{remark:rel_per_dist} for the specific case of GD-GD MD Code-6).  

    Taken together, these two probabilistic frameworks suggest that to optimize FL performance, fewer circulants should be allocated to the middle component matrices during partitioning, and larger portion of them should be relocated to auxiliary matrices during relocations. Our numerical analyses, including both object counts and error-rate results, support this conclusion by demonstrating that such a non-trivial distribution of edges yields significantly improved MD-SC codes compared with those constructed with a straightforward edge distribution (such as uniform). The shape of this optimized distribution is itself noteworthy, and further investigating the underlying reasons for its effectiveness using different tools could be a promising direction for future research.

\end{enumerate}

\begin{figure}
\centering
\includegraphics[width=0.45\textwidth]{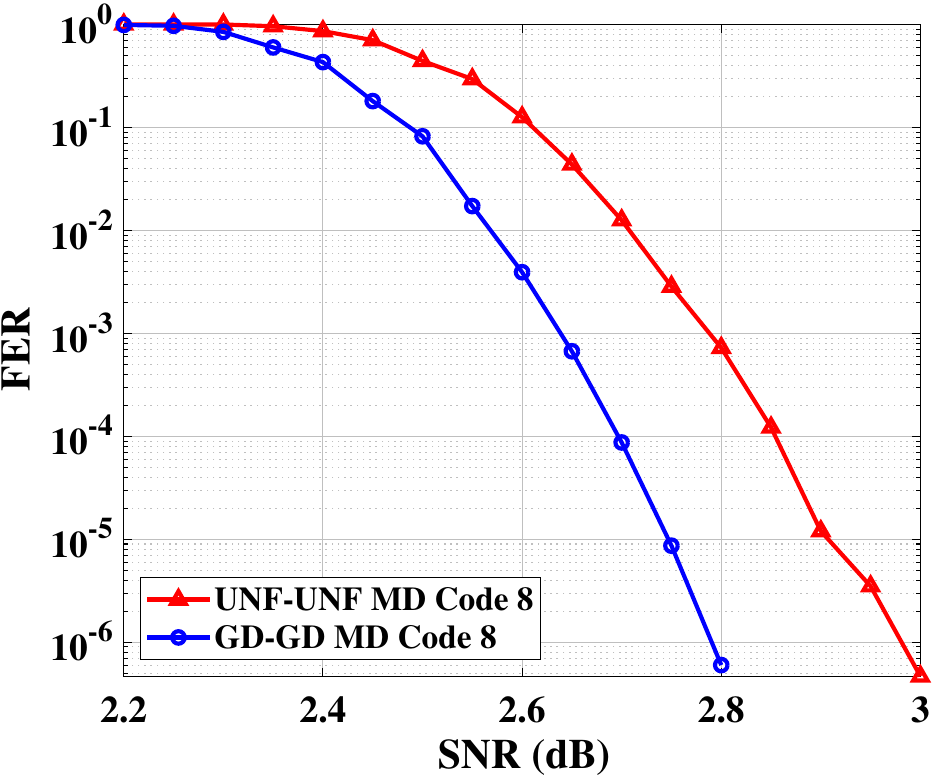} \vspace{-0.8em}
\caption{FER curves of GD-GD/UNF-UNF MD Code 8 over the AWGNC. Both curves correspond to codes with parameters $\left(\gamma, \kappa, z, L, m, M\right) = \left(4,17,7,15,4,5\right)$.} \vspace{-1.0em}
\label{fig: short_code}
\end{figure}

\begin{figure}
\centering
\includegraphics[width=0.5\textwidth]{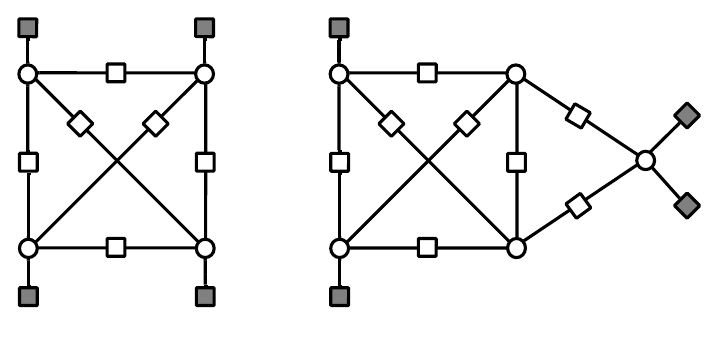}
\vspace{-1.5em}
\caption{A $(4,4)$ UAS (left) and $(5,4)$ UOS (right) in the Tanner graph of GD-UNF Code 7, with parameters $\left(\gamma, \kappa, z, L, m, M\right) = \left(4, 13, 5, 10, 3, 5\right)$. These configurations are identified as the main reason behind the faulty frames in the error-floor region.}
\label{fig: error_floor}
\end{figure}

%%%%%%%%%%%%%%%%%%%%%%%%%%%%%%%%%%%%%

\begin{figure*}
\centering

% ------------ LEFT COLUMN ------------------
\begin{minipage}[t]{0.47\textwidth}
\centering

% Partitioning table
\resizebox{\textwidth}{!}{
\begin{tabular}{|c|c|c|c|c|c|c|c|c|c|c|c|c|c|c|c|c|}
\hline
0 & 1 & 0 & 1 & 0 & 1 & 0 & 1 & 0 & 1 & 0 & 1 & 0 & 1 & 0 & 1 & 1 \\
\hline
1 & 0 & 1 & 0 & 1 & 0 & 1 & 0 & 1 & 0 & 1 & 0 & 1 & 0 & 1 & 0 & 0 \\
\hline
0 & 0 & 0 & 0 & 0 & 0 & 0 & 0 & 1 & 1 & 1 & 1 & 1 & 1 & 1 & 1 & 1 \\
\hline
1 & 1 & 1 & 1 & 1 & 1 & 1 & 1 & 0 & 0 & 0 & 0 & 0 & 0 & 0 & 0 & 0\\
\hline
\end{tabular}}
\vspace{6pt}

% Lifting table
\resizebox{1.05\textwidth}{!}{
\begin{tabular}{|c|c|c|c|c|c|c|c|c|c|c|c|c|c|c|c|c|}
\hline
9 & 4 & 14 & 3 & 6 & 4 & 10 & 5 & 0 & 3 & 2 & 3 & 8 & 4 & 12 & 13 & 2 \\
\hline
15 & 1 & 10 & 7 & 5 & 6 & 1 & 14 & 9 & 4 & 6 & 1 & 6 & 3 & 12 & 1 & 10 \\
\hline
6 & 4 & 14 & 12 & 0 & 3 & 16 & 1 & 7 & 8 & 2 & 12 & 16 & 3 & 1 & 6 & 9 \\
\hline
0 & 12 & 1 & 0 & 12 & 9 & 2 & 3 & 13 & 1 & 1 & 6 & 0 & 16 & 1 & 10 & 4 \\
\hline
\end{tabular}}

\end{minipage}
\hfill
% ------------ RIGHT COLUMN ------------------
\begin{minipage}[t]{0.47\textwidth}
\centering

% Relocation table
\resizebox{\textwidth}{!}{
\begin{tabular}{|c|c|c|c|c|c|c|c|c|c|c|c|c|c|c|c|c|}
\hline
2 & 0 & 0 & 0 & 0 & 1 & 0 & 1 & 0 & 1 & 2 & 0 & 0 & 2 & 1 & 2 & 0 \\
\hline 
0 & 0 & 0 & 0 & 0 & 0 & 0 & 1 & 0 & 0 & 1 & 0 & 1 & 1 & 0 & 1 & 0 \\
\hline 
0 & 0 & 2 & 0 & 1 & 2 & 0 & 0 & 0 & 0 & 0 & 0 & 0 & 0 & 1 & 0 & 1 \\
\hline 
0 & 0 & 1 & 1 & 1 & 0 & 0 & 0 & 0 & 0 & 0 & 0 & 2 & 0 & 0 & 0 & 1 \\
\hline 
\end{tabular}}
\vspace{-8pt}

% Probability-Distribution Matrix
\[ 
\begin{bmatrix} 
\text{\footnotesize 0.3309} & \text{\footnotesize 0.0846} & \text{\footnotesize 0.0846} \\ \text{\footnotesize 0.3309} & \text{\footnotesize 0.0846} & \text{\footnotesize 0.0846} 
\end{bmatrix} 
\]

\end{minipage}

\caption{Partitioning (top left), lifting (bottom left), relocation (top right) matrices of MD Code~1 with parameters $(\gamma,\kappa,z,L,m,M) = (4,17,17,10,1,3)$, and probability-distribution matrix (bottom right) output of MD-GRADE algorithm for this code.}
\label{fig: md_code_1}
\end{figure*}

\begin{figure*}
\centering

% ------------ LEFT COLUMN ------------------
\begin{minipage}[t]{0.47\textwidth}
\centering

% Partitioning table
\resizebox{\textwidth}{!}{%
\begin{tabular}{|c|c|c|c|c|c|c|c|c|c|c|c|c|c|c|c|c|c|c|}
\hline
0 & 1 & 1 & 0 & 1 & 2 & 0 & 2 & 2 & 0 & 1 & 1 & 0 & 1 & 2 & 0 & 2 & 2 & 2 \\
\hline
1 & 0 & 0 & 1 & 0 & 0 & 1 & 0 & 0 & 2 & 2 & 2 & 2 & 2 & 1 & 2 & 1 & 1 & 1 \\
\hline
2 & 2 & 2 & 2 & 2 & 1 & 2 & 1 & 1 & 1 & 0 & 0 & 1 & 0 & 0 & 1 & 0 & 0 & 0 \\
\hline 
\end{tabular}}
\vspace{15pt}

% Lifting table
\resizebox{1.1\textwidth}{!}{%
\begin{tabular}{|c|c|c|c|c|c|c|c|c|c|c|c|c|c|c|c|c|c|c|}
\hline
7 & 0 & 0 & 15 & 0 & 2 & 8 & 19 & 0 & 0 & 12 & 21 & 0 & 7 & 0 & 0 & 0 & 0 & 0 \\
\hline
2 & 2 & 21 & 0 & 22 & 10 & 2 & 0 & 10 & 17 & 2 & 17 & 1 & 6 & 8 & 6 & 0 & 11 & 18 \\
\hline
1 & 8 & 3 & 3 & 20 & 12 & 13 & 13 & 1 & 15 & 3 & 6 & 18 & 0 & 14 & 10 & 13 & 10 & 18 \\
\hline
\end{tabular}}

\end{minipage}
\hfill
% ------------ RIGHT COLUMN ------------------
\begin{minipage}[t]{0.47\textwidth}
\centering

% Relocation table
\resizebox{\textwidth}{!}{%
\begin{tabular}{|c|c|c|c|c|c|c|c|c|c|c|c|c|c|c|c|c|c|c|}
\hline
0 & 0 & 0 & 0 & 3 & 0 & 1 & 1 & 0 & 0 & 0 & 2 & 1 & 2 & 3 & 3 & 0 & 0 & 0 \\
\hline
0 & 0 & 2 & 0 & 0 & 2 & 0 & 0 & 1 & 0 & 1 & 0 & 0 & 0 & 0 & 0 & 0 & 0 & 0 \\
\hline
0 & 3 & 0 & 0 & 0 & 0 & 0 & 0 & 1 & 3 & 0 & 0 & 0 & 0 & 1 & 1 & 3 & 2 & 0 \\
\hline
\end{tabular}}
\vspace{-6pt}

% Matrix
\[
\begin{bmatrix}
\text{\footnotesize 0.2257} & \text{\footnotesize 0.0366} & \text{\footnotesize 0.0343} & \text{\footnotesize 0.0366} \\
\text{\footnotesize 0.2153} & \text{\footnotesize 0.0404} & \text{\footnotesize 0.0372} & \text{\footnotesize 0.0404} \\
\text{\footnotesize 0.2257} & \text{\footnotesize 0.0366} & \text{\footnotesize 0.0343} & \text{\footnotesize 0.0366} \\
\end{bmatrix}
\]

\end{minipage}

\caption{Partitioning (top left), lifting (bottom left), relocation (top right) matrices of MD Code~2 with parameters $(\gamma,\kappa,z,L,m,M) = (3,19,23,10,2,4)$, and probability-distribution matrix (bottom right) output of MD-GRADE algorithm for this code.}
\label{fig: md_code_2}
\end{figure*}

\begin{figure*}
\centering
\resizebox{0.48\textwidth}{!}{\begin{tabular}{|c|c|c|c|c|c|c|c|c|c|c|c|c|c|c|c|c|c|c|c|}
\hline
4 & 1 & 0 & 3 & 0 & 1 & 0 & 3 & 0 & 0 & 3 & 2 & 4 & 0 & 4 & 2 & 0 & 4 & 3 & 3 \\
\hline
0 & 0 & 2 & 4 & 3 & 0 & 3 & 0 & 4 & 0 & 4 & 4 & 0 & 4 & 0 & 4 & 4 & 1 & 0 & 0 \\
\hline
4 & 4 & 4 & 2 & 4 & 4 & 4 & 1 & 0 & 4 & 1 & 0 & 0 & 3 & 4 & 0 & 4 & 0 & 0 & 0 \\
\hline
\end{tabular}}
\vspace{5pt}

\resizebox{0.6\textwidth}{!}{\begin{tabular}{|c|c|c|c|c|c|c|c|c|c|c|c|c|c|c|c|c|c|c|c|}
\hline
12 & 6 & 4 & 12 & 0 & 1 & 5 & 12 & 6 & 0 & 7 & 0 & 4 & 9 & 3 & 11 & 11 & 2 & 12 & 7 \\
\hline
6 & 6 & 6 & 4 & 2 & 0 & 9 & 1 & 3 & 0 & 1 & 10 & 6 & 0 & 2 & 4 & 6 & 10 & 10 & 7 \\
\hline
1 & 8 & 8 & 11 & 6 & 1 & 2 & 2 & 5 & 7 & 5 & 10 & 1 & 0 & 1 & 3 & 11 & 11 & 1 & 8 \\
\hline
\end{tabular}}
\vspace{5pt}

\resizebox{0.48\textwidth}{!}{\begin{tabular}{|c|c|c|c|c|c|c|c|c|c|c|c|c|c|c|c|c|c|c|c|c|c|}
\hline
0 & 4 & 0 & 4 & 0 & 6 & 2 & 0 & 4 & 5 & 2 & 0 & 6 & 0 & 0 & 1 & 0 & 0 & 0 & 1 \\
\hline 
0 & 0 & 4 & 0 & 0 & 1 & 3 & 0 & 0 & 0 & 0 & 0 & 0 & 1 & 0 & 0 & 5 & 5 & 0 & 0 \\
\hline 
0 & 1 & 0 & 3 & 0 & 0 & 0 & 1 & 0 & 0 & 0 & 0 & 0 & 0 & 1 & 0 & 0 & 0 & 3 & 0 \\
\hline 
\end{tabular}}

\[ \begin{bmatrix}
\text{\footnotesize 0.2840} & \text{\footnotesize 0.0166} & \text{\footnotesize 0.0166} & \text{\footnotesize 0.0166} & \text{\footnotesize 0.0166} & \text{\footnotesize 0.0166} & \text{\footnotesize 0.0166} \\
\text{\footnotesize 0.0290} & \text{\footnotesize 0.0091} & \text{\footnotesize 0.0091} & \text{\footnotesize 0.0091} & \text{\footnotesize 0.0091} & \text{\footnotesize 0.0091} & \text{\footnotesize 0.0091} \\
\text{\footnotesize 0.0241} & \text{\footnotesize 0.0071} & \text{\footnotesize 0.0071} & \text{\footnotesize 0.0071} & \text{\footnotesize 0.0071} & \text{\footnotesize 0.0071} & \text{\footnotesize 0.0071} \\
\text{\footnotesize 0.0725} & \text{\footnotesize 0.0101} & \text{\footnotesize 0.0101} & \text{\footnotesize 0.0101} & \text{\footnotesize 0.0101} & \text{\footnotesize 0.0101} & \text{\footnotesize 0.0101} \\
\text{\footnotesize 0.2381} & \text{\footnotesize 0.0159} & \text{\footnotesize 0.0159} & \text{\footnotesize 0.0159} & \text{\footnotesize 0.0159} & \text{\footnotesize 0.0159} & \text{\footnotesize 0.0159}
\end{bmatrix} \]
\caption{Partitioning (first), lifting (second), relocation (third) matrices of GD-GD MD Code~6 with parameters $(\gamma,\kappa,z,L,m,M) = (3,20,13,20,4,7)$, and probability-distribution matrix (bottom) output of MD-GRADE algorithm for this code.}
  \label{fig: md_code_6} 
\end{figure*}

\begin{figure*}
\centering

% ------------ LEFT COLUMN ------------------
\begin{minipage}[t]{0.49\textwidth}
\centering

% Partitioning table
\hspace{1pt}
\resizebox{0.7\textwidth}{!}{
\begin{tabular}{|c|c|c|c|c|c|c|c|c|c|c|c|c|}
\hline
0 & 0 & 3 & 2 & 1 & 3 & 2 & 0 & 3 & 0 & 3 & 3 & 0 \\
\hline
3 & 3 & 0 & 1 & 3 & 0 & 1 & 0 & 3 & 0 & 2 & 0 & 3 \\
\hline
3 & 0 & 1 & 0 & 3 & 1 & 0 & 3 & 0 & 3 & 0 & 3 & 2 \\
\hline 
0 & 3 & 2 & 3 & 0 & 2 & 3 & 3 & 0 & 3 & 0 & 1 & 0 \\
\hline
\end{tabular}}
\vspace{6pt}

% Lifting table
\resizebox{0.7\textwidth}{!}{
\begin{tabular}{|c|c|c|c|c|c|c|c|c|c|c|c|c|}
\hline
1 & 2 & 2 & 0 & 4 & 0 & 2 & 3 & 3 & 1 & 4 & 1 & 1 \\
\hline
1 & 0 & 0 & 1 & 2 & 1 & 2 & 3 & 0 & 4 & 1 & 3 & 2 \\
\hline
2 & 0 & 0 & 2 & 1 & 0 & 1 & 3 & 1 & 0 & 0 & 1 & 2 \\
\hline
2 & 0 & 0 & 3 & 4 & 4 & 3 & 0 & 0 & 0 & 3 & 0 & 0 \\
\hline
\end{tabular}}

\end{minipage}
\hfill
% ------------ RIGHT COLUMN ------------------
\begin{minipage}[t]{0.49\textwidth}
\centering

% Relocation table
\resizebox{0.7\textwidth}{!}{
\begin{tabular}{|c|c|c|c|c|c|c|c|c|c|c|c|c|}
\hline
1 & 0 & 0 & 1 & 2 & 0 & 0 & 1 & 1 & 0 & 4 & 0 & 0 \\
\hline
2 & 0 & 0 & 3 & 0 & 0 & 2 & 3 & 0 & 0 & 0 & 0 & 0 \\
\hline
0 & 2 & 0 & 0 & 1 & 0 & 3 & 0 & 0 & 1 & 0 & 4 & 0 \\
\hline
0 & 0 & 1 & 2 & 0 & 0 & 0 & 0 & 0 & 0 & 0 & 3 & 0 \\
\hline
\end{tabular}}
\vspace{-2pt}

% Probability-Distribution Matrix
\[ \begin{bmatrix}
\text{\footnotesize 0.2753} & \text{\footnotesize  0.0273} & \text{\footnotesize  0.0273} & \text{\footnotesize  0.0273} & \text{\footnotesize  0.0273} \\
\text{\footnotesize 0.0535} & \text{\footnotesize 0.0155} & \text{\footnotesize 0.0155} & \text{\footnotesize 0.0155} & \text{\footnotesize 0.0155} \\
\text{\footnotesize 0.0535} & \text{\footnotesize 0.0155} & \text{\footnotesize 0.0155} & \text{\footnotesize 0.0155} & \text{\footnotesize 0.0155} \\
\text{\footnotesize 0.2753} & \text{\footnotesize  0.0273} & \text{\footnotesize  0.0273} & \text{\footnotesize  0.0273} & \text{\footnotesize  0.0273} \\
\end{bmatrix} \]

\end{minipage}

\caption{Partitioning (top left), lifting (bottom left), relocation (top right) matrices of GD-GD MD Code~7 with parameters $(\gamma,\kappa,z,L,m,M) = (4,13,5,10,3,5)$, and probability-distribution matrix (bottom right) output of MD-GRADE algorithm for this code.}
\label{fig: md_code_7}
\end{figure*}

%%%%%%%%%%%%%%%%%%%%%%%%%%%%%%%%%%%%%
\section{Conclusion} \label{sec: conclusion}

MD-SC codes are being increasingly adopted in a wide range of applications. FL optimization for MD-SC code design is a computationally challenging problem, especially as more degrees of freedom become available with increasing code parameters. In this work, we introduced a probabilistic framework that efficiently exploits these available degrees of freedom to obtain improved MD-SC codes. More specifically, we defined a probability-distribution matrix whose entries represent the distribution of edges across the component and auxiliary matrices, and expressed the expected number of active detrimental objects as a function of this distribution matrix entries. Our initial analysis considered the simpler case of the expected number of active short cycles. This was then complemented by the expected number of cycles in the final Tanner graph, which acts as an estimate of the FL outcome. Such estimates can be used to predict the relocation density required to eliminate all instances of a cycle (when possible), or to determine the best achievable reduction for a given cycle under a maximum relocation percentage imposed by decoding-latency constraints. We then generalized the analysis to derive the expected number of active objects for arbitrary topologies, and finally addressed the case of two short-cycle concatenations. This latter analysis allows us to prioritize objects that are more detrimental to performance, allocating the available degrees of freedom more efficiently, since isolated cycles might not be problematic enough to justify the use of all our relocation resources focusing on them.

We provided expressions of the gradients of the proposed objective functions as well as the solution form of locally-optimal points derived using the KKT conditions. Based on these, we developed a GD algorithm to reach these locally-optimal probability distributions. The resulting probabilistic output is then used to guide an MCMC-based FL optimization algorithm, which is itself probabilistic in nature. Combined with the MCMC method, our framework enables the design of high-performance MD-SC codes, even if the memory and the number of auxiliary matrices is high. Furthermore, when our method was considered together with the earlier probabilistic framework introduced for SC code design~\cite{GRADE}, we observed that improved FL performance is achieved when fewer edges are allocated to the middle component matrices during the partitioning stage, and more edges from these middle component matrices are relocated into auxiliary matrices during the relocation stage (percentage wise). Codes designed based on this distribution, referred to as GD-MD codes, consistently outperform their literature counterparts, underlying SC codes, and MD-SC codes obtained via uniform edge distributions. We observed and reported improvements both in the number of detrimental objects and in error-rate performance. For example, we identified a case where a comparison code designed using a uniform distribution exhibits higher population of the targeted detrimental objects, causing a noticeable error floor, while our GD-MD code has significantly less number of these objects, and thus exhibits an improved FER performance by up to $2.81$ orders of magnitude.

We suggest that our probabilistic MD code design framework can offer a valuable tool to scholars and engineers concerned with the reliability of modern data transmission and storage systems. Future work includes developing a deeper understanding of the edge distribution described above as well as its effectiveness in the FL design using asymptotic tools. Another promising direction is offering a joint design that integrates our GD-MD codes with state-of-the-art constrained coding schemes, such as lexicographically-ordered constrained (LOCO) codes~\cite{reins_loco}, where such a combined construction has potential to be highly effective in two-dimensional magnetic recording (TDMR) systems. Finally, similar concepts can also be applied to extend the work in \cite{ahh_rmc_sc} in order to design rate-compatible MD-SC codes.

\section*{Appendix} \label{sec: appendix}

Figures~\ref{fig: md_code_1}, \ref{fig: md_code_2}, \ref{fig: md_code_6}, and \ref{fig: md_code_7} show the partitioning, lifting, and relocation matrices of MD~Code~1, MD~Code~2, GD-GD MD~Code~6, and GD-GD MD~Code~7, respectively, along with the corresponding probability-distribution matrices generated by the MD-GRADE algorithm.

%%%%%%%%%%%%%%%%%%%%%%%%%%%%%%%%%%%%%
% use section* for acknowledgment
\section*{Acknowledgment}

The authors would like to thank Canberk \.{I}rima\u{g}z{\i} for his assistance in carrying out this research and for his contributions in the prior version of this work~\cite{reins_gdmd}.

% Can use something like this to put references on a page
% by themselves when using endfloat and the captionsoff option.
\ifCLASSOPTIONcaptionsoff
  \newpage
\fi

\end{document}